%% file: main.tex
\documentclass[journal]{IEEEtran}

\usepackage{amsmath, amsthm, amssymb, amsfonts}
\usepackage{bm}
\usepackage{graphicx}
\usepackage[monochrome]{xcolor}

\usepackage{tabularx}
\usepackage{booktabs}
\usepackage{multirow}
\usepackage{array}

\usepackage{tikz}
\usetikzlibrary{
	decorations.pathreplacing,
	shapes.arrows,
	shadows,
	calc,
	shapes,
	arrows.meta,
	positioning,
	backgrounds,
	patterns
}
\usepackage{pgfplots}
\pgfplotsset{compat=1.17}
\usepgfplotslibrary{groupplots}

\usepackage{placeins} 
\usepackage{flafter} 
\usepackage[font=small,skip=0pt]{caption} 

\usepackage{cite}
\usepackage{hyperref}

\definecolor{ieee_blue}{RGB}{0, 76, 153}
\definecolor{ieee_orange}{RGB}{255, 127, 0}
\definecolor{ieee_red}{RGB}{180, 20, 20}
\definecolor{tech_red}{RGB}{220, 50, 47}
\definecolor{tech_blue}{RGB}{38, 139, 210}
\definecolor{tech_gray}{RGB}{101, 123, 131}
\definecolor{myblue}{RGB}{0, 114, 189}
\definecolor{myblack}{RGB}{30, 30, 30}

\theoremstyle{definition}
\newtheorem{theorem}{Theorem}
\newtheorem{corollary}{Corollary}
\newtheorem{definition}{Definition}

\newtheoremstyle{remarkbf}
{3pt}{3pt}{\normalfont}{}{\bfseries}{.}{0.5em}{}
\theoremstyle{remarkbf}
\newtheorem{remark}{Remark}
\newcounter{protocolalg}

\begin{document}
	
	\title{BioZKFHE: Scalable Encrypted Biometric Identification via Verifiable Homomorphic Similarity Evaluation}
	
	 \author{Rundong~Xin,~Taotao~Wang,~\IEEEmembership{Member,~IEEE},~Xiaoxiao~Wu,~\IEEEmembership{Senior~Member,~IEEE},~Weizhi~Meng,~\IEEEmembership{Senior~Member,~IEEE},~Shengli~Zhang,~\IEEEmembership{Senior~Member,~IEEE},~and~Shui~Yu,~\IEEEmembership{Fellow,~IEEE}
	 \thanks{R.~Xin,~T.~Wang,~X.~Wu,~and~S.~Zhang are with the College of Electronics and Information Engineering, Shenzhen University, Shenzhen 518060, China (e-mail: rundongxin1@gmail.com, ttwang@szu.edu.cn, xxwu.eesissi@szu.edu.cn, zsl@szu.edu.cn). T.~Wang is the corresponding author.}
	
	 \thanks{W.~Meng is with the School of Computing and Communications, Lancaster University, Lancaster LA1 4WA, U.K. (e-mail: w.meng3@lancaster.ac.uk).}
	
	 \thanks{Shui~Yu is with the School of Computer Science, University of Technology Sydney, Sydney, NSW 2007, Australia (e-mail: shui.yu@uts.edu.au).}

	 \thanks{Accepted for publication in \emph{IEEE Transactions on Dependable and Secure Computing}. Digital Object Identifier: \href{https://doi.org/10.1109/TDSC.2026.3716308}{10.1109/TDSC.2026.3716308}.}

	 \thanks{\copyright~2026 IEEE. Personal use of this material is permitted. Permission from IEEE must be obtained for all other uses, in any current or future media, including reprinting/republishing this material for advertising or promotional purposes, creating new collective works, for resale or redistribution to servers or lists, or reuse of any copyrighted component of this work in other works.}
	
	 }
	
	\markboth{IEEE Transactions on Dependable and Secure Computing}{}%

	
	\maketitle

	\begin{abstract}
		Large-scale biometric identification in outsourced settings requires two properties simultaneously: biometric templates and queries must remain protected during computation, and the encrypted similarity outputs produced by an untrusted compute node must be verifiably correct before any application result is released. Existing FHE-based biometric systems primarily address confidentiality, while practical verifiability introduces two bottlenecks in the underlying encrypted $1{:}N$ matching layer: rotation- and bandwidth-heavy similarity evaluation and the high cost of proving repeated homomorphic similarity traces. We present BioZKFHE, a framework for scalable encrypted biometric identification via verifiable homomorphic similarity evaluation that combines BGV homomorphic computation with \textcolor{blue}{committee-mediated proof opening/decryption and smart-contract verification of opened proof batches}. To reduce encrypted storage and avoid rotation-heavy encrypted $1{:}N$ matching, we propose \emph{Single-Coefficient Multi-Value (SCMV)} packing, which binds multiple quantized embedding values into each plaintext entry through base-$T$ expansion. To make proof generation practical, we propose \emph{Parallelizable and Verifiable Similarity Computation (PVSC)}, which exploits the Double-CRT execution structure of BGV to decompose each blockwise similarity trace into parallel proof instances that are opened and checked before result release. Under standard lattice assumptions and explicit committee/verifier assumptions, we analyze recoverability, noise growth, confidentiality, encrypted-output integrity, and finalized-result integrity. Experiments on FaceNet and MobileFaceNet show near-lossless biometric utility, up to 67\% encrypted-storage reduction, and \textcolor{blue}{about 22--44~s end-to-end proof-verified runtime for 10k--40k templates}.
	\end{abstract}
	
	\begin{IEEEkeywords}
		Biometric identification, fully homomorphic encryption, zero-knowledge proofs, verifiable computation, blockchain-assisted identity.
	\end{IEEEkeywords}

	\section{Introduction}
	\label{sec:introduction}
	
	Biometric identification is a natural primitive for binding digital identities to unique humans. This need is especially prominent in Sybil-resistant decentralized systems and other blockchain-assisted identity workflows~\cite{cao2025web,wang2023account}, where biometric $1{:}N$ matching is too expensive to execute directly on-chain. Practical deployments outsource similarity computation off-chain and use a smart contract for commitments, opened-proof verification, and finalization; this requires protected biometric data and \textcolor{blue}{threshold-opened public verification} before release.

	These requirements must hold \emph{simultaneously}. Biometric data are highly sensitive, and leakage is irreversible; even protected representations may suffer inversion and linkability risks~\cite{otroshi2024inversion,otroshi2023linkability}. At the same time, an untrusted compute node should not be able to make the system accept an incorrect encrypted similarity output, nor should downstream opening or decryption turn a proof-verified ciphertext result into an incorrect identification outcome~\cite{zhang2023ppba}. The goal is therefore not merely encrypted biometric matching, but verifiable encrypted biometric identification with committee-governed proof opening and result release.

	Fully homomorphic encryption (FHE) naturally addresses confidentiality by enabling similarity computation on encrypted biometric representations with post-quantum security foundations~\cite{CKKS}. Yet most FHE-based biometric systems still focus on privacy against honest-but-curious servers rather than the malicious outsourced setting in which encrypted outputs must be checked before result release. General verifiable encrypted-computation systems have made rapid progress~\cite{chatel2024veritas,aranha2024heliopolis,zhang2025phalanx}, but they are not tailored to large-scale biometric identification, where one encrypted query must be matched against a large encrypted gallery and the resulting similarity scores must support a final identification decision.

	This paper focuses on two practical bottlenecks that arise when one tries to make encrypted biometric identification verifiable at scale:
	\begin{enumerate}
		\item \textit{Rotation- and bandwidth-heavy encrypted $1{:}N$ matching}: Existing packing techniques~\cite{packing1,packing2,xinpacking3} batch templates into SIMD slots, but similarity computation still often requires expensive ciphertext rotations and slot aggregations, which dominate latency and communication as the repository grows~\cite{halevi2019improved}.
		
		\item \textit{Proof-generation cost for repeated homomorphic similarity traces}: Verifiability can enforce integrity, but proving the full encrypted computation monolithically is impractical. Existing systems~\cite{ringfhe,peev} still incur large circuits and limited parallelism for high-dimensional similarity evaluation, leading to high proof-generation latency.
	\end{enumerate}

	To address these bottlenecks, we present BioZKFHE, a framework for scalable encrypted biometric identification via verifiable homomorphic similarity evaluation. BioZKFHE combines BGV homomorphic computation, committee-based threshold opening and decryption, and smart-contract verification in a blockchain-assisted outsourced workflow: the Compute Node performs encrypted $1{:}N$ similarity evaluation off-chain, the committee threshold-opens proof objects and later threshold-decrypts proof-verified ciphertext outputs, and the smart contract checks the opened proofs before any identification result is finalized.

	BioZKFHE is built around two co-designed techniques. \emph{Single-Coefficient Multi-Value (SCMV)} packing binds multiple quantized embedding values into each packed plaintext entry through a base-$T$ expansion, increasing ciphertext utilization, reducing encrypted storage, and removing rotation-heavy operations from the online similarity path. \emph{Parallelizable and Verifiable Similarity Computation (PVSC)} exploits the Double-CRT execution structure of BGV to decompose each blockwise similarity trace into structured local proof instances, enabling parallel proof generation and public verification after threshold opening. \textcolor{blue}{In BioZKFHE, public verification is committee mediated. The compute node first generates encrypted PVSC proof objects for the prescribed homomorphic matching trace; a threshold committee opens these proof objects, and the smart contract then verifies the opened records against the session-bound outputs. Thus, BioZKFHE provides public auditability for outsourced encrypted matching after threshold opening, while release of the final identification result remains threshold-governed.}

	In summary, this paper makes the following contributions:
	\begin{itemize}
		\item We formulate BioZKFHE as a blockchain-assisted framework for \textcolor{blue}{threshold-opened auditability in encrypted biometric identification}, with explicit roles for off-chain homomorphic evaluation, threshold-governed proof opening and decryption, and smart-contract-side verification of opened proof batches.
		
		\item We propose \emph{Single-Coefficient Multi-Value (SCMV)} packing to address the storage and rotation bottleneck in encrypted $1{:}N$ matching, increasing effective packing density, reducing encrypted storage, and eliminating rotation-heavy operations from the online similarity path.
		
		\item We propose \emph{Parallelizable and Verifiable Similarity Computation (PVSC)} to address the proof-scalability bottleneck, decomposing each blockwise BGV similarity trace into a structured batch of parallel proof instances suitable for threshold opening and subsequent public verification.
		
		\item We formalize recoverability, noise growth, confidentiality, encrypted-output integrity, and finalized-result integrity under explicit committee and verifier assumptions, and implement a Microsoft SEAL prototype that demonstrates near-lossless biometric utility, up to 67\% encrypted-storage reduction, and \textcolor{blue}{about 22--44~s end-to-end proof-verified runtime at 10k--40k scale}.
	\end{itemize}
	

	\section{Related Work}
	\label{sec:related work}
	
	BioZKFHE lies at the intersection of three lines of work: privacy-preserving biometric computation with FHE, efficiency techniques for encrypted $1{:}N$ matching, and verifiable homomorphic computation under malicious servers. Table~\ref{tab:related_work_summary} summarizes representative works and the gap targeted by BioZKFHE.

	\begin{table}[!t]
		\centering
		{\color{blue}
		\caption{Representative lines of related work, their main ideas, and the remaining gap addressed by BioZKFHE.}
		\label{tab:related_work_summary}
		\scriptsize
		\setlength{\tabcolsep}{2.2pt}
		\setlength{\extrarowheight}{0.5pt}
		\renewcommand{\arraystretch}{1.12}
		\begin{tabularx}{\linewidth}{@{}>{\raggedright\arraybackslash}p{0.24\linewidth}>{\raggedright\arraybackslash}X>{\raggedright\arraybackslash}X@{}}
			\toprule
			\textbf{Line of Work} & \textbf{Representative Focus} & \textbf{Remaining Gap} \\
			\midrule
			FHE biometric systems & HERS~\cite{HERS}; HEFT~\cite{sperling2022heft}; recent HE biometric systems~\cite{choi2024blindmatch,ao2025cryptoface}. Confidential encrypted search, fusion, and identification. & No public verification of outsourced similarity outputs. \\
			\addlinespace[2pt]
			Packing and preselection for encrypted $1{:}N$ matching & Secure Face Matching~\cite{packing1}; coefficient packing~\cite{packing2,CryptoMask}; preselection/group testing~\cite{preselection1,ibarrondo2023grote,xinpacking3}. Higher packing density or fewer comparisons. & Matching-efficiency only; no outsourced-trace verification. \\
			\addlinespace[2pt]
			General VC on encrypted data & Generic VC-on-encrypted-data frameworks~\cite{vfhe1,vfhe2,vfhe3,vfhe4}; practical verifiable FHE systems~\cite{ringfhe,knabenhans2024vfhe,chatel2024veritas,aranha2024heliopolis,zhang2025phalanx}. Generic proofs for encrypted/ring computation. & Not tailored to large biometric galleries or repeated similarity traces. \\
			\addlinespace[2pt]
			Biometric-oriented verifiable FHE & PEEV~\cite{peev}. Prototype verifiable framework for encrypted biometric computation. & Does not jointly address rotation-heavy encrypted $1{:}N$ matching and proof scalability. \\
			\bottomrule
		\end{tabularx}
		}
	\end{table}

	\subsection{FHE-Based Privacy-Preserving Biometric Computation}
	FHE enables biometric computation directly on encrypted templates and queries, providing confidentiality under lattice-based assumptions. Prior work has explored both approximate-number schemes and quantized integer HE such as BGV/BFV~\cite{CKKS,marcolla2022fhe,bgv2014}. Representative systems demonstrate the feasibility of encrypted representation search, template fusion, and encrypted biometric identification~\cite{HERS,sperling2022heft,choi2024blindmatch,ao2025cryptoface}, and recent surveys summarize the broader landscape of privacy-preserving biometric protocols~\cite{zeng2025ppbi}. These works establish that biometric utility can be preserved under encryption, but they primarily optimize confidential execution rather than the malicious outsourced setting considered here, where encrypted outputs must be checked before any identification result is released.
	
	\subsection{Packing and Comparison Reduction for Encrypted $1{:}N$ Matching}
	The scalability of encrypted $1{:}N$ matching depends heavily on how many templates can be packed per ciphertext and how much rotation or comparison overhead is incurred during similarity evaluation. Early work packed one template per ciphertext~\cite{packing1}, while later schemes amortized ciphertext overhead by packing multiple templates~\cite{packing2,xinpacking3}. Orthogonal approaches further reduce encrypted comparisons through preselection or group-testing mechanisms~\cite{preselection1,ibarrondo2023grote}. These techniques improve matching efficiency, but they do not address the correctness of the outsourced computation under malicious behavior. SCMV is complementary to this line of work: rather than pruning candidates, it changes the packed-entry granularity to increase effective packing density and remove rotation-heavy operations from the online encrypted $1{:}N$ matching path.
	
	\subsection{Verifiable Homomorphic Computation and Zero-Knowledge Proofs}
	FHE alone does not guarantee that an untrusted server performed the intended computation correctly. Earlier verifiable-computation frameworks either incurred high overhead or supported restricted classes of encrypted computation~\cite{vfhe1,vfhe2,vfhe3,vfhe4}. More recent work improves practicality through ring-oriented SNARKs, verifiable FHE systems, and FHE-friendly proof optimizations~\cite{ringfhe,lattice-fhe,knabenhans2024vfhe,chatel2024veritas,aranha2024heliopolis,zhang2025phalanx}. Closer to biometric applications, PEEV presents a prototype framework for verifying encrypted-domain biometric computation~\cite{peev}. To the best of our knowledge, however, no prior work simultaneously targets efficient encrypted $1{:}N$ matching and verification of the resulting similarity trace in a blockchain-assisted outsourced biometric-identification workflow. In short, prior work improves confidentiality, matching efficiency, or verifiability largely in isolation, whereas BioZKFHE integrates all three in one workflow. BioZKFHE addresses this gap by combining SCMV for matching efficiency with PVSC for per-block proof parallelism after committee-side opening.

	\begin{figure}[t]
		\centering
		\includegraphics[width=0.85\linewidth]{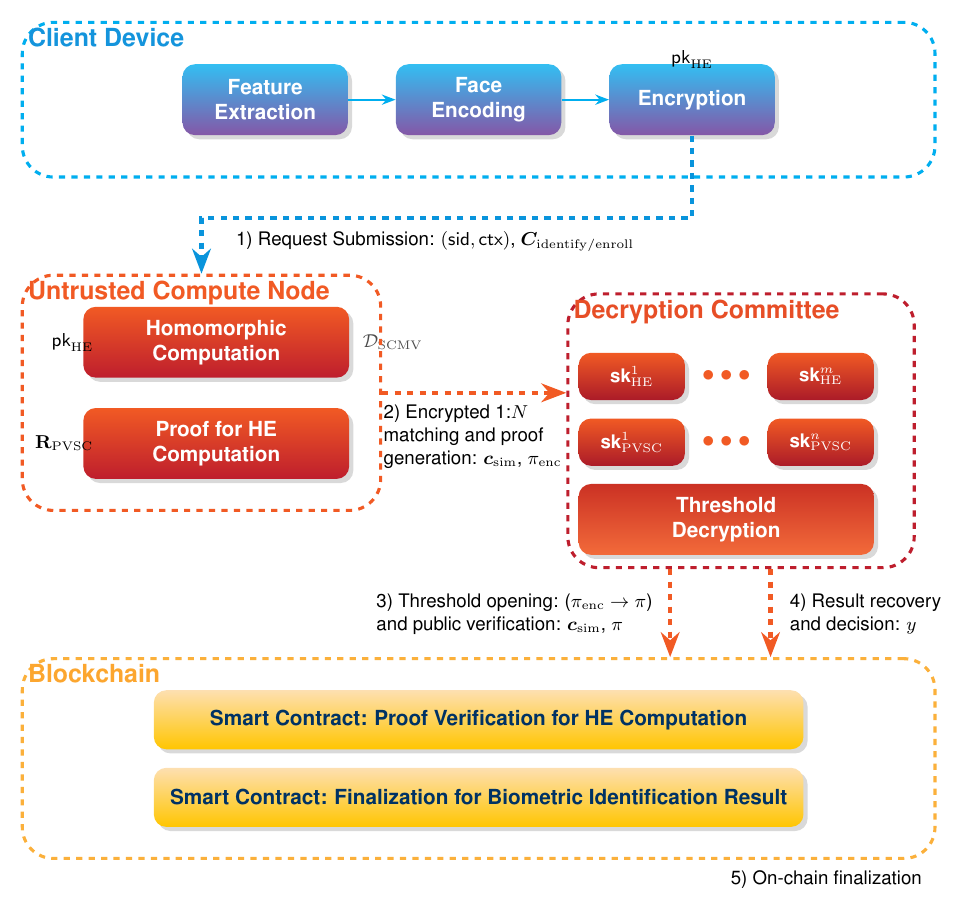}
		\caption{BioZKFHE architecture and protocol flow among the Client Device, Compute Node, Decryption Committee, and smart contract.}
		\label{fig:archi}
	\end{figure}	
	
	\section{System Architecture and Security Model}
	\label{sec:system_overview}
	
	This section presents the system architecture, party roles, workflow, and threat model of BioZKFHE. The system combines BGV FHE~\cite{bgv2014}, the designated-verifier lattice proving paradigm~\cite{ishai2021dvzksnark}, and threshold opening/decryption mechanisms~\cite{bendlin2010threshold}; cryptographic background is deferred to the Supplementary Material, Secs.~I--III.

	\subsection{System Model}
	\label{sec:system_model}
	
	BioZKFHE supports outsourced biometric identification over encrypted templates in a blockchain-assisted setting. The heavy encrypted $1{:}N$ matching runs off-chain at the Compute Node, while the smart contract records commitments, verifies opened proofs, and gates session finalization. Fig.~\ref{fig:archi} illustrates the resulting architecture and workflow. \textcolor{blue}{For readability, Table~\ref{tab:notation_summary} summarizes the notation used most frequently in the protocol, SCMV construction, and PVSC verification layer.}

	\begin{table}[!t]
		\centering
		{\color{blue}
		\caption{Summary of frequently used notation.}
		\label{tab:notation_summary}
		\scriptsize
		\setlength{\tabcolsep}{2.5pt}
		\renewcommand{\arraystretch}{1.05}
		\begin{tabularx}{\linewidth}{@{}>{\raggedright\arraybackslash}p{0.34\linewidth}X@{}}
			\toprule
			\textbf{Notation} & \textbf{Description} \\
			\midrule
			$\lambda$, $d$, $p$, $q$ & Security parameter, BGV ring dimension, plaintext modulus, and ciphertext modulus. \\
			$\mathsf{pk}_{\mathrm{HE}}$, $\mathsf{sk}_{\mathrm{HE}}$ & BGV public key and threshold-shared secret key. \\
			$\mathsf{sk}_{\mathrm{open}}$, $\mathsf{sk}_{\mathrm{open}}^{\ell}$ & Master proof-opening key and the $\ell$-th committee member's opening share. \\
			$(t,n_{\mathrm{com}})$ & Committee threshold and committee size. \\
			$N$, $L$, $\Gamma$ & Number of enrolled templates, embedding dimension, and quantization scale. \\
			$T$, $n$, $D$ & SCMV base, binding factor, and digit-range bound for recoverable similarity scores. \\
			$\mathcal{D}_{\mathrm{SCMV}}$, $\bm{C}_m$, $M$ & SCMV encrypted gallery, its $m$-th encrypted block, and the number of blocks. \\
			$\bm{V}_{\mathrm{identify}}$, $\bm{V}_{\mathrm{enroll}}$ & Plaintext matrices prepared for encrypted identification and conditional enrollment. \\
			$\bm{C}_{\mathrm{identify}}$, $\bm{C}_{\mathrm{enroll}}$ & Query and enrollment ciphertext matrices sent by the client. \\
			$\bm{c}^{\mathrm{sim}}_m$, $\bm{c}_{\mathrm{sim}}$, $y$ & Encrypted similarity output for block $m$, the blockwise output collection, and the final application output. \\
			$\mathsf{ctx}$, $\mathsf{sid}$, $\mathsf{op}$ & Session context, fresh session identifier, and operation type. \\
			$h_{\mathrm{id}}$, $u_m$, $h_m$, $\rho_m$ & Query commitment, block commitment preimage, session-bound block commitment, and descriptor digest. \\
			$\Pi_{\mathrm{enc}}^{(m)}$, $\Pi^{(m)}$ & Encrypted and threshold-opened PVSC proof batch for block $m$. \\
			$\bm{x}^{(m)}$, $\bm{w}^{(m)}$ & PVSC public statement and private witness for block $m$. \\
			$\mathbf{R}_{\mathrm{PVSC}}$, $\mathsf{pp}_{\mathrm{PVSC}}$, $\mathsf{vk}_{\mathrm{PVSC}}$ & PVSC relation family, public parameters, and verifier-side material. \\
			$q^{(j)}$, $n_q$, $d_{\mathrm{ZK}}$, $n_{\mathrm{chunk}}$ & RNS modulus, number of RNS limbs, proof-template chunk size, and chunks per limb. \\
			\bottomrule
		\end{tabularx}
		}
		\vspace{-2pt}
	\end{table}

	\subsubsection{Entity Roles and Responsibilities}
	\label{sec:roles}
	
	BioZKFHE involves four entities with distinct roles:
	
	\begin{itemize}
		\item \textbf{Client Device}: The biometric data owner. It holds the FHE public key $\mathsf{pk}_{\mathrm{HE}}$ and is the only entity with access to plaintext biometric data.
		
		\item \textbf{Compute Node}: The outsourced computation service. It uses $\mathsf{pk}_{\mathrm{HE}}$ and the public PVSC parameters $\mathsf{pp}_{\mathrm{PVSC}}$ to store the encrypted database $\mathcal{D}_{\mathrm{SCMV}}$, perform homomorphic matching, and generate PVSC proofs.
		
		\item \textbf{Decryption Committee}: The threshold trust anchor. Its $n_{\mathrm{com}}$ members collectively hold shares $\{\mathsf{sk}_{\mathrm{HE}}^{i}, \mathsf{sk}_{\mathrm{open}}^{i}\}_{i=1}^{n_{\mathrm{com}}}$ and jointly open proofs and decrypt similarity outputs.
		
		\item \textbf{Blockchain Smart Contract}: The \textcolor{blue}{public verifier and} state gatekeeper. It stores the verifier-side PVSC material $(\mathsf{vk}_{\mathrm{PVSC}}, \mathbf{R}_{\mathrm{PVSC}})$, verifies opened proofs, records accepted block indices, and finalizes outputs after threshold confirmations.
	\end{itemize}

	\subsubsection{System Initialization}
	\label{sec:init}
	
	BioZKFHE is bootstrapped once, e.g., via a secure Multiparty Computation protocol. This setup generates:
	\begin{itemize}
		\item the key pair $(\mathsf{pk}_{\mathrm{HE}}, \mathsf{sk}_{\mathrm{HE}})$ for the BGV scheme~\cite{bgv2014};
		\item the public PVSC parameters $\mathsf{pp}_{\mathrm{PVSC}}$ (defined concretely in Section~\ref{sec:PVSC_setup}) together with a master proof-opening key $\mathsf{sk}_{\mathrm{open}}$ for the lattice-based designated-verifier ZK proving mechanism~\cite{ishai2021dvzksnark}.
	\end{itemize}
	
	The secret keys $\mathsf{sk}_{\mathrm{HE}}$ and $\mathsf{sk}_{\mathrm{open}}$ are then shared among the Decryption Committee under a $(t,n_{\mathrm{com}})$ threshold mechanism~\cite{bendlin2010threshold}, so no single entity can unilaterally open proofs or decrypt results. The threshold opening/decryption abstraction used here is summarized in the Supplementary Material, Sec.~III. The Client Device receives $\mathsf{pk}_{\mathrm{HE}}$, the Compute Node receives $\mathsf{pk}_{\mathrm{HE}}$ and $\mathsf{pp}_{\mathrm{PVSC}}$, and the smart contract records the verifier-side material $\mathsf{vk}_{\mathrm{PVSC}}$ and $\mathbf{R}_{\mathrm{PVSC}}$.

		\begin{figure*}[!t]
		\refstepcounter{protocolalg}
		{\color{blue}
			\scriptsize
			\noindent\rule{\textwidth}{0.4pt}
			\par\vspace{1pt}
			\noindent\textbf{Algorithm~\theprotocolalg: BioZKFHE end-to-end protocol pipeline.}
			\label{alg:biozkfhe_pipeline}
			\par\vspace{2pt}
			\noindent\textbf{Input:} operation type $\mathsf{op}\in\{\mathsf{enroll},\mathsf{login}\}$, client biometric sample, and the current ordered SCMV gallery snapshot $\mathcal{D}_{\mathrm{SCMV}}=\{\bm{C}_m\}_{m=1}^{M}$.
			\par
			\noindent\textbf{Public parameters:} $\mathsf{pk}_{\mathrm{HE}}$, $\mathsf{pp}_{\mathrm{PVSC}}$, $(\mathsf{vk}_{\mathrm{PVSC}},\mathbf{R}_{\mathrm{PVSC}})$, and committee threshold policy $(t,n_{\mathrm{com}})$.
			\par
			\noindent\textbf{Output:} a finalized enrollment uniqueness decision or login identity decision $y$.
			\vspace{-2pt}
			\begin{enumerate}
				\setlength{\itemsep}{0pt}
				\setlength{\parsep}{0pt}
				\setlength{\topsep}{2pt}
				\item \textbf{Setup.} Generate $(\mathsf{pk}_{\mathrm{HE}},\mathsf{sk}_{\mathrm{HE}})$ and $\mathsf{pp}_{\mathrm{PVSC}}$; share $\mathsf{sk}_{\mathrm{HE}}$ and $\mathsf{sk}_{\mathrm{open}}$ among the committee; let the contract store $(\mathsf{vk}_{\mathrm{PVSC}},\mathbf{R}_{\mathrm{PVSC}})$ and the committee roster.
				\item \textbf{Session creation.} The contract creates a fresh session context $\mathsf{ctx}$ binding $\mathsf{sid}$, $\mathsf{op}$, the ordered gallery snapshot, and ledger metadata.
				\item \textbf{Client preparation.} The client extracts, normalizes, and quantizes the embedding, then runs
				$(\bm{V}_{\mathrm{identify}},\bm{V}_{\mathrm{enroll}})\leftarrow\mathsf{SCMV.Prepare}(\bm{v})$ and
				$(\bm{C}_{\mathrm{identify}},\bm{C}_{\mathrm{enroll}})\leftarrow\mathsf{SCMV.Encrypt}(\bm{V}_{\mathrm{identify}},\bm{V}_{\mathrm{enroll}},\mathsf{pk}_{\mathrm{HE}})$.
				\item \textbf{Bind request.} The contract records $h_{\mathrm{id}}=\mathcal{H}(\mathsf{ctx}\Vert\bm{C}_{\mathrm{identify}})$ and resolves the snapshot commitment for each block index $m$.
				\item \textbf{For each gallery block $m=1,\dots,M$, execute the block-verification loop:}
				\begin{enumerate}
					\setlength{\itemsep}{0pt}
					\setlength{\parsep}{0pt}
					\setlength{\topsep}{1pt}
					\item The Compute Node evaluates $\bm{c}^{\mathrm{sim}}_m\leftarrow\mathsf{SCMV.Identify}(\bm{C}_{\mathrm{identify}},\{\bm{C}_m\})$ and materializes the prescribed BGV multiply--relinearize--accumulate trace.
					\item The Compute Node generates $\Pi_{\mathrm{enc}}^{(m)}\leftarrow\mathsf{PVSC.ProofGen}(m,\mathsf{ctx},\bm{C}_{\mathrm{identify}},\bm{C}_m,\bm{c}^{\mathrm{sim}}_m,\mathsf{pp}_{\mathrm{PVSC}})$ and the descriptor digest $\rho_m$.
					\item At least $t$ committee members threshold-open $\Pi_{\mathrm{enc}}^{(m)}$ into $\Pi^{(m)}$; an untrusted relay submits $(m,\bm{c}^{\mathrm{sim}}_m,h_{\mathrm{id}},h_m,\rho_m,\Pi^{(m)})$.
					\item The contract checks $\mathsf{ctx}$, index validity, $h_m$, descriptor order, and $\rho_m$, then runs $\mathsf{PVSC.Verify}$ on the opened proof batch.
					\item If verification accepts and $m$ has not been accepted in this session, the contract records $m$ as covered; otherwise it rejects the block submission.
				\end{enumerate}
				\item \textbf{Coverage gate.} If any snapshotted block index is missing or duplicated, abort finalization; otherwise enable threshold decryption of the verified encrypted-output collection.
				\item \textbf{Threshold recovery.} The committee threshold-decrypts $\{\bm{c}^{\mathrm{sim}}_m\}_{m=1}^{M}$, recovers the similarity vector internally, and derives the minimal application output $y$.
				\item \textbf{Finalize and update.} Committee members submit confirmations binding $y$ to $\mathsf{ctx}$ and the verified ciphertext-output collection. The contract finalizes after at least $t$ consistent confirmations; if $\mathsf{op}=\mathsf{enroll}$ and $y$ accepts uniqueness, update the encrypted database with $\mathsf{SCMV.Update}(\bm{C}_{\mathrm{enroll}},\mathcal{D}_{\mathrm{SCMV}})$.
			\end{enumerate}
			\vspace{-4pt}
			\noindent\rule{\textwidth}{0.4pt}
		}
		\vspace{-0.8cm}
	\end{figure*}

	\subsubsection{Protocol Overview}
	
	BioZKFHE executes the following high-level workflow:
	\begin{enumerate}
		\item \textbf{Request submission}: For either enrollment or login, the client encrypts a quantized biometric vector under $\mathsf{pk}_{\mathrm{HE}}$. We denote the query ciphertext by $\bm{C}_{\mathrm{identify}}$, the enrollment ciphertext by $\bm{C}_{\mathrm{enroll}}$, and the session context by $\mathsf{ctx}$, which contains a fresh session identifier $\mathsf{sid}$, the fixed SCMV gallery snapshot, and the relevant ledger metadata.
		
		\item \textbf{Encrypted $1{:}N$ matching and proof generation}: The Compute Node evaluates homomorphic similarity over $\bm{C}_{\mathrm{identify}}$ and the SCMV-formatted encrypted database $\mathcal{D}_{\mathrm{SCMV}}=\{\bm{C}_m\}_m$, producing blockwise encrypted similarity outputs $\bm{c}_{\mathrm{sim}} := \{\bm{c}^{\mathrm{sim}}_m\}_m$. It then runs PVSC to generate the encrypted proof collection $\Pi_{\mathrm{enc}} := \{\Pi_{\mathrm{enc}}^{(m)}\}_m$, where each proof batch attests to the correctness of the corresponding block output under the committed session context and inputs.
		
		\item \textbf{Threshold opening and public verification}: A quorum of at least $t$ committee members jointly opens $\Pi_{\mathrm{enc}}$ into the plaintext proof collection $\Pi := \{\Pi^{(m)}\}_m$. An aggregation party may collect opening shares and submit the opened proofs together with the corresponding encrypted outputs on-chain, where the smart contract verifies them against $\mathsf{ctx}$, records accepted block indices, and rejects duplicates within the same session.
		
		\item \textbf{Result recovery and decision}: Once the smart contract has accepted the complete proof-verified block collection exactly once, the committee threshold-decrypts $\bm{c}_{\mathrm{sim}}$ and derives the minimal application output $y$. For enrollment, $y$ is a uniqueness decision indicating whether all similarities are below the threshold. For login, $y$ is either rejection or the matched identity index when the threshold is exceeded.
		
		\item \textbf{On-chain finalization}: Committee members submit consistent confirmations binding the decrypted result $y$ to the verified session, and the smart contract finalizes only after receiving at least $t$ such confirmations.
	\end{enumerate}

	\textcolor{blue}{Algorithm~\ref{alg:biozkfhe_pipeline} summarizes the end-to-end BioZKFHE pipeline in pseudo-algorithmic form, making the interaction among SCMV, PVSC, threshold opening/decryption, and contract-side verification explicit. Details of SCMV and PVSC are given in Sections~\ref{sec:SCMV_design} and~\ref{sec:PVSC_design}, respectively.}

	
	\subsection{Threat Model}
	\label{sec:threat_model}
	
	BioZKFHE assumes that fewer than $t$ committee members collude, that the blockchain executes the deployed smart-contract code correctly and deterministically, and that the client follows the specified biometric extraction, quantization, and encryption procedures.
	
	\subsubsection{Security Goals}
	
	BioZKFHE targets the following goals:
	\begin{itemize}
		\item \textbf{G1 (Biometric Confidentiality Against Untrusted Compute and Sub-Threshold Collusion)}: An adversary controlling the Compute Node and fewer than $t$ committee members should learn nothing about the plaintext query embedding, enrolled templates, or intermediate homomorphic states beyond the intentionally released application-level output and the information inherently available to an authorized decryption quorum.
		
		\item \textbf{G2 (Publicly Verifiable Encrypted-Output Integrity)}: A malicious Compute Node should not be able to cause the blockchain to accept an incorrect encrypted similarity output for the committed ciphertext inputs and the indexed committed gallery block resolved from the session snapshot, except with negligible probability.
		
		\item \textbf{G3 (Finalized Result Integrity)}: A malicious aggregation party or a partially corrupted committee should not be able to cause the blockchain to finalize an application output that is inconsistent with the complete proof-verified encrypted similarity result collection, except with negligible probability.
		
		\item \textbf{G4 (Post-Quantum Security Basis)}: The system security should reduce to standard lattice-based assumptions, such as Ring-LWE, thereby targeting resistance against quantum polynomial-time adversaries.
	\end{itemize}
	
	\subsubsection{Adversary Model}
	
	We analyze security under the following trust assumptions and adversarial capabilities:
	\begin{enumerate}
		\item \textbf{Malicious Compute Node}: The Compute Node is untrusted. It may deviate arbitrarily from the protocol, e.g., by substituting inputs, replaying stale data, or fabricating results. It also attempts to learn information from the encrypted database and query ciphertexts.
		
		\item \textbf{Partially Corrupted Committee}: We assume a threshold adversary who can corrupt up to $t-1$ committee members. These colluding members may share their internal states, but cannot reconstruct the full secret keys or by themselves produce a valid threshold opening or threshold decryption.
		
		\item \textbf{Malicious Aggregation Party}: Any party that collects and relays proof-opening shares or decryption shares may be malicious. It may delay, withhold, reorder, or misaggregate shares. Our protocol therefore treats the aggregation party as untrusted: it may affect liveness, but it should not be able to make the blockchain accept an invalid proof or finalize an incorrect plaintext result.
		
		\item \textbf{Public Observer}: The blockchain is public. Any external observer can view all on-chain data, including proofs, commitments, and finalized outputs.
	\end{enumerate}
	
	\paragraph*{Scope and Boundaries}
	Our analysis does not cover committee corruption at or above the threshold $t$, liveness or censorship resistance under network or blockchain failures, malicious enrollment governance, inference leakage induced by repeated observation of intentionally released application outputs, or low-level implementation side channels such as timing and memory-access leakage. Within this scope, PVSC certifies per-block correctness for the committed ciphertext inputs, the indexed committed gallery block resolved from the session snapshot, and the prescribed homomorphic similarity circuit, while the smart contract enforces that every gallery-block index is accepted exactly once before result recovery starts. The downstream release of the final application output remains threshold-governed: after the Decryption Committee jointly opens the encrypted proof objects and the smart contract verifies the encrypted-output computation, the committee decrypts the similarity result needed by the application, and the blockchain finalizes the session only after receiving at least $t$ mutually consistent committee confirmations. Accordingly, BioZKFHE protects biometric templates, query embeddings, and intermediate homomorphic states against the untrusted Compute Node, public observers, and any coalition of fewer than $t$ committee members, but it does not hide the recovered similarity vector from an authorized quorum participating in threshold decryption.

	\section{Design Details of SCMV}
	\label{sec:SCMV_design}
	
	Given the system architecture in Section~\ref{sec:system_model}, this section focuses on SCMV, i.e., how quantized biometric vectors are packed, encrypted, matched, recovered, and conditionally inserted into the encrypted database. Its goal is to improve the efficiency and scalability of FHE-based biometric identification by redesigning how quantized biometric embeddings are mapped to packed plaintext entries. A background on the adopted BGV computation model and its Double-CRT implementation view is given in the Supplementary Material, Sec.~I.
	
	\subsection{Motivation and Structural Advantage}
	\label{sec:SCMV_motivation}
	\label{sec:SCMV_comparison}
	
	FHE provides privacy by computing directly on ciphertexts, but homomorphic polynomial arithmetic over large moduli is costly. SIMD batching mitigates this cost by packing multiple values into plaintext slots, yet standard biometric layouts remain limited by the slot count $d$: horizontal packing processes templates largely one at a time, while vertical packing stores one quantized coordinate per slot across a row-wise block.
	
	SCMV uses the unused numeric range inside each plaintext entry to bind $n$ quantized values through base-$T$ expansion, so one row polynomial carries $nd$ coordinate values and one processed block covers about $nd$ templates. This changes packing granularity without changing the inner-product matching rule, reduces the block count to $\lceil N/(nd)\rceil$, and keeps the online similarity path rotation-free. \textcolor{blue}{Base-$T$ digit packing itself is not claimed as new; SCMV's contribution is the row-wise slot-level binding and its PVSC-aligned block layout. Fig.~\ref{fig:three-scheme} visualizes this structural contrast against horizontal and vertical packing.} Supplementary Material, Sec.~IV gives explicit coefficient/slot-domain comparisons.
	

	\subsection{SCMV-Formatted Database Structure}
	\label{sec:SCMVDataStruc}
	
	\begin{figure}[t]
		\centering
		\includegraphics[width=0.9\linewidth]{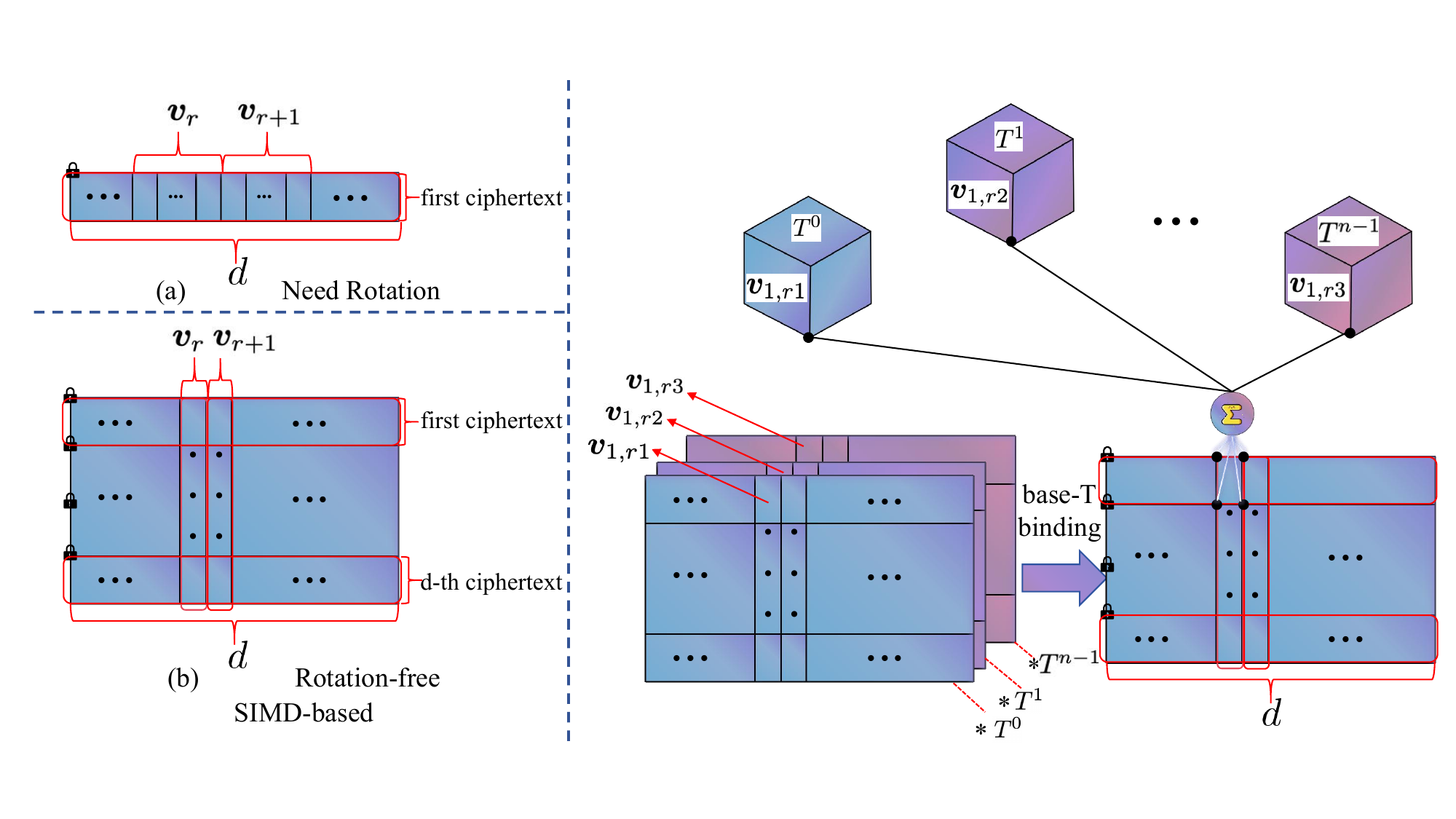}
		\caption{Comparison of horizontal, vertical, and SCMV encoding schemes.}
		\label{fig:three-scheme}
	\end{figure}
	
	The SCMV-formatted database organizes enrolled encrypted templates in a layout that simultaneously improves packing density, supports efficient blockwise homomorphic matching, and preserves a simple update mechanism for new enrollments. We denote this layout by $\mathcal{D}_{\mathrm{SCMV}}$ and describe it in three steps: template organization, base-$T$ multi-value binding, and ciphertext construction.
	
	\subsubsection{Quantized embedding organization}
	Let the enrolled template of user $r$ be represented by an $L$-dimensional quantized biometric embedding $\bm{v}_{r}=(v_{1,r},\dots,v_{L,r})^{\top}$, where $r\in\{1,\dots,N\}$. Partition the $N$ vectors into size-$d$ groups, and define the $\ell$-th group matrix $\bm{V}_{\ell}\in\mathbb{Z}^{L\times d}$ as $\bm{V}_{\ell}:=[\bm{v}_{(\ell-1)d+1},\dots,\bm{v}_{\ell d}]$ for $\ell\in\{1,\dots,\lceil N/d\rceil\}$. Padding with zero vectors is applied if $N$ is not a multiple of $d$.

	\subsubsection{Base-$T$ $n$-value binding}
	SCMV next compresses every $n$ consecutive group matrices into one consolidated matrix by base-$T$ multi-value binding within each packed plaintext entry. Specifically, for $m\in\{1,\dots,M\}$ with $M := \lceil N/(nd)\rceil$, SCMV combines $\{\bm{V}_{(m-1)n+1},\dots,\bm{V}_{mn}\}$ into a bound matrix $\bm{W}_{m}\in\mathbb{Z}^{L\times d}$ via $\bm{W}_{m}=\sum_{t=1}^{n}\bm{V}_{(m-1)n+t}T^{t-1}$. If the number of group matrices is not a multiple of $n$, the missing matrices in the last bundle are treated as all-zero matrices. Then, the entry $w^m_{i,j}$ at the $i$-th row and $j$-th column of $\bm{W}_m$ is given by 
	\begin{equation}
		\label{eq:t_expansion}
		w^m_{i,j} = \sum_{t=1}^{n} v_{i,(m-1)nd+(t-1)d+j}\,T^{t-1},
	\end{equation}
	where $i\in\{1,\dots,L\}$ and $j\in\{1,\dots,d\}$. Eq.~\eqref{eq:t_expansion} shows that the values at coordinate $i$ from $n$ different users are bound into a single integer $w^m_{i,j}$.
	
	\subsubsection{Ciphertext construction}
	Finally, each consolidated matrix $\bm{W}_m$ is converted into an encrypted block. Let $\bm{w}^{m}_{i}$ denote the $i$-th row of $\bm{W}_m$. Each row is batch-encoded and encrypted into a BGV ciphertext $\bm{c}^{m}_{i}\leftarrow\mathsf{BGV.Encrypt}(\mathsf{BGV.Encode}(\bm{w}^{m}_{i}),\mathsf{pk}_{\mathrm{HE}})$. The resulting block is $\bm{C}_{m}:=[\bm{c}^{m}_{1},\dots,\bm{c}^{m}_{L}]$, and the SCMV-formatted encrypted template database is $\mathcal{D}_{\mathrm{SCMV}}:=\{\bm{C}_1,\dots,\bm{C}_{M}\}$.
	
	This description specifies the target encrypted layout of $\mathcal{D}_{\mathrm{SCMV}}$, rather than the full client-side preprocessing workflow. In deployment, client devices locally prepare and encrypt biometric data into SCMV-compatible ciphertexts, while the Compute Node only stores and updates $\mathcal{D}_{\mathrm{SCMV}}$ through the ciphertext-level functions defined below, without accessing plaintext biometric data.
	
	\begin{remark}
		The SCMV layout substantially increases effective packing density. Under standard slot-level SIMD, one ciphertext can encode about $d/L$ users under vertical packing, whereas SCMV encodes about $nd/L$ users by binding $n$ values into each plaintext entry. This $n$-fold increase directly reduces ciphertext storage and the number of blocks processed during encrypted 1:$N$ matching. The admissible choices of $T$ and $n$, which must balance packing density against correctness and noise constraints, are analyzed in Section~\ref{sec:analysis}.
	\end{remark}
	
	With the SCMV data layout in place, we next define how a biometric embedding is mapped into this layout and how encrypted matching, similarity recovery, and database maintenance are carried out over $\mathcal{D}_{\mathrm{SCMV}}$.
	
	\subsection{SCMV Operating Functions}
	\label{sec:operation}
	
	Building on the above database layout, we now formalize the functions of SCMV, namely, biometric preparation, encryption, encrypted 1:$N$ matching, similarity recovery, and database update. In BioZKFHE, enrollment-time uniqueness checking and login-time identification share the same encrypted matching primitive $\mathsf{SCMV.Identify}$ and differ only in the decision rule applied to the recovered similarity vector.

	\subsubsection{Biometric Preparation Function}
	
	Let $\bm{{v}}'\in\mathbb{R}^L$ denote the real-valued biometric embedding extracted from a biometric sample. The preparation function first normalizes $\bm{{v}}'$ to a fixed $L_2$ norm $\Gamma$ and rounds component-wise to obtain the quantized vector $\bm{v}\in\mathbb{Z}^L$, and then maps $\bm{v}$ to two formatted plaintext matrices suitable for encrypted matching and conditional enrollment, formally defined as $(\bm{V}_{\mathrm{identify}},\bm{V}_{\mathrm{enroll}})\leftarrow\mathsf{SCMV.Prepare}(\bm{v})$. The output matrices are constructed as follows:
	\begin{itemize}
		\item $\bm{V}_{\mathrm{identify}}\in\mathbb{Z}^{L\times d}$ is constructed by replicating $\bm{v}$ across all $d$ columns, namely, $\bm{V}_{\mathrm{identify}}=[\bm{v},\bm{v},\dots,\bm{v}]$. This layout enables one encrypted query to be matched in parallel against all users packed into a database block.
		
		\item $\bm{V}_{\mathrm{enroll}}\in\mathbb{Z}^{L\times d}$ prepares the same quantized vector for possible insertion into the SCMV database. Let $N$ be the number of already enrolled users and define the local offset $a:=N\bmod(nd)$. The target column index and binding-layer index are $h:=(a\bmod d)+1$ and $k=\lfloor a/d\rfloor+1$, respectively. The matrix places $\bm{v}T^{k-1}$ in column $h$ and zero vectors elsewhere.
	\end{itemize}
	
	For login, only $\bm{V}_{\mathrm{identify}}$ is needed; for enrollment, $\bm{V}_{\mathrm{identify}}$ supports uniqueness checking, whereas $\bm{V}_{\mathrm{enroll}}$ is retained for conditional insertion.
	
	\subsubsection{Biometric Encryption Function}
	
	The encryption function converts the prepared plaintext matrices into SCMV-compatible ciphertext blocks using the FHE public key $\mathsf{pk}_{\mathrm{HE}}$, defined as $(\bm{C}_{\mathrm{identify}},\bm{C}_{\mathrm{enroll}})\leftarrow\mathsf{SCMV.Encrypt}(\bm{V}_{\mathrm{identify}},\bm{V}_{\mathrm{enroll}},\mathsf{pk}_{\mathrm{HE}})$. As in the database layout, encryption is performed row-wise. For each dimension $i\in\{1,\dots,L\}$, the $i$-th rows of $\bm{V}_{\mathrm{identify}}$ and $\bm{V}_{\mathrm{enroll}}$ are encoded and encrypted to produce ciphertexts $\bm{c}^{\mathrm{id}}_i$ and $\bm{c}^{\mathrm{en}}_i$, respectively. The outputs are the ciphertext matrices $\bm{C}_{\mathrm{identify}}:=[\bm{c}^{\mathrm{id}}_1,\dots,\bm{c}^{\mathrm{id}}_L]$ and $\bm{C}_{\mathrm{enroll}}:=[\bm{c}^{\mathrm{en}}_1,\dots,\bm{c}^{\mathrm{en}}_L]$.
	
	\subsubsection{Encrypted Matching Function}
	
	We use inner-product-based similarity in SCMV. Because all embeddings are normalized to the same $L_2$ norm $\Gamma$, the inner product is proportional to cosine similarity up to a fixed scale factor. The encrypted matching function, executed by the Compute Node, computes encrypted similarity scores between the query embedding and all enrolled templates as $\bm{c}_{\mathrm{sim}}\leftarrow\mathsf{SCMV.Identify}(\bm{C}_{\mathrm{identify}},\mathcal{D}_{\mathrm{SCMV}})$, where $\bm{c}_{\mathrm{sim}}=\{\bm{c}^{\mathrm{sim}}_m\}_{m=1}^{M}$ is the blockwise encrypted similarity output. For each database block $\bm{C}_m\in\mathcal{D}_{\mathrm{SCMV}}$, the node computes
	\begin{equation}
		\label{eq:bioidentify}
		\bm{c}^{\mathrm{sim}}_m = \bigoplus_{i=1}^{L} \big(\bm{c}_i^{\mathrm{id}} \otimes \bm{c}_{i}^m\big),
	\end{equation}
	where $\otimes$ and $\oplus$ denote BGV homomorphic multiplication and addition, respectively. This primitive is shared by enrollment-time uniqueness checking and login-time identification; only the post-recovery decision rule differs.
	
	\textit{Computation Correctness:} Let $\bm{u}_{\mathrm{sim}}^m\in\mathbb{Z}^{d}$ denote the plaintext corresponding to $\bm{c}^{\mathrm{sim}}_m$. Since Eq.~\eqref{eq:bioidentify} is a homomorphic multiply-and-accumulate over the $L$ embedding dimensions, the $j$-th slot of $\bm{u}_{\mathrm{sim}}^m$ is
	\begin{equation}
		\label{eq:u_cos}
		\begin{aligned}
			u^{m}_{j}
			&= \sum_{i=1}^{L} v_i\,w^{m}_{i,j}
			= \sum_{t=1}^{n}\left(\sum_{i=1}^{L} v_i\,v_{i,(m-1)nd+(t-1)d+j}\right)T^{t-1} \\
			&= \sum_{t=1}^{n}\left(\bm{v}^{\top}\bm{v}_{(m-1)nd+(t-1)d+j}\right)T^{t-1}.
		\end{aligned}
	\end{equation}
	Hence each slot of $\bm{c}^{\mathrm{sim}}_m$ compactly stores $n$ similarity scores in base-$T$ representation.
	
	\subsubsection{Similarity Decryption Function}
	
	For exposition, we write the recovery step as $\bm{s}\leftarrow\mathsf{SCMV.Decrypt}(\bm{c}_{\mathrm{sim}},\mathsf{sk}_{\mathrm{HE}})$, where $\bm{s}\in\mathbb{Z}^{N}$ is the plaintext similarity vector. In the BioZKFHE system of Section~\ref{sec:system_model}, this recovery is realized collectively by the Decryption Committee using threshold-held secret shares. For each ciphertext $\bm{c}^{m}_{\mathrm{sim}}$, the decryptor obtains the plaintext vector $\bm{u}^{m}_{\mathrm{sim}}$. Since each element $u^{m}_{j}$ stores $n$ similarity scores in base-$T$ form, the individual scores are recovered by recursive centered decomposition. Specifically, initialize $z_j^{(0)}=u_j^m$, and for $t=1,\dots,n$ compute
	\begin{equation}
		\label{eq:Tdecomp}
		\begin{aligned}
			s_{(m-1)nd+(t-1)d+j} &= \operatorname{ctr}_T\!\left(z_j^{(t-1)} \bmod T\right), \\
			z_j^{(t)} &= \frac{z_j^{(t-1)} - s_{(m-1)nd+(t-1)d+j}}{T},
		\end{aligned}
	\end{equation}
	where $\operatorname{ctr}_T(\cdot)$ maps an integer residue to its centered representative in the interval $(-T/2,T/2]$. If padding was introduced in the last database block, the extra scores corresponding to padded zero templates are discarded, and only the first $N$ recovered scores are retained in $\bm{s}$.
	
	This decomposition is unique and error-free provided that $|\bm{v}^{\top}\bm{v}_r|<T/2$ for all enrolled users, equivalently $T\ge 2D+1$ where $D$ bounds the maximum absolute inner product. Under this condition, each packed similarity score lies strictly within the centered digit range of base $T$ and can therefore be recovered without ambiguity.
	
	In BioZKFHE, $\mathsf{SCMV.Decrypt}$ is an internal recovery primitive executed by the Decryption Committee. The recovered similarity vector $\bm{s}$ need not be published. Instead, the committee derives a minimal application output from $\bm{s}$. For enrollment-time uniqueness checking, insertion is accepted only if $\max_r s_r < \tau_{\mathrm{enroll}}$, where $\tau_{\mathrm{enroll}}$ is the prescribed uniqueness threshold. For login-time identification, access is accepted if $\max_r s_r \ge \tau_{\mathrm{login}}$, in which case the matched identity is $r^{*}:=\arg\max_r s_r$.
	
	\subsubsection{Database Update Function}
	
	Finally, SCMV supports incremental enrollment without decrypting previously stored templates. This function is invoked only after the enrollment-time uniqueness check accepts. The update function integrates a new user's encrypted template into the current database, namely, $\mathcal{D}'_{\mathrm{SCMV}}\leftarrow\mathsf{SCMV.Update}(\bm{C}_{\mathrm{enroll}},\mathcal{D}_{\mathrm{SCMV}})$. If the current last block $\bm{C}_M$ is not fully packed, $\bm{C}_{\mathrm{enroll}}$ is merged into $\bm{C}_M$ via homomorphic addition, i.e., $\bm{c}_i^{M}{}'=\bm{c}^{M}_{i}\oplus\bm{c}^{\mathrm{en}}_{i}$ for all $i\in\{1,\dots,L\}$. Otherwise, $\bm{C}_{\mathrm{enroll}}$ is appended as a new block $\bm{C}_{M+1}$.
	
	\textit{Update integrity:} The append case trivially preserves the SCMV layout. For the merge case, let $w_{i,h}^{M}$ denote the plaintext value at the target slot $h$ in the last block before update, where $h$ and $k$ are the indices defined in $\mathsf{SCMV.Prepare}$. Since the new user has global index $N+1=(M-1)nd+(k-1)d+h$, the updated plaintext becomes
	\begin{equation}
		\label{eq:update}
		w_{i,h}^{M}{}^{\prime}
		= w^{M}_{i,h} + v_{i,N+1} T^{k-1}
		= \sum_{t=1}^{k} v_{i,(M-1)nd+(t-1)d+h} T^{t-1}.
	\end{equation}
	Eq.~\eqref{eq:update} shows that the updated packed entry is exactly the base-$T$ expansion of the $k$ values now assigned to that slot. Therefore, database update preserves the SCMV invariant without decrypting or re-encrypting existing data.

	\section{Design Details of PVSC}
	\label{sec:PVSC_design}
	
		The proposed PVSC scheme certifies a \emph{fixed compiled execution trace} for the SCMV similarity circuit rather than an informal claim that the Compute Node ``performed the intended computation.'' At a high level, PVSC follows the lattice-based designated-verifier proving paradigm of~\cite{ishai2021dvzksnark}; however, BioZKFHE does not use that paradigm as a standalone private-verification mechanism. Instead, for every SCMV block, the Compute Node generates \emph{encrypted} proof instances for a deterministic family of local relations induced by the BGV evaluation trace, an authorized committee threshold-opens those instances, and the smart contract verifies the resulting opened proof batch together with explicit instance descriptors. This stricter view is important: the contract does not accept an opaque monolithic proof, but a parsed, ordered, and schedule-consistent batch that is tied to one unique blockwise execution.
		
		Supplementary Material, Sec.~II provides the pedagogical background on the R1CS/QAP-based compiler viewpoint and the designated-verifier lattice proof mechanism underlying PVSC. The present section focuses on the main-paper abstraction: the blockwise trace being certified, the public statement, the core relation family, and the protocol interfaces exercised by the SCMV similarity circuit of Eq.~\eqref{eq:bioidentify}.
	
	\subsection{Motivation and Decomposition Strategy}
	\label{sec:PVSC_motivation}
	
	The core challenge in verifiable FHE is the algebraic complexity of ring-based ciphertext arithmetic. In BioZKFHE, SCMV turns encrypted biometric matching into a regular blockwise multiply-and-accumulate trace: for each database block, Eq.~\eqref{eq:bioidentify} performs ciphertext multiplications followed by additions over degree-$d$ polynomials with large moduli. Proving this entire trace as one monolithic statement is impractical because the resulting circuit is large and offers limited parallelism.
	
	PVSC addresses this bottleneck by exploiting the Double-CRT execution structure of BGV. In practical libraries such as SEAL \cite{sealcrypto}, ciphertext operations over $\mathcal{R}_q=\mathbb{Z}_q[x]/(x^d+1)$ are implemented through two orthogonal decompositions: an RNS decomposition of the ciphertext modulus and an NTT representation of polynomial multiplication. PVSC leverages these two axes to break one large homomorphic similarity trace into many smaller proof instances of uniform shape. This decomposition preserves the semantics of the original FHE computation while exposing a natural granularity for parallel proof generation. \textcolor{blue}{Importantly, under the adopted designated-verifier lattice proof mechanism, these proof instances are generated first in encrypted form and become publicly verifiable, under committee mediation, only after committee-side threshold opening.}
	
	\subsection{PVSC-Specific Notation}
	\label{sec:PVSC_notation}
	
	To instantiate this decomposition concretely, we fix the parameters that describe the Double-CRT view used throughout PVSC.
	
	\begin{itemize}
		\item \textbf{RNS decomposition.} The ciphertext modulus is factored as $q=\prod_{j=1}^{n_q} q^{(j)}$, where the moduli $\{q^{(j)}\}_{j=1}^{n_q}$ are pairwise co-prime. Ring operations modulo $q$ are thus realized as $n_q$ parallel channels modulo the individual RNS moduli.
		
		\item \textbf{NTT decomposition and chunking.} Under each modulus $q^{(j)}$, polynomial multiplication is carried out in the NTT domain as coordinate-wise multiplication. To match the capacity of the proof templates, we partition the $d$ NTT coordinates into chunks of size $d_{\mathrm{ZK}}$, yielding $n_{\mathrm{chunk}}:=\lceil d/d_{\mathrm{ZK}}\rceil$ chunks per RNS channel.
	\end{itemize}
	
	Accordingly, one logical ciphertext operation in the SCMV similarity trace is represented inside PVSC by many smaller instances indexed by an RNS modulus and an NTT chunk. This indexing is the basis for the proof parallelism illustrated in Fig.~\ref{fig:PVEC}.
	
	\begin{figure}[t]
		\centering
		\includegraphics[width=0.9\linewidth]{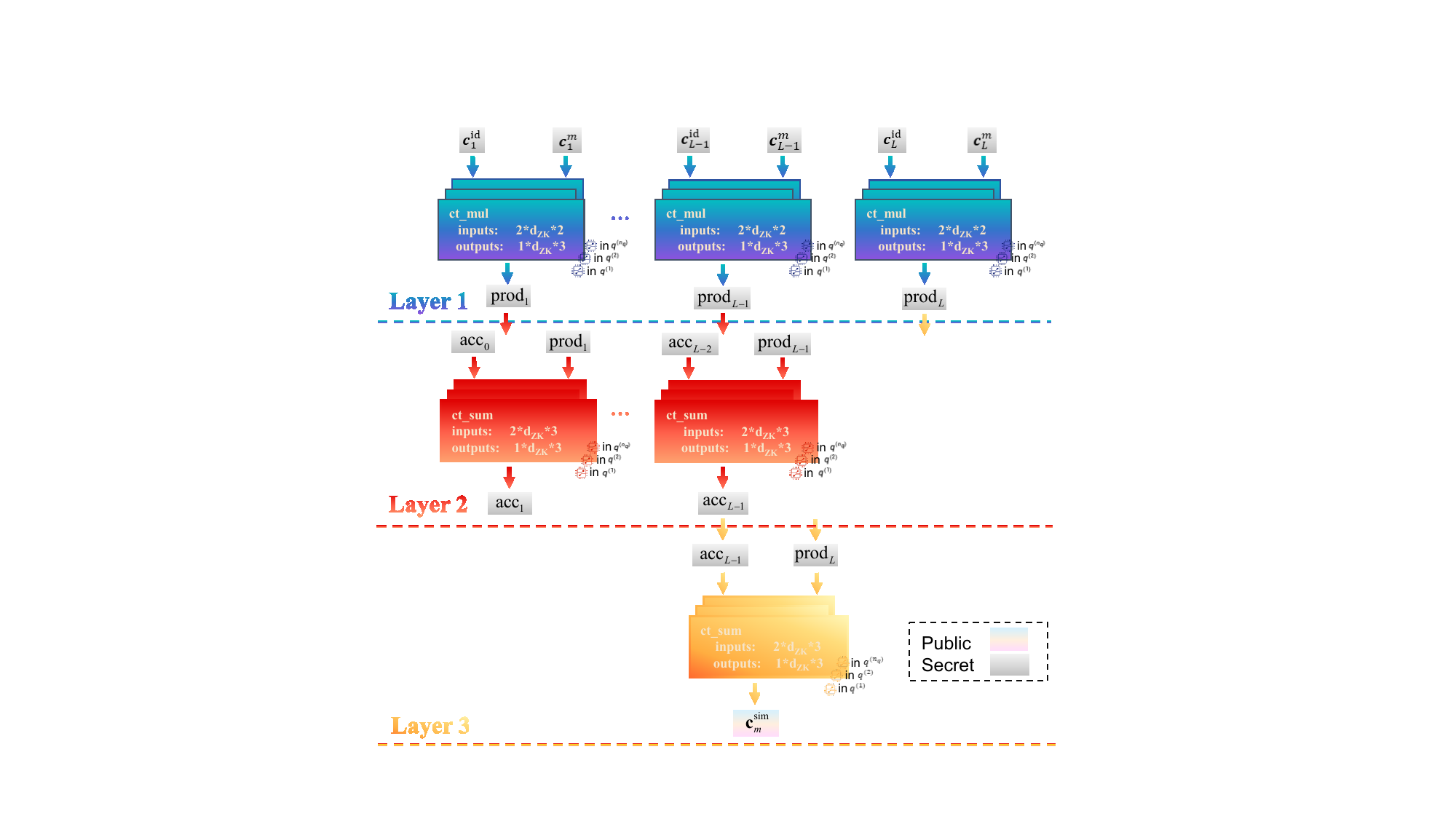}
		\caption{Double-CRT parallelization for PVSC. Operations are decomposed via RNS $\{q^{(j)}\}$ and NTT chunks ($d_{\mathrm{ZK}}$) for parallel ciphertext-domain proof generation.}
		\label{fig:PVEC}
	\end{figure}
	
	\subsection{PVSC Statement, Proof Families, and Verification Scope}
	\label{sec:PVSC_statement}
	
		Since SCMV matching is executed blockwise over $\mathcal{D}_{\mathrm{SCMV}}$, PVSC also proves correctness on a per-block basis. For each SCMV database block $\bm{C}_m\in\mathcal{D}_{\mathrm{SCMV}}$, PVSC certifies a deterministic BGV trace for the fixed input/output tuple $\big(\bm{C}_{\mathrm{identify}}, \bm{C}_m, \bm{c}^{\mathrm{sim}}_m\big)$. Concretely, letting $\bm{p}^{m}_{i}$ denote the relinearized product ciphertext obtained from the $i$-th query row and the $i$-th database row, and $\bm{a}^{m}_{i}$ the running accumulator, the admitted trace is
		\begin{equation}
			\label{eq:pvsc_trace}
			\bm{p}^{m}_{i}:=\mathsf{Relin}\!\left(\bm{c}_{i}^{\mathrm{id}}\otimes \bm{c}_{i}^{m}\right),\quad
			\bm{a}^{m}_{0}:=\bm{0},\quad
			\bm{a}^{m}_{i}:=\bm{a}^{m}_{i-1}\oplus \bm{p}^{m}_{i},
		\end{equation}
	for $i\in\{1,\dots,L\}$, with terminal condition $\bm{a}^{m}_{L}=\bm{c}^{\mathrm{sim}}_{m}$. Thus PVSC certifies one fixed multiply--relinearize--accumulate schedule rather than a generic claim that the server ``computed something consistent.''
		
		To avoid interpreting a blockwise proof with respect to an ad hoc gallery chosen by the Compute Node, BioZKFHE binds PVSC verification to a session-fixed encrypted gallery snapshot. Concretely, the contract binds the current ordered SCMV block-commitment state into $\mathsf{ctx}$ when a session is created, so a proof submitted for block index $m$ is interpreted against the unique committed block at that index. The contract also records accepted indices and enables threshold decryption only after every gallery-block index has been accepted exactly once.
	
	\subsubsection{Public statement}
	
		For block index $m$, the public statement contains the published output ciphertext $\bm{c}^{\mathrm{sim}}_m$, the block index $m$, the session context $\mathsf{ctx}$, the query-binding commitment $h_{\mathrm{id}}:=\mathcal{H}(\mathsf{ctx}\Vert \bm{C}_{\mathrm{identify}})$, the session-bound block commitment $h_m:=\mathcal{H}(\mathsf{ctx}\Vert m\Vert u_m)$ with $u_m:=\mathcal{H}(\bm{C}_m)$, a batch-root digest $\rho_m$ of the canonical ordered proof-instance descriptor sequence, and the public system parameters $\mathsf{pp}_{\mathrm{sys}}$. We write
		\begin{equation}
			\label{eq:pvsc_public_statement}
			\bm{x}^{(m)} :=
			\big(
			\bm{c}^{\mathrm{sim}}_m,\,
			m,\,
			\mathsf{ctx},\,
			h_{\mathrm{id}},\,
			h_m,\,
			\rho_m,\,
			\mathsf{pp}_{\mathrm{sys}}
			\big).
		\end{equation}
		The role of $\rho_m$ is to bind the entire opened proof batch to one canonical ordered blockwise trace and thereby prevent proof splicing or reordering attacks.

	\subsubsection{Private witness}
	
		The prover's witness for block $m$ contains the hidden values needed to instantiate the fixed trace of Eq.~\eqref{eq:pvsc_trace}. Concretely, it includes the ciphertext inputs $\bm{C}_{\mathrm{identify}}$ and $\bm{C}_m$, the product ciphertexts $\{\bm{p}^{m}_{i}\}_{i=1}^{L}$, the accumulator chain $\{\bm{a}^{m}_{i}\}_{i=0}^{L}$, and the Double-CRT decomposition witnesses used by the local proof templates. We denote the full witness by $\bm{w}^{(m)}$.
		
		The exact local witness structure and descriptor syntax are deferred to the Supplementary Material, Sec.~V.
	
	\subsubsection{Proved relation}
	
		PVSC proves that there exists a witness $\bm{w}^{(m)}$ such that the witnessed ciphertext inputs match the public commitments under $\mathsf{ctx}$, the opened proof batch is bound to $\rho_m$, and the published output ciphertext $\bm{c}^{\mathrm{sim}}_m$ equals the result of correctly evaluating the prescribed BGV similarity trace of Eq.~\eqref{eq:pvsc_trace}. Formally, PVSC proves the NP relation
		\begin{equation}
			\label{eq:pvsc_np_statement}
			\exists\,\bm{w}^{(m)} \ \text{s.t.}\ \mathsf{Rel}_{\mathrm{PVSC}}\big(\bm{x}^{(m)},\bm{w}^{(m)}\big)=1.
		\end{equation}
		
		Operationally, $\mathsf{Rel}_{\mathrm{PVSC}}$ combines four kinds of checks, corresponding to the multiplication, accumulation, output-consistency, and block-binding families instantiated in the Supplementary Material, Sec.~V: local multiplication correctness, local accumulation correctness, output consistency with $\bm{c}^{\mathrm{sim}}_m$, and one block-level binding check tying the whole opened batch to $(\mathsf{ctx}, h_{\mathrm{id}}, h_m, \rho_m)$. The public verifier separately checks that the submitted pair $(m,h_m)$ matches the unique committed block at index $m$ in the gallery snapshot bound to $\mathsf{ctx}$. The descriptor-level formulation of these checks is deferred to the Supplementary Material, Sec.~V.
	
	\subsubsection{Instantiation via proof-template families}
	
		PVSC realizes the above relation through a fixed family of proof templates aligned with the Double-CRT execution of the SCMV similarity trace. We denote this template family by $\mathbf{R}_{\mathrm{PVSC}}:=\{\mathcal{R}^{\mathrm{mul}}_{j},\mathcal{R}^{\mathrm{acc}}_{j},\mathcal{R}^{\mathrm{out}}_{j}\}_{j=1}^{n_q}\cup\{\mathcal{R}^{\mathrm{bind}}\}$. The four families play distinct roles: $\mathcal{R}^{\mathrm{mul}}_{j}$ certifies one local multiplication path under modulus $q^{(j)}$, $\mathcal{R}^{\mathrm{acc}}_{j}$ certifies one local accumulator transition, $\mathcal{R}^{\mathrm{out}}_{j}$ links the terminal accumulator to the public ciphertext output, and $\mathcal{R}^{\mathrm{bind}}$ binds the whole batch to $(\mathsf{ctx}, h_{\mathrm{id}}, h_m, \rho_m)$. \textcolor{blue}{In our implementation, the canonical descriptor sequence has three counted layers for each block: Layer~1 contributes $n_q n_{\mathrm{chunk}} L$ local instances, Layer~2 contributes $n_q n_{\mathrm{chunk}}(L-2)$ local instances, and Layer~3 contributes $n_q n_{\mathrm{chunk}}$ local instances. Hence the counted arithmetic/output layers contribute $K_m = n_q n_{\mathrm{chunk}}(2L-1)$ instances per block; the full batch additionally includes one block-binding instance.}
		The exact descriptor syntax, local relations, and canonical family-major batch layout are given in the Supplementary Material, Sec.~V. At the main-paper level, it suffices to note that proof size and verifier work grow linearly in the counted term $K_m$, with one additional block-binding instance per block.
	
	\subsubsection{Verification scope}
	
		The exact scope of PVSC is intentionally narrower than the full application pipeline. Acceptance of $\Pi^{(m)}$ implies that there exists one \emph{globally consistent} hidden trace satisfying Eq.~\eqref{eq:pvsc_trace}, that the trace is bound to the committed ciphertext inputs and the indexed committed gallery block resolved from the session snapshot bound to $\mathsf{ctx}$, and that its terminal ciphertext equals the published $\bm{c}^{\mathrm{sim}}_m$. Complete $1{:}N$ correctness is obtained only after the contract has accepted every gallery-block index exactly once in the same session.
		
		PVSC does \emph{not} by itself certify client-side feature extraction, client-side encryption, threshold decryption, application-level threshold comparison, or the final identity decision. These later steps remain threshold-governed protocol actions. Accordingly, the direct guarantee provided by PVSC in BioZKFHE is best described as \emph{snapshot-bound publicly verifiable encrypted-output integrity after threshold opening}. In the current prototype, this scope includes ciphertext addition, ciphertext multiplication, the relinearization-dependent multiplication path, and the descriptor-level trace-consistency checks described above, but excludes client-side encode/encrypt, threshold decryption, rotations, modulus switching, and bootstrapping.
	
	\subsection{PVSC Core Functions}
	\label{sec:PVSC_functions}
	
	We now describe the protocol-level functions that realize PVSC in BioZKFHE.
	
	\subsubsection{PVSC Setup Function}
	\label{sec:PVSC_setup}
	
		The setup function generates the public PVSC parameters and the threshold-opening material required by PVSC. Formally, $\big(\mathsf{pp}_{\mathrm{PVSC}}, \{\mathsf{sk}_{\mathrm{open}}^{\ell}\}_{\ell=1}^{n_{\mathrm{com}}}\big) \leftarrow \mathsf{PVSC.Setup}(\lambda, \mathbf{R}_{\mathrm{PVSC}}, d_{\mathrm{ZK}})$, where $\mathsf{pp}_{\mathrm{PVSC}} := \big(\mathbf{R}_{\mathrm{PVSC}}, d_{\mathrm{ZK}}, \{\bm{C}_{Q}^{i}\}_{i=1}^{3n_q+1}, \mathsf{vk}_{\mathrm{PVSC}}\big)$. Here, the collection $\{\bm{C}_{Q}^{i}\}_{i=1}^{3n_q+1}$ is a flattened notation for the family-level encrypted proving queries associated with $\big(\mathcal{R}^{\mathrm{mul}}_{j},\mathcal{R}^{\mathrm{acc}}_{j},\mathcal{R}^{\mathrm{out}}_{j}\big)_{j=1}^{n_q}$ together with $\mathcal{R}^{\mathrm{bind}}$, $\mathsf{vk}_{\mathrm{PVSC}}$ denotes the public verification material used by the smart contract, and $\{\mathsf{sk}_{\mathrm{open}}^{\ell}\}_{\ell=1}^{n_{\mathrm{com}}}$ are the threshold shares of the master opening key collectively held by the Decryption Committee. Chunk and concrete step indices appear only when these family-level queries are instantiated into one blockwise proof trace. No single party can unilaterally open an encrypted PVSC proof object.
	
	\subsubsection{PVSC Encrypted Proof Generation Function}
	\label{sec:PVSC_gen}
	
		The proof-generation function is executed by the Compute Node. For each database block $\bm{C}_m$, the Compute Node first computes the output ciphertext $\bm{c}^{\mathrm{sim}}_m$ by evaluating the prescribed SCMV similarity circuit, and then generates the corresponding encrypted proof batch $\Pi_{\mathrm{enc}}^{(m)}:=\langle(\delta^{(m,k)},\pi_{\mathrm{enc}}^{(m,k)})\rangle_{k=1}^{K_m+1}$ \textcolor{blue}{(the $K_m$ counted arithmetic and output instances together with the single block-binding instance)}.
		Formally, $\Pi_{\mathrm{enc}}^{(m)}  \leftarrow
		\mathsf{PVSC.ProofGen}\big(
		m,\mathsf{ctx},\bm{C}_{\mathrm{identify}},\bm{C}_m,
		\bm{c}^{\mathrm{sim}}_m,\mathsf{pp}_{\mathrm{PVSC}}
		\big)$, where the public commitments $h_{\mathrm{id}}:=\mathcal{H}(\mathsf{ctx}\Vert \bm{C}_{\mathrm{identify}})$ and $h_m:=\mathcal{H}(\mathsf{ctx}\Vert m\Vert \mathcal{H}(\bm{C}_m))$ are deterministically derived from the same call context and therefore need not be passed as separate inputs.
		
		Operationally, the prover materializes the fixed trace of Eq.~\eqref{eq:pvsc_trace}, decomposes the relevant ciphertext states into Double-CRT witnesses, forms the canonical ordered descriptor sequence and its digest $\rho_m$, and invokes the local proof templates in $\mathbf{R}_{\mathrm{PVSC}}$. The exact descriptor-level proof-generation rules are given in the Supplementary Material, Sec.~V. Throughout this process, ciphertext inputs and intermediate values remain private witnesses; the only released objects are the published ciphertext output $\bm{c}^{\mathrm{sim}}_m$ and the encrypted proof batch $\Pi_{\mathrm{enc}}^{(m)}$, whose opened form is verified against the public statement $\bm{x}^{(m)}$ of Eq.~\eqref{eq:pvsc_public_statement}.
	
	\subsubsection{PVSC Threshold-Opening Function}
	\label{sec:PVSC_open}
	
	The encrypted proof batch is next threshold-opened by the Decryption Committee. Concretely, $\Pi^{(m)} \leftarrow \mathsf{PVSC.ThresholdOpen}\big(\Pi_{\mathrm{enc}}^{(m)}, \{\mathsf{sk}_{\mathrm{open}}^{\ell}\}_{\ell\in S}, t\big)$ for any subset $S \subseteq \{1,\dots,n_{\mathrm{com}}\}$ with $|S|\ge t$, where $\Pi^{(m)}$ denotes the corresponding opened proof batch. The threshold-opening step preserves the logical structure of the proof families described above: it does not change the proved statement, but only transforms the encrypted designated-verifier proof object into an opened proof record that can be checked publicly. Any relay or aggregation party may collect and combine opening shares, but such a party is not trusted: malformed, reordered, or incomplete openings can at most cause opening failure or later rejection by the public verifier, but cannot by themselves make a false statement verify.
	
	\subsubsection{PVSC Public Verification Function}
	\label{sec:PVSC_ver}
	
		The smart contract verifies the opened proof batch against the full public statement. Let $\bm{x}^{(m)}$ be the tuple in Eq.~\eqref{eq:pvsc_public_statement}. The contract first parses the session-bound gallery snapshot from $\mathsf{ctx}$, checks that the submitted block index $m$ is valid for that snapshot, resolves the corresponding committed value $u_m$, and computes the expected block commitment $h_m^{\star}:=\mathcal{H}(\mathsf{ctx}\Vert m\Vert u_m)$. It rejects if the submitted $h_m$ differs from $h_m^{\star}$ or if the same block index has already been accepted in the current session.
		
		The contract then deterministically parses the opened batch $\Pi^{(m)}$, checks that its ordered descriptor sequence has the expected family counts and canonical family-major lexicographic order, and recomputes the sequence digest $\rho_m$. Only after these structural checks pass does it run $\mathsf{PVSC.Verify}\big(\bm{x}^{(m)}, \Pi^{(m)}\big)$ on the opened local proofs under the verifier-side material $\mathsf{vk}_{\mathrm{PVSC}}$ embedded in $\mathsf{pp}_{\mathrm{PVSC}} \subseteq \mathsf{pp}_{\mathrm{sys}}$. The concrete parser conditions are given in the Supplementary Material, Sec.~V. At the protocol level, this step prevents transcript replay, \textcolor{blue}{stale-data replay,} ciphertext substitution, and proof splicing across sessions or gallery blocks.\footnote{\textcolor{blue}{A replayed proof from an old session has a different $\mathsf{sid}$ through $\mathsf{ctx}$; a different query changes $h_{\mathrm{id}}$; a stale gallery block changes the recomputed $h_m^{\star}$; and a same-session duplicate is rejected because $m$ is already recorded. Message withholding is a liveness issue and cannot make a stale proof accepted.}}
		
		If all checks pass, the contract records index $m$ as covered. A session becomes eligible for threshold decryption only after every block index in the snapshotted encrypted gallery has been accepted exactly once. Acceptance therefore certifies only that the encrypted similarity output for block $m$ is correct for the committed ciphertext inputs, the indexed committed gallery block resolved from the session snapshot, the canonical descriptor sequence, and the session context. The later release and finalization of the application output remain threshold-governed protocol steps outside $\mathsf{PVSC.Verify}$ itself. The foregoing discussion presents the main-paper abstraction of PVSC; the complete descriptor-level instantiation, including local witness syntax and parser conditions, is given in the Supplementary Material, Sec.~V.

	\textcolor{blue}{\textit{Scope and extension.} The PVSC instantiation evaluated here certifies only the fixed SCMV similarity circuit of Eq.~\eqref{eq:bioidentify}, namely a depth-1 multiply--relinearize--accumulate trace; rotations, modulus switching, bootstrapping, encrypted top-$k$/threshold decisions, nonlinear layers, and unnormalized cosine-similarity circuits are outside the current proof domain. Supporting multi-layer BGV circuits would require layered trace descriptors that bind operation tags, layer identifiers, input/output state digests, ciphertext levels, modulus-chain states, and evaluation-key identifiers, together with new local relations for rotations/key switching, modulus switching, and bootstrapping. The parser would also need to enforce topological order and level consistency; Supplementary Material, Sec.~V gives the descriptor-level extension requirements.}

	\section{Theoretical Analysis}
	\label{sec:analysis}  
	
	This section analyzes BioZKFHE from three perspectives: parameter selection, decryption correctness, and system security. We first derive admissible choices of the SCMV parameters $T$ and $n$ that maximize packing density while preserving recoverability of the packed similarity scores. We then analyze the growth of BGV noise under the SCMV identification circuit to justify correct decryption. Finally, we formalize the confidentiality, encrypted-output integrity, and finalized-result integrity goals of BioZKFHE and relate them to concrete attack scenarios in the blockchain-assisted outsourced setting.
	
	\subsection{Parameter Optimization for SCMV}
	\label{sec:param_analysis}
	
	The efficiency of SCMV is governed by two coupled parameters: the base $T$ used to bind multiple similarity scores into one packed plaintext entry and the binding factor $n$ that determines how many scores are packed together. The parameter $T$ must be large enough to support unique recovery of the packed scores under centered base-$T$ decomposition, while $n$ should be as large as possible without causing plaintext wrap-around.
	
	\subsubsection{Optimal Base Selection ($T$)}
	
	We first determine a minimal admissible choice of $T$. Let $\bm{v}'\in\mathbb{R}^L$ denote a real-valued biometric embedding normalized to $\|\bm{v}'\|_2=\Gamma$, and let $\bm{v}\in\mathbb{Z}^L$ be its quantized version. The quantization error is $\Delta \bm{v}=\bm{v}-\bm{v}'$, where each component satisfies $\Delta v_i\in(-0.5,0.5]$.
	
	Since SCMV stores similarity scores as centered base-$T$ digits, unique recovery requires the digit range to contain every possible signed inner product. It is therefore sufficient to choose $T\ge 2D+1$, where $D:=\max_{\bm{v}_1,\bm{v}_2} |\bm{v}_1^{\top}\bm{v}_2|$ denotes the maximum absolute inner product between any two valid quantized embeddings.
	
	To bound $D$, we expand the inner product of two quantized vectors as
	\begin{equation}
		\begin{aligned}
			\bm{v}_1^{\top} \bm{v}_2
			&= (\bm{v}'_1 + \Delta \bm{v}_1)^{\top} (\bm{v}'_2 + \Delta \bm{v}_2) \\
			&= \bm{v}'^{\top}_1 \bm{v}'_2 + \bm{v}'^{\top}_1 \Delta \bm{v}_2
			+ \Delta \bm{v}^{\top}_1 \bm{v}'_2 + \Delta \bm{v}^{\top}_1 \Delta \bm{v}_2.
		\end{aligned}
	\end{equation}
	Using the Cauchy--Schwarz inequality together with $\|\bm{v}'\|_2=\Gamma$ and $\|\Delta \bm{v}\|_\infty\le 0.5$, the four terms are bounded as follows:
	\begin{itemize}
		\item $\bm{v}'^{\top}_1 \bm{v}'_2 \le \|\bm{v}'_1\|_2 \|\bm{v}'_2\|_2 = \Gamma^2$;
		\item $\bm{v}'^{\top}_1 \Delta \bm{v}_2 \le \|\bm{v}'_1\|_1 \|\Delta \bm{v}_2\|_\infty \le \sqrt{L}\|\bm{v}'_1\|_2 \cdot 0.5 = 0.5\Gamma\sqrt{L}$;
		\item similarly, $\Delta \bm{v}^{\top}_1 \bm{v}'_2 \le 0.5\Gamma\sqrt{L}$;
		\item $\Delta \bm{v}^{\top}_1 \Delta \bm{v}_2 \le \sum_{i=1}^{L} |\Delta v_{i,1}\Delta v_{i,2}| \le 0.25L$.
	\end{itemize}
	\textcolor{blue}{Since $D$ is integer, the maximum absolute inner product is bounded by $D \le \left\lfloor \Gamma^2 + \Gamma\sqrt{L} + 0.25L \right\rfloor$. Under this bound, the smallest safe choice of the SCMV base is}
	\begin{equation}
		\label{eq:T_opt}
		\textcolor{blue}{T = 2\left\lfloor \Gamma^2 + \Gamma\sqrt{L} + 0.25L \right\rfloor + 1.}
	\end{equation}
	Choosing the smallest admissible $T$ is optimal for packing because it maximizes the number of base-$T$ digits that fit within the plaintext modulus.
	
	\subsubsection{Optimal Binding Factor ($n$)}
	
	We next determine the largest admissible binding factor $n$ once $T$ is fixed. In SCMV, one packed plaintext entry stores $n$ signed similarity scores in base $T$. To avoid modular wrap-around in $\mathbb{Z}_p$, the absolute value of every packed entry must remain below $p/2$.
	
	In the worst case, each packed similarity score has magnitude at most $D$, so the absolute value of a packed entry is bounded by $D\sum_{t=1}^{n} T^{t-1} = D(T^n-1)/(T-1)$. Therefore, a sufficient correctness condition is
	\begin{equation}
		\label{eq:n_condition}
		D\,(T^n-1)/(T-1) < p/2.
	\end{equation}
	Rearranging Eq.~\eqref{eq:n_condition} yields
	\begin{equation}
		\label{eq:n_opt}
		n < \log_T\!\bigl(p(T-1)/(2D) + 1\bigr).
	\end{equation}
	Accordingly, the largest admissible integer binding factor is
	\begin{equation}
		\label{eq:n_max}
		n_{\max} = \left\lceil \log_T\!\bigl(p(T-1)/(2D) + 1\bigr) \right\rceil - 1.
	\end{equation}
	This expression makes explicit how the achievable SCMV packing density grows with the plaintext modulus $p$ and decreases with the similarity bound $D$.
	
	\subsection{Noise Analysis and Decryption Correctness}
	\label{sec:noise_analysis}
	
	The previous subsection ensures that the packed plaintext values fit within the message space. We now show that the chosen ciphertext modulus can also sustain the noise growth of the homomorphic identification circuit, so that decryption remains correct.
	
	\subsubsection{Noise Model}
	
	We measure noise using the canonical embedding norm $\|\cdot\|_{\mathrm{can}}$, which gives a convenient upper bound for cyclotomic rings~\cite{mathCanToolkit}. For a ciphertext $\bm{c}$ decrypting to a message under secret key $\bm{s}$, let $\bm{v}^{*}$ denote the effective decryption error term. Following the standard coarse correctness analysis for BGV~\cite{gentry2012homomorphic}, it suffices to ensure $\|\bm{v}^{*}\|_{\infty}<q/2$, and thus also $\|\bm{v}^{*}\|_{\mathrm{can}}<q/2$.
	
	We use the following standard BGV noise estimates~\cite{gentry2012homomorphic}:
	\begin{itemize}
		\item a fresh ciphertext has noise bound $B_{\text{fresh}} \approx 6p\sqrt{d}\,\sigma$, where $\sigma$ is the standard deviation of the error distribution;
		\item homomorphic addition satisfies $\|\bm{v}^{*}(\bm{c}_1\oplus\bm{c}_2)\|_{\mathrm{can}} \le \|\bm{v}^{*}(\bm{c}_1)\|_{\mathrm{can}} + \|\bm{v}^{*}(\bm{c}_2)\|_{\mathrm{can}}$;
		\item homomorphic multiplication satisfies $\|\bm{v}^{*}(\bm{c}_1\otimes \bm{c}_2)\|_{\mathrm{can}} \le \|\bm{v}^{*}(\bm{c}_1)\|_{\mathrm{can}} \cdot \|\bm{v}^{*}(\bm{c}_2)\|_{\mathrm{can}} \cdot \delta_{\text{exp}}$, where $\delta_{\text{exp}}$ is the ring-expansion factor. In our coarse bound, the additional noise from relinearization is absorbed into the same multiplicative constant.
	\end{itemize}
	
	
	\subsubsection{Noise Bound for SCMV Identification}
	
	We now apply these bounds to the SCMV identification circuit of Eq.~\eqref{eq:bioidentify}. The query ciphertexts in $\bm{C}_{\mathrm{identify}}$ are fresh, so their noise is bounded by $B_{\text{probe}}:=B_{\text{fresh}}$. For the database ciphertexts, two cases arise. If a database block is created directly from a fully packed plaintext row, then each ciphertext in that block is also fresh and has noise $B_{\text{fresh}}$. Under the worst-case online construction path of SCMV, however, a block may be populated incrementally through up to $nd$ ciphertext additions during enrollment updates. We therefore use the conservative bound $B_{\text{db}} \le nd\cdot B_{\text{fresh}}$ for each database ciphertext.
	
	Under this worst-case assumption, the noise of one ciphertext product $\bm{c}_i^{\mathrm{id}} \otimes \bm{c}_{i}^m$ is bounded by $B_{\text{mult}} \le B_{\text{probe}} \cdot B_{\text{db}} \cdot \delta_{\text{exp}} \le nd \cdot \delta_{\text{exp}} \cdot B_{\text{fresh}}^2$. Summing over the $L$ embedding dimensions in Eq.~\eqref{eq:bioidentify} gives the total noise bound
	\begin{equation}
		\label{eq:noise_total}
		B_{\text{total}} \le L \cdot nd \cdot \delta_{\text{exp}} \cdot B_{\text{fresh}}^2 \approx L \cdot nd \cdot \delta_{\text{exp}} \cdot (36p^2 d \sigma^2).
	\end{equation}
	Consequently, correct decryption is guaranteed whenever $q > 2B_{\text{total}}$. \textcolor{blue}{Successful BGV decryption is exact, so the digit-level perturbation is $\eta=0$; the SCMV margin is therefore $D<T/2$, or $D+\eta<T/2$ under a residual-error model.} Our experimental parameter sets satisfy this condition, ensuring that SCMV identification remains decryptable even under the conservative worst-case update-based noise bound above.
	
	\subsection{Security Analysis}
	\label{sec:security_analysis}
	
	We now analyze BioZKFHE under the threat model of Section~\ref{sec:threat_model}. The Compute Node may be fully malicious, the aggregation party that relays opening or decryption shares is untrusted, and the Decryption Committee is modeled through a $(t,n_{\mathrm{com}})$-threshold assumption. Our arguments rely on correct and deterministic execution of the deployed verifier, on the integrity of on-chain public commitments and committee confirmations, and on the narrower verification scope of PVSC established in Section~\ref{sec:PVSC_statement}: PVSC directly certifies only the correctness of the encrypted similarity output, whereas threshold decryption and result finalization remain protocol-level steps.
	
	\subsubsection{Game-Based Security Definitions}
	
	We formalize G1--G3 through three games. Throughout, $\lambda$ denotes the security parameter and $\mathcal{A}$ is a PPT adversary that may corrupt the Compute Node, the aggregation party, and at most $t-1$ committee members.
	
	\begin{definition}[Biometric Confidentiality Against Untrusted Compute and Sub-Threshold Collusion]
		\label{def:bio_conf}
		In the experiment $\mathsf{Exp}^{\mathrm{conf}}_{\mathcal{A}}(\lambda)$, the challenger runs setup and gives $\mathcal{A}$ all public parameters together with the shares of corrupted committee members. The adversary submits two enrollment sets of equal size and two candidate query embeddings such that the two challenge worlds induce the same public session structure and the same intentionally released application-level output transcript. The challenger samples $b\leftarrow\{0,1\}$, constructs the encrypted database $\mathcal{D}^{(b)}_{\mathrm{SCMV}}$, the encrypted query $\bm{C}^{(b)}_{\mathrm{identify}}$, the encrypted similarity outputs $\bm{c}^{(b)}_{\mathrm{sim}}=\{\bm{c}^{(b),m}_{\mathrm{sim}}\}_{m}$, the encrypted PVSC proof collection $\Pi_{\mathrm{enc}}^{(b)}=\{\Pi_{\mathrm{enc}}^{(b,m)}\}_{m}$, and the corresponding opened proof collection $\Pi^{(b)}=\{\Pi^{(b,m)}\}_{m}$. The adversary receives the released view $\mathsf{View}^{(b)}:=(\mathcal{D}^{(b)}_{\mathrm{SCMV}},\bm{C}^{(b)}_{\mathrm{identify}},\bm{c}^{(b)}_{\mathrm{sim}},\Pi_{\mathrm{enc}}^{(b)},\Pi^{(b)})$ together with the corresponding on-chain commitments and public session metadata, and outputs a bit $b'$. The experiment outputs $1$ iff $b'=b$. Define $\mathsf{Adv}^{\mathrm{conf}}_{\mathcal{A}}(\lambda):=|\Pr[b'=b]-1/2|$. BioZKFHE achieves this confidentiality notion if $\mathsf{Adv}^{\mathrm{conf}}_{\mathcal{A}}(\lambda)$ is negligible for all PPT adversaries $\mathcal{A}$.
	\end{definition}
	
		\begin{definition}[Encrypted Similarity-Output Integrity]
			\label{def:comp_int}
			In the experiment $\mathsf{Exp}^{\mathrm{enc\mbox{-}out\mbox{-}int}}_{\mathcal{A}}(\lambda)$, the adversary outputs a session context $\mathsf{ctx}$, a block index $m$, a query ciphertext $\bm{C}_{\mathrm{identify}}$, a candidate encrypted similarity output $\bm{c}^{*}_{\mathrm{sim}}$, an opened proof $\Pi^{*}$, and commitments $h_{\mathrm{id}}, h_m$. The challenger resolves from the gallery snapshot bound to $\mathsf{ctx}$ the unique committed block $\bm{C}_m$ at index $m$, checks that $h_{\mathrm{id}}=\mathcal{H}(\mathsf{ctx}\|\bm{C}_{\mathrm{identify}})$ and $h_m=\mathcal{H}(\mathsf{ctx}\|m\|\mathcal{H}(\bm{C}_m))$, parses the descriptor sequence contained in $\Pi^{*}$, and computes the induced batch-root digest $\rho^{*}_{m}$. Let $\bm{x}^{*}:=(\bm{c}^{*}_{\mathrm{sim}},m,\mathsf{ctx},h_{\mathrm{id}},h_m,\rho^{*}_{m},\mathsf{pp}_{\mathrm{sys}})$ be the corresponding public statement. If either the deterministic parser rejects $\Pi^{*}$ or $\mathsf{PVSC.Verify}(\bm{x}^{*},\Pi^{*})$ rejects, the experiment outputs $0$. Otherwise, the challenger recomputes the honest encrypted similarity output $\bm{c}^{\mathrm{hon}}_{\mathrm{sim}}:=\mathsf{SCMV.Identify}(\bm{C}_{\mathrm{identify}},\{\bm{C}_m\})$ and outputs $1$ iff $\bm{c}^{*}_{\mathrm{sim}}\neq \bm{c}^{\mathrm{hon}}_{\mathrm{sim}}$. BioZKFHE achieves encrypted similarity-output integrity if $\Pr[\mathsf{Exp}^{\mathrm{enc\mbox{-}out\mbox{-}int}}_{\mathcal{A}}(\lambda)=1]$ is negligible for all PPT adversaries $\mathcal{A}$.
		\end{definition}
	
		\begin{definition}[Finalized Result Integrity]
			\label{def:res_int}
			Let $\mathsf{Dec}_{\mathrm{app}}(\cdot)$ denote the deterministic application decision rule fixed by the session context, e.g., enrollment-time uniqueness checking or login-time identification. In the experiment $\mathsf{Exp}^{\mathrm{res\mbox{-}int}}_{\mathcal{A}}(\lambda)$, the adversary outputs a session context $\mathsf{ctx}$, a query ciphertext $\bm{C}_{\mathrm{identify}}$, a claimed blockwise encrypted similarity-output collection $\bm{c}^{*}_{\mathrm{sim}}:=\{\bm{c}^{*,m}_{\mathrm{sim}}\}_{m}$, a corresponding opened proof collection $\Pi^{*}:=\{\Pi^{*,m}\}_{m}$, a candidate final output $y^{*}$, and a set of purported committee confirmations $\Sigma^{*}$. The challenger computes $h_{\mathrm{id}}:=\mathcal{H}(\mathsf{ctx}\|\bm{C}_{\mathrm{identify}})$, checks that the submitted block indices cover every index in the snapshotted encrypted gallery exactly once, and for each block index $m$ resolves the unique committed block $\bm{C}_m$ from the gallery snapshot bound to $\mathsf{ctx}$. It then computes $h_m:=\mathcal{H}(\mathsf{ctx}\|m\|\mathcal{H}(\bm{C}_m))$, parses the descriptor sequence contained in $\Pi^{*,m}$ to obtain $\rho^{*,m}$, forms the corresponding public statement $\bm{x}^{*,m}:=(\bm{c}^{*,m}_{\mathrm{sim}},m,\mathsf{ctx},h_{\mathrm{id}},h_m,\rho^{*,m},\mathsf{pp}_{\mathrm{sys}})$, and outputs $0$ if any deterministic parsing or verification $\mathsf{PVSC.Verify}(\bm{x}^{*,m},\Pi^{*,m})$ rejects. The challenger next checks whether $\Sigma^{*}$ contains at least $t$ mutually consistent confirmations from distinct committee members recognized by the contract, all binding the same session and the same proof-verified encrypted similarity-output collection together with $\mathcal{H}(y^{*})$; otherwise it outputs $0$. Finally, it threshold-decrypts the proof-verified encrypted similarity-output collection ordered by block index, applies $\mathsf{Dec}_{\mathrm{app}}(\cdot)$ to the recovered similarity result, and outputs $1$ iff the resulting reference output differs from $y^{*}$. BioZKFHE achieves finalized-result integrity if $\Pr[\mathsf{Exp}^{\mathrm{res\mbox{-}int}}_{\mathcal{A}}(\lambda)=1]$ is negligible for all PPT adversaries $\mathcal{A}$.
		\end{definition}
	
	\subsubsection{Main Security Theorems}
	
	\begin{theorem}[Confidentiality]
		\label{thm:conf}
		Assume that (i) BGV under $(\mathsf{pk}_{\mathrm{HE}},\mathsf{sk}_{\mathrm{HE}})$ is IND-CPA secure under the standard Ring-LWE assumption, (ii) the threshold sharing used for $\mathsf{sk}_{\mathrm{HE}}$ and $\mathsf{sk}_{\mathrm{open}}$ reveals no information given fewer than $t$ shares, (iii) the encrypted proof objects output by PVSC are computationally hiding prior to threshold opening, and (iv) \textcolor{blue}{the opened PVSC proof batches are zero knowledge with respect to their private witnesses conditioned on the public statements and canonical descriptors}. Then for any PPT adversary $\mathcal{A}$, the advantage $\mathsf{Adv}^{\mathrm{conf}}_{\mathcal{A}}(\lambda)$ is negligible.
	\end{theorem}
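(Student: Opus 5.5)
We argue by a sequence of hybrid games that starts from $\mathsf{Exp}^{\mathrm{conf}}_{\mathcal{A}}(\lambda)$ with challenge bit $b$ and ends in a game whose view does not depend on $b$. Each transition is bounded by one of assumptions (i)--(iv). The view $\mathsf{View}^{(b)}$ has five components: $\mathcal{D}^{(b)}_{\mathrm{SCMV}}$, $\bm{C}^{(b)}_{\mathrm{identify}}$, $\bm{c}^{(b)}_{\mathrm{sim}}$, $\Pi_{\mathrm{enc}}^{(b)}$, and $\Pi^{(b)}$. It also contains commitments $h_{\mathrm{id}},h_m,u_m,\rho_m$, which are hashes of these ciphertexts or of public descriptors, plus public metadata. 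Since the two worlds share the same public session structure, the descriptor sequences, and hence the family counts and the $\rho_m$ layouts, have identical shape. They depend only on $(N,L,n,d,n_q,n_{\mathrm{chunk}})$, and every commitment is a deterministic function of the ciphertext components. It therefore suffices to replace the ciphertexts and proofs by $b$-independent objects, recomputing the commitments from the replaced objects.

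\textbf{Hybrid $H_1$ (simulate opened proofs).} Replace each opened batch $\Pi^{(b,m)}$ by the output of the zero-knowledge simulator from assumption (iv). The simulator receives only the public statement $\bm{x}^{(m)}$ and the canonical descriptors, and never uses the witness $\bm{w}^{(m)}$, which contains $\bm{C}_{\mathrm{identify}}$, $\bm{C}_m$, the products $\bm{p}^m_i$ and the accumulators $\bm{a}^m_i$. A standard hybrid over the $M$ blocks bounds the distance by $M\cdot\mathsf{Adv}^{\mathrm{zk}}$.

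\textbf{Hybrid $H_2$ (replace encrypted proofs).} Replace each $\Pi_{\mathrm{enc}}^{(b,m)}$ by an encryption of a fixed dummy proof of the same length. Assumption (ii) says the $t-1$ corrupted shares of $\mathsf{sk}_{\mathrm{open}}$ are information-theoretically independent of the master key, so the reduction can sample them uniformly and embed the hiding challenge from assumption (iii).

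\textbf{Hybrid $H_3$ (switch ciphertexts).} Replace every BGV ciphertext by an encryption of zero. This covers the gallery rows $\bm{c}^m_i$, the query rows $\bm{c}^{\mathrm{id}}_i$, and the outputs $\bm{c}^{\mathrm{sim}}_m$. By assumption (ii) the corrupted shares of $\mathsf{sk}_{\mathrm{HE}}$ are again simulatable independently of the key. IND-CPA security of BGV under Ring-LWE (assumption (i)) then applies through a hybrid over the $ML+L$ fresh ciphertexts. In $H_3$ the view is independent of $b$, so the advantage is at most the sum of the negligible losses.

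The main obstacle is the ordering and the handling of $\bm{c}^{\mathrm{sim}}_m$ in $H_3$. The outputs are not fresh ciphertexts; they are homomorphic evaluations of the database and query ciphertexts. In an honest execution the reduction can compute them from its IND-CPA challenge ciphertexts using only public evaluation keys, so their distribution is a function of the challenged ciphertexts. Proofs, however, cannot be generated without the witness, which is why $H_1$ and $H_2$ must come first. After those hybrids, the proofs depend on the ciphertexts only through the statement, and the statement is computable from public data. One further subtlety is circular security of the relinearization keys, which encrypt functions of $\mathsf{sk}_{\mathrm{HE}}$. I would absorb this into assumption (i) by the usual convention that IND-CPA is taken with evaluation keys published. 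I would also check that the restriction in Definition~\ref{def:bio_conf}, namely the same released application output, is only needed because authorized decryption lies outside the view. Since the released transcript is identical in both worlds, it is sampled identically in all hybrids.
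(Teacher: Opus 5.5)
Your hybrid sequence is essentially the paper's own proof sketch, with more care about share simulation and published relinearization keys: simulate the opened batches via (iv), replace the encrypted proof objects using (iii) together with sub-threshold secrecy of the $\mathsf{sk}_{\mathrm{open}}$ shares, then switch the BGV ciphertexts by IND-CPA while recomputing $\bm{c}^{\mathrm{sim}}_m$ and all commitments from the switched ciphertexts. One correction: your reason for putting $H_1,H_2$ first does not hold, because every PVSC witness component (input ciphertexts, relinearized products, accumulators, Double-CRT residues) is computable from the challenge ciphertexts and public keys, and the IND-CPA reduction can run $\mathsf{PVSC.Setup}$ and hold $\mathsf{sk}_{\mathrm{open}}$ itself, so those hybrids are a convenience rather than a necessity.
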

	
	\begin{proof}[Proof sketch]
		We use a hybrid argument. First replace the opened proof batches with simulated opened proofs; by zero knowledge, this changes the adversary's view only negligibly. Next replace the encrypted proof objects with simulated ciphertext-domain proof objects; by their hiding property and the secrecy of the threshold opening material below threshold, this again changes the view only negligibly. Finally, replace the encrypted database, encrypted query, and encrypted similarity outputs for $b=0$ with those for $b=1$; by IND-CPA security of BGV, the resulting views are computationally indistinguishable. Hence the adversary's distinguishing advantage is negligible.
		
		This theorem captures confidentiality against the untrusted Compute Node, public observers, and any coalition of fewer than $t$ committee members. It does not claim secrecy against an authorized quorum that participates in threshold decryption of the similarity result.
	\end{proof}
	
		\begin{theorem}[Encrypted Similarity-Output Integrity]
			\label{thm:int}
			Assume that (i) every instantiated local relation in $\mathbf{R}_{\mathrm{PVSC}}$ is sound under the adopted designated-verifier proving backend after threshold opening, (ii) the deterministic parser of Section~\ref{sec:PVSC_ver} accepts only complete, canonically ordered, and hash-consistent descriptor sequences, (iii) the commitment hash $\mathcal{H}$ is collision resistant, and (iv) the smart contract executes the parser and $\mathsf{PVSC.Verify}$ correctly and deterministically on immutable on-chain commitments and resolves the unique committed block for each submitted index from the gallery snapshot bound to $\mathsf{ctx}$. Then for any PPT adversary $\mathcal{A}$, the probability $\Pr[\mathsf{Exp}^{\mathrm{enc\mbox{-}out\mbox{-}int}}_{\mathcal{A}}(\lambda)=1]$ is negligible.
		\end{theorem}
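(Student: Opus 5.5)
The plan is a reduction argument that splits a winning execution of $\mathsf{Exp}^{\mathrm{enc\mbox{-}out\mbox{-}int}}_{\mathcal{A}}(\lambda)$ into disjoint failure events. Each event is then charged to exactly one of assumptions (i)--(iv). Fix a PPT $\mathcal{A}$ and condition on the experiment outputting $1$. Then the parser accepted $\Pi^{*}$, $\mathsf{PVSC.Verify}(\bm{x}^{*},\Pi^{*})$ accepted, the commitments $h_{\mathrm{id}},h_m$ passed the challenger's recomputation, and $\bm{c}^{*}_{\mathrm{sim}}\neq\bm{c}^{\mathrm{hon}}_{\mathrm{sim}}$. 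By assumption (iv), the block $\bm{C}_m$ used by the challenger is the unique committed block at index $m$ in the snapshot bound to $\mathsf{ctx}$. Hence $h_m$ is a deterministic function of $(\mathsf{ctx},m,\bm{C}_m)$ and $h_{\mathrm{id}}$ is a deterministic function of $(\mathsf{ctx},\bm{C}_{\mathrm{identify}})$.

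\textbf{Step 1: witness extraction.} By assumption (ii), an accepted $\Pi^{*}$ has the complete canonical family-major descriptor sequence: $K_m$ counted instances of $\mathcal{R}^{\mathrm{mul}}_j,\mathcal{R}^{\mathrm{acc}}_j,\mathcal{R}^{\mathrm{out}}_j$ over all $(j,\text{chunk},i)$, plus one $\mathcal{R}^{\mathrm{bind}}$ instance, with digest $\rho^{*}_m$. By assumption (i), each accepted local instance has a valid local witness except with negligible probability. I would make this precise with a knowledge extractor, or alternatively by arguing that a false local statement is accepted only with negligible probability. A union bound over the polynomially many $K_m+1$ instances keeps the total error negligible.

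\textbf{Step 2: gluing local witnesses into one trace.} This is the main obstacle. The local witnesses must be shown to assemble into a single globally consistent trace satisfying Eq.~\eqref{eq:pvsc_trace}, rather than a patchwork of locally valid but mutually inconsistent pieces. The plan is to use the descriptors, which name the shared intermediate states $\bm{p}^m_i,\bm{a}^m_i$ through state digests, and which are bound into $\rho^{*}_m$ and checked by $\mathcal{R}^{\mathrm{bind}}$. Under collision resistance of $\mathcal{H}$, assumption (iii), any two local instances that reference the same digest must use the same ring element limb. Otherwise we output a collision. This yields, for each RNS limb $q^{(j)}$ and NTT chunk, coordinate values that agree across the multiplication, accumulation and output families. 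CRT and inverse-NTT bijectivity then reconstruct unique elements of $\mathcal{R}_q$ forming $\bm{p}^m_i=\mathsf{Relin}(\bm{c}^{\mathrm{id}}_i\otimes\bm{c}^m_i)$, $\bm{a}^m_i=\bm{a}^m_{i-1}\oplus\bm{p}^m_i$, and $\bm{a}^m_L=\bm{c}^{*}_{\mathrm{sim}}$. Completeness of the chunk cover, from assumption (ii), guarantees that every coordinate is constrained.

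\textbf{Step 3: binding to the inputs.} The $\mathcal{R}^{\mathrm{bind}}$ instance ties the witnessed inputs to $(h_{\mathrm{id}},h_m)$. If the witnessed query or block ciphertext differed from $\bm{C}_{\mathrm{identify}}$ or $\bm{C}_m$ while hashing to the same commitments, we obtain a collision in $\mathcal{H}$. This covers the nested $u_m=\mathcal{H}(\bm{C}_m)$ as well. Thus the extracted trace runs on the honest inputs. Because the trace of Eq.~\eqref{eq:pvsc_trace} is deterministic, its terminal value equals $\mathsf{SCMV.Identify}(\bm{C}_{\mathrm{identify}},\{\bm{C}_m\})=\bm{c}^{\mathrm{hon}}_{\mathrm{sim}}$. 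This contradicts $\bm{c}^{*}_{\mathrm{sim}}\neq\bm{c}^{\mathrm{hon}}_{\mathrm{sim}}$.

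\textbf{Step 4: summing the bound.} Collecting the cases gives $\Pr[\mathsf{Exp}=1]\le (K_m+1)\varepsilon_{\mathrm{snd}}+\varepsilon_{\mathrm{cr}}+\mathrm{negl}(\lambda)$, which is negligible. Here $\varepsilon_{\mathrm{snd}}$ is the per-instance soundness error from assumption (i) and $\varepsilon_{\mathrm{cr}}$ is the collision-finding advantage against $\mathcal{H}$ from assumption (iii). The delicate point is Step 2. If the descriptors do not carry state digests, I would instead argue consistency directly from the public-input sharing across local statements: the output instance's public input is $\bm{c}^{*}_{\mathrm{sim}}$ itself, and the accumulator transitions chain their public in/out values. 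In that case collision resistance would be used only in Step 3.
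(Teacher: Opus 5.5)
Your proposal is correct and follows the paper's own route. Acceptance forces a complete, hash-consistent descriptor sequence, after which a win requires either an accepted false local statement (your Step 1, the paper's case (a)) or an altered input, block, or descriptor sequence that still matches $(h_{\mathrm{id}},h_m,\rho^{*}_m)$, i.e.\ a collision in $\mathcal{H}$ (your Steps 2--3, the paper's case (b)); your bound is exactly the $M=1$ case of Corollary~\ref{cor:session_soundness}.

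Your Step 2 only makes explicit the gluing the paper leaves implicit, with two cautions. First, commit to the knowledge-extractor reading of assumption (i), since plain soundness of digest-carrying local statements never hands you the colliding preimages. Second, CRT/NTT bijectivity covers the coordinate-wise tensor and accumulation steps but not relinearization, whose key-switching decomposition couples all NTT coordinates and RNS limbs; that piece rests, in your argument as in the paper's, on taking the local family to realize $\mathsf{Rel}_{\mathrm{PVSC}}$ by design.
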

		
		\begin{proof}[Proof sketch]
			Suppose that $\mathcal{A}$ wins with non-negligible probability. Then the contract accepts $\Pi^{*}$ for the public statement $\bm{x}^{*}$, yet the published ciphertext $\bm{c}^{*}_{\mathrm{sim}}$ is not the honest encrypted similarity output for the committed query input and the unique committed gallery block at the submitted index $m$ in the snapshot bound to $\mathsf{ctx}$. Since acceptance requires successful deterministic parsing, the opened batch yields a complete and hash-consistent descriptor sequence whose digest equals $\rho_m^{*}$. Hence the only ways for $\mathcal{A}$ to win are: (a) some false local arithmetic or block-binding statement is accepted, contradicting local soundness; or (b) the adversary changes the underlying ciphertext inputs, the descriptor sequence, or the indexed gallery block while preserving $(h_{\mathrm{id}},h_m,\rho_m^{*})$, which implies a collision in $\mathcal{H}$. Therefore the winning probability must be negligible.
			
			\textcolor{blue}{This argument bounds one block; session-level accumulation is stated in Corollary~\ref{cor:session_soundness}. We rely on Ring-LWE for BGV, the Module-LWE/Module-SIS-type proof-backend assumptions of~\cite{ishai2021dvzksnark}, and collision resistance of $\mathcal{H}$; see Supplementary Material, Sec.~V.}
		\end{proof}

		\begin{corollary}[Session-level soundness]
			\label{cor:session_soundness}
			\textcolor{blue}{Let $\epsilon_{\mathrm{loc}}$ be the worst-case soundness error of one opened local proof under the Module-LWE/Module-SIS-type proof backend of~\cite{ishai2021dvzksnark}, and let $\epsilon_{\mathcal{H}}$ be the collision advantage of $\mathcal{H}$. Under the assumptions of Theorem~\ref{thm:int}, a query session over $M=\lceil N/(nd)\rceil$ gallery blocks, each carrying $K_m=n_qn_{\mathrm{chunk}}(2L-1)$ counted arithmetic/output instances plus one block-binding instance ($K_m+1$ verified instances per block), satisfies}
			\begin{equation}
				\label{eq:session_soundness_main}
				\textcolor{blue}{\epsilon_{\mathrm{sess}}\;\le\;M(K_m+1)\,\epsilon_{\mathrm{loc}}+\epsilon_{\mathcal{H}},}
			\end{equation}
			\textcolor{blue}{since the deterministic parser contributes no probabilistic error. Hence retaining a target session-soundness level requires the local proof parameter to absorb the additive union-bound loss $\lceil\log_2(M(K_m+1))\rceil$ bits; the representative accounting is tabulated in Supplementary Material, Sec.~V.}
		\end{corollary}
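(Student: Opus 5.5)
The bound follows from a union bound over bad events. The argument from the proof sketch of Theorem~\ref{thm:int} is simply lifted from one block to the whole session.

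\textbf{Step 1: define the session-level bad event.} A session is unsound if the contract marks every block index $m\in\{1,\dots,M\}$ as covered exactly once, yet some accepted $\bm{c}^{*,m}_{\mathrm{sim}}$ differs from $\mathsf{SCMV.Identify}(\bm{C}_{\mathrm{identify}},\{\bm{C}_m\})$ for the committed query and the snapshot block at index $m$.

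\textbf{Step 2: split into two cases.} Following the case split of Theorem~\ref{thm:int}, either
\begin{enumerate}
\item[(a)] some opened local instance among the session's verified instances is accepted although its local statement is false; or
\item[(b)] every accepted local statement is true, but the adversary has altered a ciphertext input, the descriptor sequence, or the indexed block while keeping the submitted $(h_{\mathrm{id}},h_m,\rho_m)$ unchanged.
\end{enumerate}

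\textbf{Step 3: bound case (a).} The deterministic parser enforces the canonical family counts, so each accepted block batch contains exactly $K_m+1$ instances. Here $K_m=n_qn_{\mathrm{chunk}}(2L-1)$ comes from the three counted layers $n_qn_{\mathrm{chunk}}L$, $n_qn_{\mathrm{chunk}}(L-2)$ and $n_qn_{\mathrm{chunk}}$, and the extra one is the block-binding instance. Because the contract accepts each index exactly once, a session has exactly $M(K_m+1)$ verified instances. Each one is accepted while false with probability at most $\epsilon_{\mathrm{loc}}$ by assumption (i) of Theorem~\ref{thm:int}, so case (a) has probability at most $M(K_m+1)\epsilon_{\mathrm{loc}}$.

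\textbf{Step 4: bound case (b).} All blocks share the same $\mathcal{H}$. I would build a single collision-finding reduction that runs the session adversary and outputs whichever pair of distinct preimages is found under one common hash value. This gives a total contribution of $\epsilon_{\mathcal{H}}$ rather than $M\epsilon_{\mathcal{H}}$. The parser is deterministic (assumption (ii)), so it adds no further error term.

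\textbf{Step 5: conclude correctness under the complement event.} When neither (a) nor (b) occurs, every accepted local statement is true and the hash bindings hold. The local checks then compose into a globally consistent trace satisfying Eq.~\eqref{eq:pvsc_trace}, with multiplication, accumulation and output relations chained through the descriptor order fixed by $\rho_m$. Hence each $\bm{c}^{*,m}_{\mathrm{sim}}$ equals the honest output, which yields Eq.~\eqref{eq:session_soundness_main}.

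\textbf{Step 6: bit-security accounting.} If the target session soundness is $2^{-\kappa}$, it suffices to take $\epsilon_{\mathrm{loc}}\le 2^{-\kappa-\lceil\log_2(M(K_m+1))\rceil}$, ignoring the $\epsilon_{\mathcal{H}}$ term. This is exactly the claimed loss of $\lceil\log_2(M(K_m+1))\rceil$ bits.

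\textbf{Main obstacle.} Step 5 is where the argument needs care. One must argue that soundness of the individual local Double-CRT relations, together with the canonical descriptor ordering, really implies correctness of the full ciphertext trace, so that no inconsistent splicing survives. I would lean on the parser guarantee (ii) and on CRT uniqueness across the $n_q$ moduli: once all limbs and NTT chunks agree, the ring element modulo $q$ is uniquely determined.

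A second point to check is that the $\epsilon_{\mathcal{H}}$ term is not multiplied by $M$. Using a single reduction handles this, but if that step fails, the term should be stated as $\epsilon_{\mathcal{H}}$ for a $\mathcal{H}$ with a polynomial number of queries.
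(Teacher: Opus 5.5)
Your proposal is correct and follows the paper's own argument. An accepted false session forces either a falsely accepted local instance or a hash-binding collision, so a union bound over the $M(K_m+1)$ accepted instances (the $MK_m$ counted instances plus the $M$ binding instances) together with a single collision term gives the bound. Your extra steps, namely the single collision-finding reduction across all blocks, the composition of local checks into one consistent trace, and the explicit bit-loss calculation, spell out details that the paper's one-line sketch leaves to Theorem~\ref{thm:int} and the supplementary accounting.
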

		\begin{proof}[Proof sketch]
			\textcolor{blue}{Any accepted false session forces either a false accepted local instance or a hash-binding collision, so the claim follows from a union bound over the $M(K_m+1)$ accepted local instances (the $MK_m$ counted arithmetic/output instances together with the $M$ per-block $\mathcal{R}^{\mathrm{bind}}$ instances) and the collision term.}
		\end{proof}
	
		\begin{theorem}[Finalized Result Integrity]
			\label{thm:resint}
			Assume that (i) Theorem~\ref{thm:int} holds, (ii) committee confirmations recorded on-chain are member-authenticated and cannot be forged for honest committee members, (iii) fewer than $t$ committee members are corrupted, and (iv) each honest committee member issues a confirmation only after locally deriving the application output from the complete proof-verified encrypted similarity-output collection ordered by block index. Then for any PPT adversary $\mathcal{A}$, the probability $\Pr[\mathsf{Exp}^{\mathrm{res\mbox{-}int}}_{\mathcal{A}}(\lambda)=1]$ is negligible.
		\end{theorem}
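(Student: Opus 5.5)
The argument is a reduction. Fix a PPT adversary $\mathcal{A}$ that wins $\mathsf{Exp}^{\mathrm{res\mbox{-}int}}_{\mathcal{A}}(\lambda)$ with probability $\epsilon$. I would partition the winning runs into three events and show each has negligible probability.

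\emph{Event $E_1$.} Some submitted block output $\bm{c}^{*,m}_{\mathrm{sim}}$ is accepted even though it differs from $\mathsf{SCMV.Identify}(\bm{C}_{\mathrm{identify}},\{\bm{C}_m\})$ for the committed query and the snapshot block at index $m$.

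\emph{Event $E_2$.} Every accepted block output is honest, but at least one confirmation in $\Sigma^{*}$ attributed to an honest member binds the tuple $(\mathsf{ctx},\bm{c}^{*}_{\mathrm{sim}},\mathcal{H}(y^{*}))$ that this member never issued.

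\emph{Event $E_3$.} Neither $E_1$ nor $E_2$ occurs, yet $y^{*}$ differs from the reference output.

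For $E_1$, I would build a reduction $\mathcal{B}$ against Theorem~\ref{thm:int}. $\mathcal{B}$ runs $\mathcal{A}$, guesses an index $m$ uniformly among the $M$ snapshotted blocks (or simply scans all $m$), and outputs $(\mathsf{ctx},m,\bm{C}_{\mathrm{identify}},\bm{c}^{*,m}_{\mathrm{sim}},\Pi^{*,m},h_{\mathrm{id}},h_m)$. The challenger of $\mathsf{Exp}^{\mathrm{res\mbox{-}int}}$ has already checked coverage, parsing, and $\mathsf{PVSC.Verify}$ for every block, so this tuple passes the checks of Definition~\ref{def:comp_int}. Hence $\Pr[E_1]\le M\cdot\mathsf{Adv}^{\mathrm{enc\mbox{-}out\mbox{-}int}}$. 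Corollary~\ref{cor:session_soundness} gives the concrete bound $M(K_m+1)\epsilon_{\mathrm{loc}}+\epsilon_{\mathcal{H}}$ directly. Since $M$ is polynomial, this is negligible.

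For $E_2$, assumption (ii) gives a reduction to unforgeability of member authentication. $\mathcal{B}'$ simulates the honest members' confirmation oracles, and a fresh confirmation attributed to an honest member is a forgery. The union over at most $n_{\mathrm{com}}$ members keeps the loss polynomial.

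For $E_3$, I would argue combinatorially. The contract requires at least $t$ distinct recognized members, and by assumption (iii) at most $t-1$ are corrupt. So at least one honest member $P$ has a genuine confirmation binding $\mathcal{H}(y^{*})$ and the same verified collection $\bm{c}^{*}_{\mathrm{sim}}$. By assumption (iv), $P$ derived its output as $\mathsf{Dec}_{\mathrm{app}}(\mathsf{SCMV.Decrypt}(\bm{c}^{*}_{\mathrm{sim}}))$ from the complete collection ordered by block index. The challenger computes its reference output the same way, from the same collection in the same order.

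The reference output and $P$'s output then coincide provided two conditions hold. First, threshold decryption is correct: no corrupted share can alter a correct combination, which I take as part of the threshold-decryption abstraction assumed earlier. Second, decryption is exact: by the noise analysis of Section~\ref{sec:noise_analysis} ($q>2B_{\mathrm{total}}$), and because $E_1$ fails, the collection consists of honest ciphertexts, so decryption yields the exact packed values. The base-$T$ recovery is then unambiguous under $T\ge 2D+1$. Consequently $\mathcal{H}(y^{*})=\mathcal{H}(y_P)$. If $y^{*}\neq y_P$, this is a collision in $\mathcal{H}$, which adds at most $\epsilon_{\mathcal{H}}$. Thus $\Pr[E_3]$ is negligible.

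The main obstacle is $E_3$: making rigorous that honest members' derived outputs agree with the challenger's reference despite a malicious aggregator supplying corrupted decryption shares. I would handle this by invoking the robustness/verifiability of the threshold decryption from~\cite{bendlin2010threshold}, modeling each honest member as recombining shares itself or verifying them, so that misaggregation causes only liveness failure. I would state this explicitly as a consequence of assumption (iv), and then sum the three bounds.
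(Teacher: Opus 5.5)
Your proposal is correct and follows the same route as the paper's sketch. Theorem~\ref{thm:int} makes every accepted block output correct, the threshold count forces an honest member into any accepted confirmation set, and assumption (iv) ties that member's confirmed output to the challenger's reference. Your extra pieces only make explicit what the sketch leaves implicit in assumptions (ii) and (iv): the event of a forged honest confirmation, the collision event on $\mathcal{H}(y^{*})$, and the caveat that threshold decryption must be robust and exact for the honest member and the challenger to agree.
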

		
		\begin{proof}[Proof sketch]
			Assume that $\mathcal{A}$ wins with non-negligible probability. Since Definition~\ref{def:res_int} requires successful deterministic parsing and public verification of every opened proof against the corresponding encrypted similarity output and also requires complete coverage of the snapshotted gallery block indices, Theorem~\ref{thm:int} implies that the accepted encrypted similarity-output collection is correct except with negligible probability. Since fewer than $t$ committee members are corrupted, any accepted confirmation set of size at least $t$ contains at least one honest member. By assumption, an honest member confirms only the output obtained by locally threshold-decrypting the complete proof-verified encrypted similarity-output collection ordered by block index and applying $\mathsf{Dec}_{\mathrm{app}}(\cdot)$ to the recovered similarity result. Hence the finalized output cannot differ from the reference output except with negligible probability.
		\end{proof}

	\subsubsection{Resistance to Specific Attacks}
	
		Replay and substitution attacks are prevented by the session-bound public statement together with the descriptor-sequence binding. Each PVSC instance is tied to a unique context $\mathsf{ctx}$ containing a fresh session identifier $\mathsf{sid}$, and the smart contract rejects reused or expired sessions. The on-chain commitments $h_{\mathrm{id}}$ and $h_m$ bind the proof to the exact query ciphertext and to the indexed committed gallery block resolved from the session snapshot, while the digest $\rho_m$ binds the entire ordered local-proof descriptor sequence. Hence replaying an old proof, silently substituting $\bm{C}_{\mathrm{identify}}$ or $\bm{C}_m$, splicing local proofs from another batch, or moving a valid proof across sessions makes the public statement inconsistent unless the adversary finds a collision in $\mathcal{H}$ or breaks PVSC soundness. Omission, duplication, and reordering attacks are additionally prevented by the deterministic parser and the contract-side coverage rule: the contract records which indices have already been accepted in the current session and enables threshold decryption only after every gallery-block index has been accepted exactly once.
	
	A malicious aggregation party may still misorder, withhold, or misaggregate proof-opening shares or result-decryption shares, and may attempt to mix a valid proof-verified encrypted similarity-output collection with an incorrect plaintext application output. On the proof side, such behavior can at most cause opening failure or later public-verification failure, because the smart contract verifies the opened proofs directly. On the result side, the contract accepts an output only after receiving at least $t$ mutually consistent committee confirmations bound to the same proof-verified encrypted similarity-output collection. Since fewer than $t$ committee members may be corrupted, any accepted confirmation set contains at least one honest member, and honest members confirm only outputs obtained by locally deriving the application result from the proof-verified encrypted similarity-output collection. Therefore malicious aggregation may affect liveness, but it cannot by itself cause an incorrect plaintext result to be finalized on-chain, as formalized by Definition~\ref{def:res_int} and Theorem~\ref{thm:resint}.

	\section{Experimental Evaluation}
	\label{sec:experiments and analysis}
	
	We evaluate four claims: integer quantization preserves biometric utility, SCMV improves encrypted $1{:}N$ matching scalability, PVSC makes encrypted-output integrity practical, and the integrated proof-verified pipeline remains feasible. Unless explicitly stated otherwise, verification results refer only to the opened-proof layer, and latency excludes face detection and neural feature extraction and is averaged over 100 randomly selected query images.
	
	\subsection{Experimental Methodology and Setup}
	\label{setup}
	
	\subsubsection{Hardware and implementation}
	
	BioZKFHE was implemented on Microsoft SEAL~\cite{sealcrypto}. \textcolor{blue}{All encrypted matching, database-operation, and PVSC parallelization benchmarks were executed on the same Linux server with dual Intel Xeon Platinum 8352V CPUs (2.10~GHz, 36 cores/72 threads per socket), 60~GB RAM, and Ubuntu~22.04.\footnote{\textcolor{blue}{Artifact repository containing implementation code, benchmark scripts, plotting scripts, and configuration files: \url{https://github.com/plan-lab-szu/BioZKFHE}.}}} Unless stated otherwise, ``matching latency'' refers to $\mathsf{SCMV.Identify}$ only and excludes client-side feature extraction, whereas ``verification-side cost'' refers to threshold opening plus public verification of the opened proofs.

	\textcolor{blue}{For the smart-contract layer, we implemented a Solidity~0.8.24/Hardhat~3.7.0 prototype (optimizer 200, EDR simulated L1) for session binding, opened-proof headers, coverage tracking, committee confirmations, finalization, and the opened-QAP verifier kernels used by $\mathsf{PVSC.Verify}$. These measurements cover contract-side enforcement and verifier-kernel execution for opened local proof instances. They do not treat all uncompressed PVSC proof payloads as a gas-optimized full-batch L1 submission; instead, full-batch size and scaling are accounted for in the proof-footprint and session-scaling results. Costs are 31{,}649--114{,}799 gas for state-management functions and 50{,}980--63{,}830 gas for QAP verifier kernels, with details in Supplementary Sec.~V.}

	\subsubsection{Datasets, feature extraction, and evaluation usage}
	
	We constructed the evaluation pool from two public face datasets, LFW and CASIA-WebFace. Each image was first processed by MTCNN for face detection and then converted to a fixed-dimensional embedding using either FaceNet~\cite{facenet} or MobileFaceNet~\cite{mobilefacenets}, producing $L=128$ and $L=512$ embeddings, respectively. The feature extractors are trained on VGGFace2 and kept fixed during evaluation; no additional fine-tuning is performed on LFW or CASIA-WebFace. Each embedding was then $L_2$-normalized, scaled by a quantization factor $\Gamma$, and rounded to integers before encryption, following the SCMV preparation procedure in Section~\ref{sec:operation}.
	
	For biometric evaluation, gallery and query templates are drawn from disjoint images of each retained identity: if an identity has at least two usable images, one image is used as the gallery template and the remaining images are used as query templates. In the encrypted experiments, BioZKFHE stores one enrolled template per user, so $N$ denotes both the gallery template count and the enrolled user count. The same identity set and gallery/query partition are used for the floating-point and quantized evaluations.
	
	\subsubsection{Evaluation protocol and reporting conventions}
	
	We report biometric utility and encrypted-system cost. Utility metrics are top-1 identification accuracy, EER, and FAR/FRR at the operating threshold of the corresponding model and quantization setting. System metrics include encrypted matching latency, encrypted storage, PVSC proof-generation time, public-verification time on the opened proof collection, and integrated pipeline runtime. Here, ``public verification'' denotes verifier execution on the opened PVSC proof batch; \textcolor{blue}{contract-side gas, calldata, and storage measurements are reported separately for the session-binding, coverage, confirmation, and finalization logic.} Since prior systems are unavailable under matched implementations and parameters, the primary same-platform reference point is the $n=1$ configuration of the same codebase, which preserves the same row-wise encrypted matching pipeline while disabling SCMV multi-value binding; Supplementary Sec.~VI reports an indicative cross-paper comparison for context only.
	
	\subsection{Utility Under Feature Quantization}
	\label{accuracy}
	
	We first test whether the integer embedding used by SCMV preserves the biometric utility needed for downstream identification. To isolate this effect, we compare floating-point and quantized pipelines on the same identity set and gallery/query partition.
	
	Fig.~\ref{fig:accuracy_tradeoff} reports the identification-accuracy loss with respect to the floating-point baseline together with the corresponding EER under different quantization factors $\Gamma$, and Table~\ref{tab:compressed_performance} lists representative absolute values. For both FaceNet ($L=128$) and MobileFaceNet ($L=512$), most degradation is concentrated at very small $\Gamma$ and quickly becomes negligible as $\Gamma$ increases. We therefore use $\Gamma=128$ for FaceNet and $\Gamma=512$ for MobileFaceNet in the subsequent encrypted experiments. At these operating points, BioZKFHE achieves 99.47\% and 98.30\% identification accuracy, respectively, while the corresponding EERs remain essentially unchanged from the floating-point baselines.
	
	In short, both models remain within the 1\% accuracy-loss tolerance once $\Gamma$ is sufficiently large, supporting the claim that integer-domain matching is effectively utility preserving in BioZKFHE.
	
	\begin{figure}[!t]
		\centering
		\includegraphics[width=0.9\linewidth]{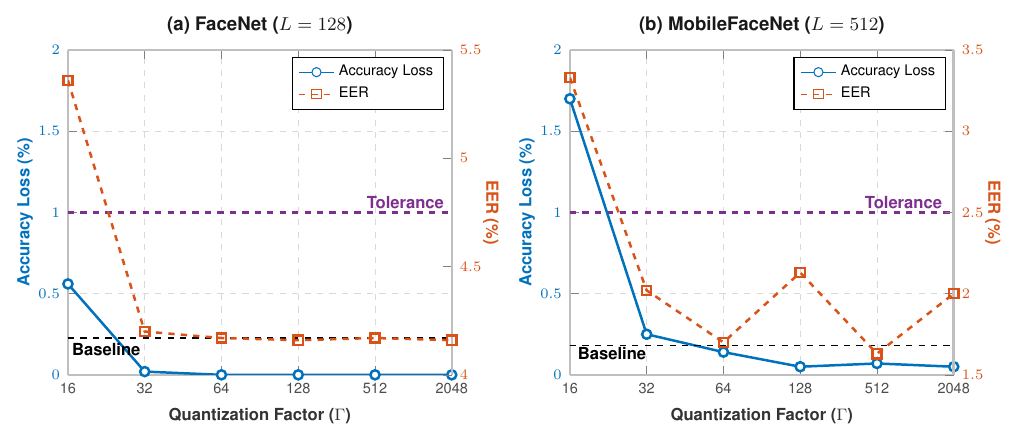}
		\caption{Utility under integer quantization: accuracy loss and EER versus $\Gamma$.}
		\label{fig:accuracy_tradeoff}
	\end{figure}
	
	\begin{table}[!t]
		\centering
		\caption{Identification performance under representative quantization factors $\Gamma$.}
		\label{tab:compressed_performance}
		\scriptsize
		\setlength{\tabcolsep}{2.5pt}
		\begin{tabular}{c c cccc}
			\toprule
			\textbf{Model} & \textbf{$\Gamma$} & \textbf{Acc. (\%)} & \textbf{EER (\%)} & \textbf{FAR (\%)} & \textbf{FRR (\%)} \\
			\midrule
			\multirow{3}{*}{\shortstack{\textbf{FaceNet} \\ ($L=128$)}}
			& Cleartext & 99.47 & 4.17 & 0.20 & 5.76 \\
			& 16        & 98.91 & 5.36 & 0.36 & 12.87 \\
			& 128       & 99.47 & 4.16 & 0.20 & 5.84 \\
			\midrule
			\multirow{3}{*}{\shortstack{\textbf{MobileFaceNet} \\ ($L=512$)}}
			& Cleartext & 98.37 & 1.68 & 1.80 & 1.47 \\
			& 16        & 96.67 & 3.33 & 2.97 & 3.70 \\
			& 512       & 98.30 & 1.63 & 1.77 & 1.83 \\
			\bottomrule
		\end{tabular}
	\end{table}
	
	\subsection{Encrypted Matching Scalability and Storage Cost}
	\label{efficiency}
	
	We next test whether SCMV delivers the predicted structural gain as the gallery size $N$ grows. Since the same encrypted matching primitive underlies enrollment-time uniqueness checking and login-time identification, these results characterize the common encrypted matching layer. Using the matched-codebase $n=1$ setting as the same-platform reference point, we examine the key SCMV claim: one block covers $nd$ templates, so both encrypted storage and processed-block count should scale approximately with $\lceil N/(nd)\rceil$.
	
	We use $\Gamma=128$ for FaceNet and $\Gamma=512$ for MobileFaceNet. In the reported implementation, the corresponding SCMV bases are set to $T=35729$ and $T=547715$, respectively. We use polynomial degree $d=8192$ throughout the scalability experiments. For a given SCMV binding factor $n$, the plaintext modulus $p$ is selected to satisfy the recoverability condition while preserving a sufficient noise margin. In our implementation, $p$ is set to a 20-bit, 31-bit, and 46-bit plaintext modulus for FaceNet with $n=1,2,3$, respectively. The corresponding ciphertext moduli are represented in RNS form with coefficient-modulus decompositions 110-bit $\{30,25,25,30\}$, 150-bit $\{45,30,30,45\}$, and 170-bit $\{45,40,40,45\}$. \textcolor{blue}{For MobileFaceNet, $p$ is set to a 21-bit and 40-bit plaintext modulus for $n=1,2$, with 120-bit $\{30,30,30,30\}$ and 160-bit $\{45,35,35,45\}$ coefficient-modulus decompositions, respectively.}
	
	In $\mathsf{SCMV.Identify}$, each similarity score is computed using one ciphertext--ciphertext multiplication per embedding dimension, and the resulting products are then accumulated through ciphertext additions. Since SCMV avoids query-time rotations and the current evaluation path has multiplicative depth $1$, all experiments are carried out at a fixed modulus level and do not invoke modulus switching.
	
	Fig.~\ref{fig:scalability} summarizes the resulting scalability trends. As expected, both matching latency and encrypted storage grow approximately linearly with the number of SCMV database blocks. Increasing the binding factor $n$ reduces the number of encrypted blocks that must be stored and scanned, which directly lowers both latency and storage.
	
	\textcolor{blue}{The per-$n$ security/performance trade-off is reported in Supplementary Material, Sec.~IV. At fixed $d=8192$, the smallest margins are about 135 bits for FaceNet with $n=3$ and about 147 bits for MobileFaceNet with $n=2$ (rough Core-SVP lower-bound estimates~\cite{albrecht2015concrete,dualattack2}, reported to more digits in the supplementary table only for reproducibility); meanwhile, at $N=10^6$, the corresponding block counts drop from 123 to 41 and from 123 to 62, respectively.}
	
These results support the main SCMV claim: relative to the same-platform reference point $n=1$, increasing the binding factor reduces both ciphertext expansion and the number of processed encrypted blocks without changing the logical matching rule. \textcolor{blue}{Supplementary Sec.~IV separates horizontal packing, the row-wise vertical baseline, and SCMV. At $N=100$k, the FaceNet horizontal baseline uses 1{,}563 blocks and 10{,}941 rotations, whereas SCMV with $n=3$ uses 5 blocks, no online rotations, 543.4~ms latency, and 188.51~MiB storage; for MobileFaceNet, SCMV with $n=2$ reduces the scan from 6{,}250 horizontal blocks to 7 SCMV blocks. The same section reports the per-$n$ security/performance table and lattice-estimator settings.} An indicative cross-paper comparison is provided in Sec.~VI of the Supplementary Material, but our primary quantitative reference remains the matched-codebase $n=1$ baseline because implementation environments and evaluation protocols differ substantially across systems.
	
	\begin{figure}[!t]
		\centering
		\includegraphics[width=0.9\linewidth]{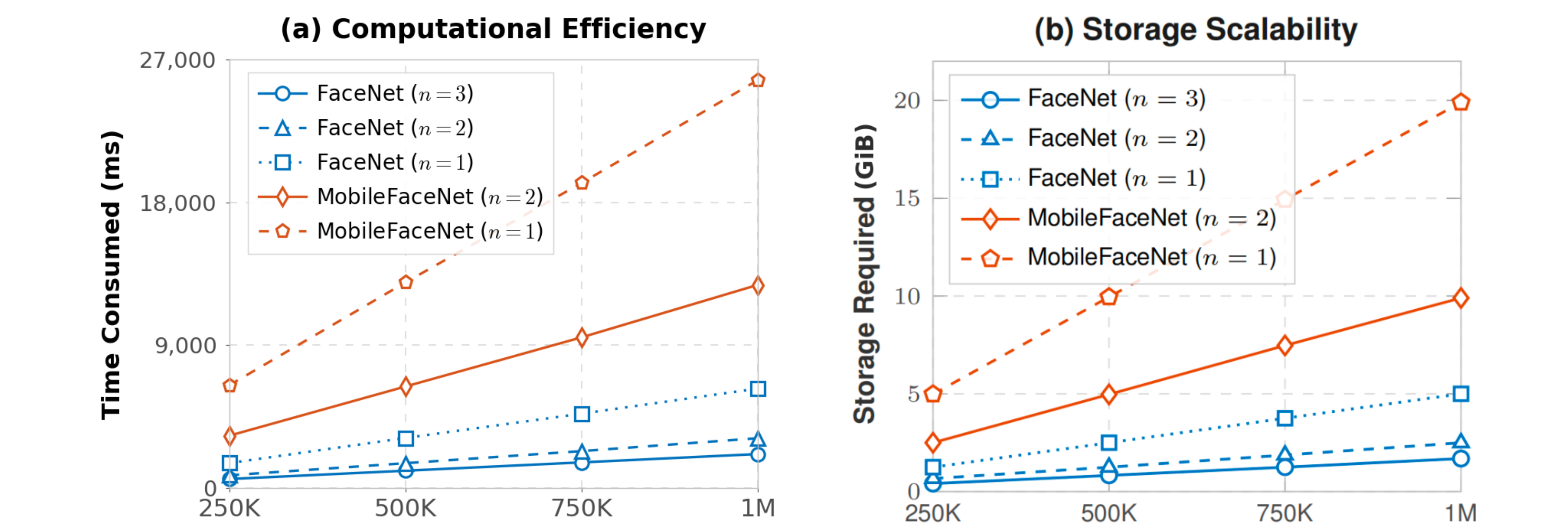}
		\caption{SCMV scalability: latency and encrypted storage versus database size $N$.}
		\label{fig:scalability}
	\end{figure}
	
	\subsection{PVSC Overhead for Encrypted-Output Integrity}
	\label{sec:pvsc_eval}
	
	We now test whether PVSC makes encrypted-output integrity practical at the ciphertext-output layer. Here, ``public verification'' refers to execution of the verifier logic on the \emph{opened} proof batch after committee-side threshold opening, rather than to blockchain-specific gas, calldata, or state-storage cost. Accordingly, the results below characterize protocol-level public verifiability rather than deployment-optimized contract efficiency. 
	
	
	\subsubsection{Microbenchmark of decomposed proof templates}
	
	Fig.~\ref{fig:pvsc_benchmarks} reports the proof-generation and public-verification time for decomposed BGV addition and multiplication templates under three chunk sizes, $d_{\mathrm{ZK}} \in \{128,256,512\}$. \textcolor{blue}{For both operations, the proof-generation cost grows approximately linearly with $d_{\mathrm{ZK}}$: for example, the multiplication-template cost increases from 46.9~ms at $d_{\mathrm{ZK}}=128$ to 170.8~ms at $d_{\mathrm{ZK}}=512$. In contrast, the public-verification time remains small, ranging from 0.85 to 1.47~ms across all tested settings.} This behavior matches the intended decomposition trade-off: smaller chunk sizes reduce the cost of an individual proof instance but increase the number of instances, whereas larger chunk sizes reduce fragmentation at the cost of more expensive instances.
	
	\subsubsection{Parallel proof generation}
	
	Fig.~\ref{fig:pvec_efficiency} compares single-process and multi-process proof generation for three representative workloads, testing whether Double-CRT decomposition exposes enough parallelism to offset PVSC cost. \textcolor{blue}{The benefit is substantial and consistent across all tested cases: the total proof-generation time decreases from 2074~s to 17~s for the 128-dimensional workload, from 5988~s to 49~s for the 256-dimensional workload, and from 10655~s to 85~s for the 512-dimensional workload, corresponding to about $122\times$, $122\times$, and $125\times$ speedups, respectively.} \textcolor{blue}{The multi-process configuration uses $128$ concurrent proof-generation processes mapped onto the server's $144$ hardware threads, so these speedups stay below the deployed parallelism.} Public verification remains below 1.1~s and is omitted from Fig.~\ref{fig:pvec_efficiency} for readability.
	
	The key takeaway is not the exact speedup factor of one platform, but that the Double-CRT decomposition exposes many weakly dependent proof tasks, making proof generation highly parallelizable. It also determines footprint: encrypted/opened proof size and verifier-side public input are linear in \textcolor{blue}{$K_m = n_q n_{\mathrm{chunk}}(2L-1)$} per block, while committee-side opening communication is additionally linear in $t$. Since each block submission adds only one output ciphertext and a constant-size session-binding tuple, the proof object dominates the non-ciphertext footprint.
	
	\begin{figure}[t]
		\centering
		\includegraphics[width=0.9\linewidth]{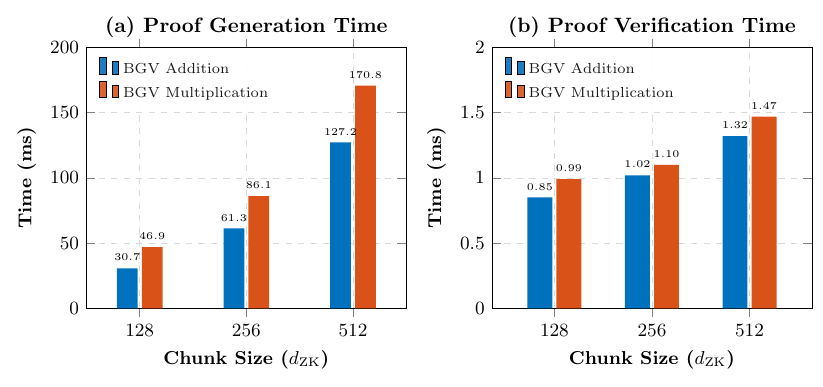}
		\caption{PVSC microbenchmarks for decomposed proof templates under different chunk sizes $d_{\mathrm{ZK}}$.}
		\label{fig:pvsc_benchmarks}
	\end{figure}
	
	\begin{figure}[!t]
		\centering
		\includegraphics[width=0.75\linewidth]{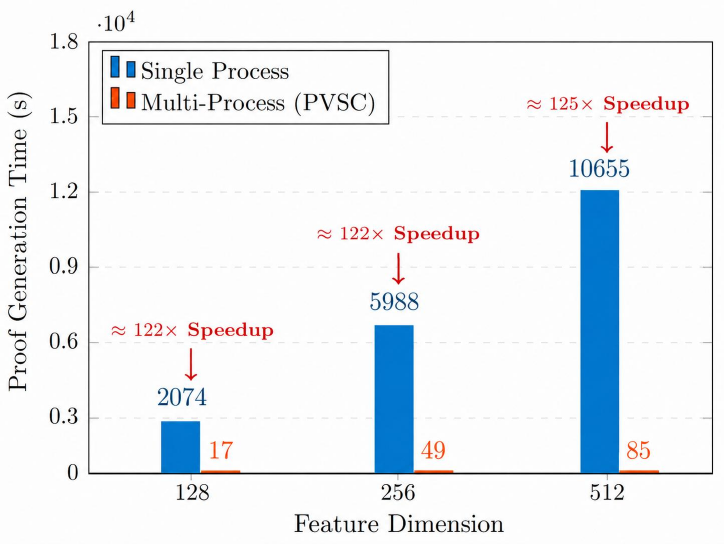}
		\caption{PVSC proof-generation time under single-process and multi-process execution for representative workloads.}
		\label{fig:pvec_efficiency}
	\end{figure}
	
	\subsubsection{Proof footprint and session accumulation}
	\label{sec:comm_cost}
	
	\textcolor{blue}{BioZKFHE's verification footprint is driven by opened local proofs. With chunk size $d_{\mathrm{ZK}}$, one block has $n_{\mathrm{chunk}}=\lceil d/d_{\mathrm{ZK}}\rceil$ chunks per RNS limb and $K_m=n_q n_{\mathrm{chunk}}(2L-1)$ counted arithmetic/output instances: $n_qn_{\mathrm{chunk}}L$ multiplication, $n_qn_{\mathrm{chunk}}(L-2)$ accumulation, and $n_qn_{\mathrm{chunk}}$ output instances. The $L-2$ term reflects that $\bm{a}_1=\bm{p}_1$ is fused into multiplication and $\bm{a}_L=\bm{c}^{\mathrm{sim}}_m$ is certified by the output layer (Supplementary Material, Sec.~V). The full opened batch adds one block-binding instance, so it has $K_m+1$ verified instances, while a session over $N$ templates has $M=\lceil N/(nd)\rceil$ blocks, $MK_m$ counted arithmetic/output instances, and $M$ block-binding instances.}
	
	\begin{table}[!t]
		\centering
		{\color{blue}
		\caption{Representative PVSC proof footprint at $d_{\mathrm{ZK}}=512$ under the prototype $n_q=3$ proof-channel setting. Session sizes use $N=100$k.}
		\label{tab:pvsc_footprint_summary}
		\scriptsize
		\setlength{\tabcolsep}{2.0pt}
		\renewcommand{\arraystretch}{1.06}
		\begin{tabular}{lrrrrr}
			\toprule
			\textbf{Profile} & \textbf{$M$} & \textbf{$K_m$} & \textbf{Enc./blk} & \textbf{Open/blk} & \textbf{Open/session} \\
			\midrule
			FaceNet $n=1$ & 13 & 12{,}240 & 668.07~MiB & 5.04~MiB & 65.56~MiB \\
			FaceNet $n=3$ & 5 & 12{,}240 & 896.67~MiB & 8.12~MiB & 40.62~MiB \\
			MobileFaceNet $n=1$ & 13 & 49{,}104 & 3.22~GiB & 19.48~MiB & 253.25~MiB \\
			MobileFaceNet $n=2$ & 7 & 49{,}104 & 3.51~GiB & 32.59~MiB & 228.15~MiB \\
			\bottomrule
		\end{tabular}
		}
		\vspace{-4pt}
	\end{table}

\begin{figure}[!t]
	\centering
	\includegraphics[width=0.9\linewidth]{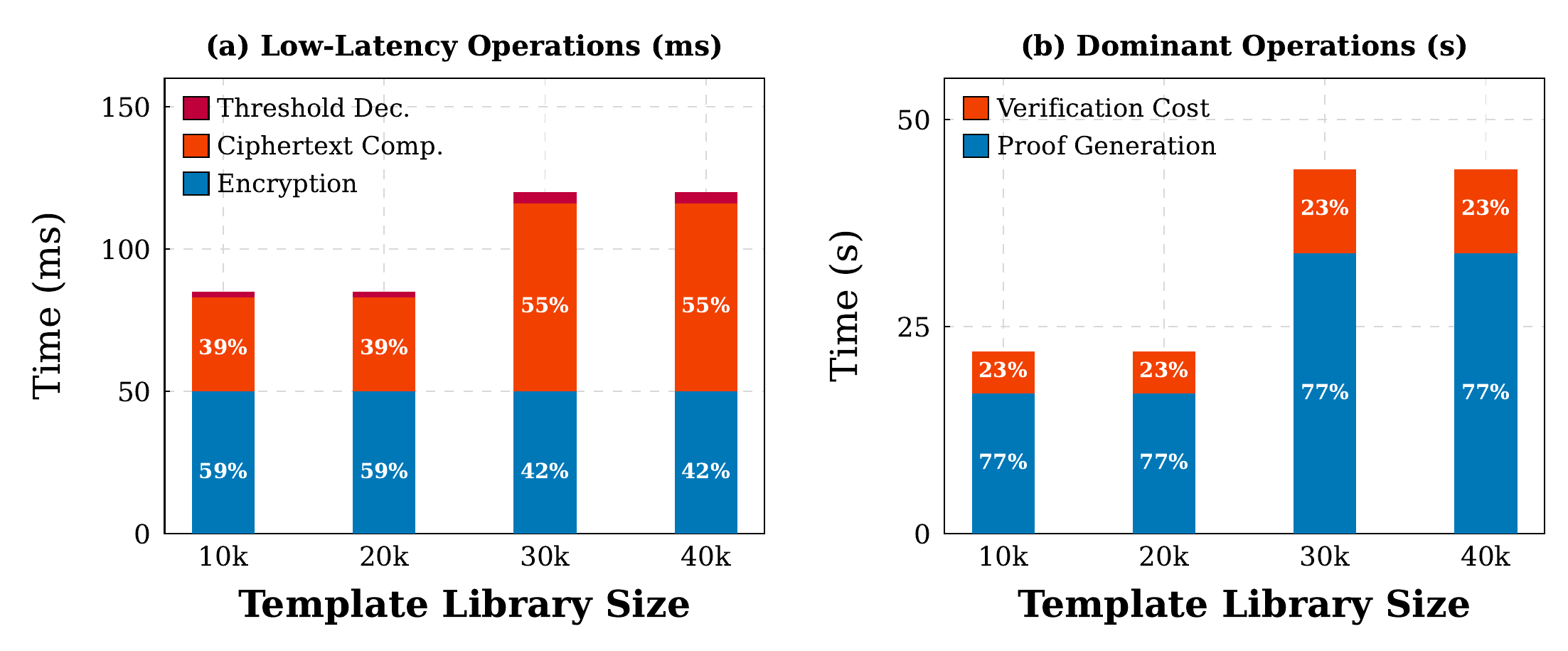}
	\caption{Integrated overhead breakdown versus template-library size.}
	\label{fig:end_to_end_test}
\end{figure}
	
	\textcolor{blue}{Table~\ref{tab:pvsc_footprint_summary} reports verified proof-object struct sizes for the implemented $n_q=3$ prototype. The soundness accounting in Supplementary Material, Sec.~V uses the conservative four-limb count $n_q=4$; because proof count and footprint are linear in $n_q$, a four-limb deployment scales by about $4/3$ from the measured rows. Since $K_m$ depends only on $(n_q,n_{\mathrm{chunk}},L)$, profiles differing only in the SCMV binding factor $n$ share $K_m$ but differ in $M$ and ciphertext parameters. Detailed rows for $d_{\mathrm{ZK}}\in\{128,256,512\}$, public-input size, and committee-opening communication appear in Supplementary Material, Sec.~V.}
	
	\subsection{Integrated Pipeline Overhead Breakdown}
	
	We next evaluate the end-to-end proof-verified identification pipeline, combining encrypted matching, opened-proof verification, and threshold recovery of the similarity result.
	
	\textcolor{blue}{Fig.~\ref{fig:end_to_end_test} reports the integrated pipeline under FaceNet with $n=3$, $d=8192$, $d_{\mathrm{ZK}}=512$, and the same threshold setting as Table~\ref{tab:pvsc_footprint_summary}.} Fig.~\ref{fig:end_to_end_test}(a) shows that the low-latency stages remain small across all tested settings: encryption stays around 50~ms, ciphertext computation rises from about 33~ms to 66~ms as the template library grows from 10k to 40k, and committee-side recovery remains minor. \textcolor{blue}{In contrast, Fig.~\ref{fig:end_to_end_test}(b) shows that proof-related costs dominate: the total dominant proof-related runtime is about 22~s for 10k--20k templates and about 44~s for 30k--40k templates, with proof generation contributing roughly 77\% and verification-side costs about 23\%.} These results support feasibility at 10k--40k scale while identifying proof generation as the main remaining bottleneck.

	\section{Conclusion}
	\label{sec:conclusion}
	
	BioZKFHE combines SCMV for efficient encrypted $1{:}N$ matching with PVSC for proof-verified outsourced similarity evaluation, thereby supporting scalable \textcolor{blue}{threshold-opened auditability for biometric identification}. Experiments on FaceNet and MobileFaceNet show near-lossless utility, up to 67\% reduction in encrypted storage, and feasible end-to-end execution at 10k--40k scale with \textcolor{blue}{about 22--44~s runtime}.
	
	In BioZKFHE, auditability is intentionally separated from threshold-governed release: PVSC certifies the encrypted similarity outputs accepted after threshold opening, whereas final decryption and decision release remain committee-governed. Within this model, BioZKFHE protects templates, queries, and intermediate homomorphic states against the Compute Node, public observers, and any coalition of fewer than $t$ committee members, while the main remaining opportunities lie in lower-footprint proving, smaller proof/public-input footprints, and lighter threshold-release mechanisms.

	\section*{Acknowledgment}
	The authors would like to thank the editor and the anonymous reviewers for their valuable comments and suggestions, which helped improve the quality of this paper. The work of T.~Wang, X.~Wu, and S.~Zhang is supported by Solo Research Limited.
	\vspace{-8pt}

	\def\BIBdecl{\setlength{\itemsep}{-1.5pt}\setlength{\parsep}{0pt}}
\input{main.bbl}\input{supplementary.tex}

\end{document}

%% file: supplementary.tex
\clearpage
\begingroup
\makeatletter
\renewcommand{\@biblabel}[1]{[s#1]}
\makeatother
\renewcommand{\citeform}[1]{s#1}
\def\BIBdecl{\relax}
\setcounter{section}{0}
\setcounter{subsection}{0}
\setcounter{subsubsection}{0}
\setcounter{equation}{0}
\setcounter{figure}{0}
\setcounter{table}{0}
\renewcommand{\theHsection}{supp.\arabic{section}}
\renewcommand{\theHsubsection}{supp.\arabic{section}.\arabic{subsection}}
\renewcommand{\theHequation}{supp.\arabic{equation}}
\renewcommand{\theHfigure}{supp.\arabic{figure}}
\renewcommand{\theHtable}{supp.\arabic{table}}
\markboth{Supplementary Material for BioZKFHE}{}
\twocolumn[
\begin{center}
{\LARGE Supplementary Material for\\[2pt]
``BioZKFHE: Scalable Encrypted Biometric Identification via Verifiable Homomorphic Similarity Evaluation''\par}
\vspace{1.2em}
{\large Rundong~Xin, Taotao~Wang, \IEEEmembership{Member,~IEEE}, Xiaoxiao~Wu, \IEEEmembership{Senior~Member,~IEEE},\\
Weizhi~Meng, \IEEEmembership{Senior~Member,~IEEE}, Shengli~Zhang, \IEEEmembership{Senior~Member,~IEEE}, and Shui~Yu, \IEEEmembership{Fellow,~IEEE}\par}
\end{center}
\vspace{1.5em}
]
\thispagestyle{plain}

This supplementary material provides self-contained cryptographic background for the main paper. Its purpose is to help readers from biometric systems, applied cryptography, and secure computation follow the technical design of BioZKFHE without assuming prior familiarity with fully homomorphic encryption, lattice-based zero-knowledge proofs, or threshold cryptography. The presentation is aligned with the notation and interfaces used in the main text, and focuses on the specific cryptographic abstractions needed for SCMV and PVSC rather than on the most general formulations of the underlying primitives. Besides the cryptographic background in Secs.~\ref{supp:app:bgv}--\ref{supp:app:td}, this supplement also includes a structural comparison with conventional packing layouts in Sec.~\ref{supp:sec:SCMV_comparison}, the concrete descriptor-level PVSC formulation in Sec.~\ref{supp:supp:pvsc_formulation}, and an indicative cross-paper comparison in Sec.~\ref{supp:supp:comparison}.

\section{BGV Fully Homomorphic Encryption}
\label{supp:app:bgv}

Fully Homomorphic Encryption (FHE) is a foundational technology in post-quantum cryptography because it simultaneously offers security based on hard lattice problems and supports direct computation on ciphertexts without decryption. By enabling addition and multiplication on encrypted data, FHE resolves the traditional tension between data utility and privacy protection. In BioZKFHE, we adopt the Brakerski--Gentry--Vaikuntanathan (BGV) scheme~\cite{supp:bgv2014} as the underlying FHE layer. BGV is built from the Ring-Learning With Errors (RLWE) assumption and, in practical implementations, relies on batching, relinearization, and the Double-CRT representation to make homomorphic computation efficient~\cite{supp:mathCanToolkit,supp:halevi2019improved,supp:homencstandard2024,supp:sealcrypto}. This appendix provides the BGV background needed by the main paper. In particular, we introduce the core interfaces of the adopted BGV model, namely encoding/decoding, key generation, encryption/decryption, and homomorphic addition/multiplication, and then explain how the Double-CRT representation supports efficient evaluation.

\textbf{Mathematical Notations and Structures}: All operations are performed over the cyclotomic ring $\mathcal{R} := \mathbb{Z}[x]/(x^d+1)$, where $d$ is a power of two. The plaintext and ciphertext rings are $\mathcal{R}_p := \mathcal{R}/p\mathcal{R}$ and $\mathcal{R}_q := \mathcal{R}/q\mathcal{R}$, respectively, where $p$ is the plaintext modulus and $q$ is the ciphertext modulus, with $p \ll q$. Equivalently, one may write $\mathcal{R}_p \cong \mathbb{Z}_p[x]/\langle x^d+1\rangle$ and $\mathcal{R}_q \cong \mathbb{Z}_q[x]/\langle x^d+1\rangle$. In the main paper, the ciphertext modulus is further decomposed as $q=\prod_{j=1}^{n_q} q^{(j)}$, where the moduli $\{q^{(j)}\}_{j=1}^{n_q}$ are pairwise co-prime. The discrete Gaussian distribution over $\mathbb{Z}$ with standard deviation $\sigma$ is denoted by $D_{\mathbb{Z},\sigma}$. A BGV ciphertext is written as $\bm{c}=(c^{(0)},c^{(1)})\in\mathcal{R}_q^2$, the public key as $\mathsf{pk}_{\mathrm{HE}}$, the secret key as $\mathsf{sk}_{\mathrm{HE}}$, and the public evaluation-key set as $\mathsf{evk}_{\mathrm{HE}}$. Throughout the supplementary material and the main paper, the symbols $\oplus$ and $\otimes$ denote BGV homomorphic addition and multiplication, respectively.

\subsection{Encoding and Decoding}

This part introduces the encoding and decoding interfaces of the BGV scheme, which convert between raw message vectors and plaintext polynomials.

\textbf{Encoding}: In the batched form of BGV, a raw message is represented by a slot vector $\bm{m}=(m_1,\dots,m_d)\in\mathbb{Z}_p^d$. When $p \equiv 1 \pmod{2d}$, the polynomial $x^d+1$ splits completely over $\mathbb{Z}_p$, so there exist pairwise distinct roots $\xi_1,\dots,\xi_d\in\mathbb{Z}_p$ such that $x^d+1=\prod_{j=1}^{d}(x-\xi_j)$ over $\mathbb{Z}_p$. This induces a CRT isomorphism $\mathcal{R}_p \cong \prod_{j=1}^{d}\mathbb{Z}_p$.

The encoding operation maps the slot vector $\bm{m}$ to the unique plaintext polynomial $v(x)\in\mathcal{R}_p$ of degree smaller than $d$ satisfying $v(\xi_j)=m_j$ for all $j\in\{1,\dots,d\}$. Equivalently, one may write 
\begin{equation}
v(x)=\mathsf{BGV.Encode}(\bm{m})=\sum_{j=1}^{d} m_j\,\ell_j(x),
\end{equation}
where $\ell_j(x)\in\mathcal{R}_p$ is the $j$-th Lagrange basis polynomial associated with the roots $\{\xi_1,\dots,\xi_d\}$. Thus, encoding packs $d$ plaintext slots into one plaintext polynomial.

\textbf{Decoding}: The decoding operation is the inverse of encoding. Given a plaintext polynomial $v(x)\in\mathcal{R}_p$, the decoder recovers the slot vector by evaluating $v(x)$ at the CRT points:
\begin{equation}
\bm{m}=\mathsf{BGV.Decode}(v)=\bigl(v(\xi_1),v(\xi_2),\dots,v(\xi_d)\bigr).
\end{equation}
Hence encoding and decoding are inverse CRT maps between the coefficient representation of a plaintext polynomial and its slot representation.

\textbf{Batch Processing}: The above encoding/decoding mechanism is usually referred to as batching. It enables a single ciphertext to carry $d$ plaintext slots and lets homomorphic operations act on all slots in parallel. This SIMD capability is the basis for efficient encrypted matching in BioZKFHE. The SCMV design in the main paper starts from this batching interface, but further exploits the representational range of each plaintext entry by binding multiple values into one entry through base-$T$ expansion.

\subsection{Key Generation, Encryption and Decryption}

This part presents the key-generation, encryption, and decryption interfaces. Before introducing them, we define polynomial addition, multiplication, and scalar multiplication in the ciphertext ring. Given $a(x)=\sum_{i=0}^{d-1} a_i x^i \in \mathcal{R}_q$, $b(x)=\sum_{i=0}^{d-1} b_i x^i \in \mathcal{R}_q$, and a scalar $k\in\mathbb{Z}_q$, these operations are
\begin{equation}
\label{supp:eq:supp-ring-ops}
\begin{aligned}
a(x)+b(x) &= \sum_{i=0}^{d-1}(a_i+b_i)x^i,\\
a(x)\cdot b(x) &= \left(\sum_{i=0}^{d-1}\sum_{j=0}^{d-1} a_i b_j x^{i+j}\right)\bmod (x^d+1,\;q),\\
k\cdot a(x) &= \sum_{i=0}^{d-1}(k a_i)x^i,
\end{aligned}
\end{equation}
where polynomial multiplication is followed by reduction modulo $x^d+1$, and all coefficients are reduced modulo $q$.

\textbf{Key Generation}: The BGV scheme first generates a public/secret key pair together with public evaluation keys. We write the key-generation interface as
\begin{equation}
\label{supp:eq:supp-keygen}
(\mathsf{pk}_{\mathrm{HE}},\mathsf{sk}_{\mathrm{HE}},\mathsf{evk}_{\mathrm{HE}})
\leftarrow
\mathsf{BGV.KeyGen}(1^\lambda,d,p,q).
\end{equation}
The secret key is a small polynomial $s(x)\in\mathcal{R}_q$, usually sampled coefficientwise from a small set such as $\{-1,0,1\}$. The public key is a pair $\mathsf{pk}_{\mathrm{HE}}=(a(x),b(x))$, where $a(x)\leftarrow\mathcal{R}_q$ is sampled uniformly and
\begin{equation}
\label{supp:eq:supp-pk}
b(x)=-a(x)\cdot s(x)+p\,e_{\mathrm{pk}}(x)\pmod q,
\end{equation}
with $e_{\mathrm{pk}}(x)\in\mathcal{R}_q$ sampled from a small-error distribution such as $D_{\mathbb{Z},\sigma}$. In the current BioZKFHE evaluation path, the only evaluation key actually used is the public relinearization key set, written in the main paper as $\mathsf{evk}_{\mathrm{HE}}=\{\mathsf{rk}_{\mathrm{HE}}\}$.

\textbf{Encryption}: The encryption function takes a plaintext polynomial $v(x)\in\mathcal{R}_p$ and the public key $\mathsf{pk}_{\mathrm{HE}}=(a(x),b(x))$ as input, and outputs a ciphertext $\bm{c}=(c^{(0)}(x),c^{(1)}(x))\in\mathcal{R}_q^2$. We write this as
\begin{equation}
\label{supp:eq:supp-enc-interface}
\bm{c}=\mathsf{BGV.Encrypt}(v,\mathsf{pk}_{\mathrm{HE}}).
\end{equation}
To encrypt, the scheme samples a short random polynomial $u(x)$ and small error polynomials $e_0(x),e_1(x)\in\mathcal{R}_q$, and computes
\begin{equation}
\label{supp:eq:supp-enc}
\begin{aligned}
c^{(0)}(x) &= v(x)+b(x)\cdot u(x)+p\,e_0(x)\pmod q,\\
c^{(1)}(x) &= a(x)\cdot u(x)+p\,e_1(x)\pmod q.
\end{aligned}
\end{equation}
The random masking term $u(x)$ together with the error terms ensures semantic security under the RLWE assumption.

\textbf{Decryption}: The decryption function takes a ciphertext $\bm{c}=(c^{(0)}(x),c^{(1)}(x))\in\mathcal{R}_q^2$ and the secret key $\mathsf{sk}_{\mathrm{HE}}=s(x)$, and outputs the plaintext polynomial $v(x)\in\mathcal{R}_p$:
\begin{equation}
\label{supp:eq:supp-dec-interface}
v=\mathsf{BGV.Decrypt}(\bm{c},\mathsf{sk}_{\mathrm{HE}}).
\end{equation}
The key intermediate quantity is
\begin{equation}
\label{supp:eq:supp-dec-pre}
v^{*}(x)=c^{(0)}(x)+c^{(1)}(x)\cdot s(x)\pmod q.
\end{equation}
Substituting the encryption equations gives
\begin{equation}
\label{supp:eq:supp-dec-expand}
v^{*}(x)=v(x)+p\bigl(e_{\mathrm{pk}}(x)\,u(x)+e_0(x)+e_1(x)\,s(x)\bigr)\pmod q.
\end{equation}
Thus, when the effective error term remains sufficiently small relative to $q$, reducing $v^{*}(x)$ modulo $p$ recovers the plaintext polynomial:
\begin{equation}
\label{supp:eq:supp-dec-final}
v(x)=v^{*}(x)\bmod p.
\end{equation}
This abstract decryption form is the one used by the main paper's correctness and noise analysis.

\subsection{Homomorphic Addition and Multiplication}
\label{supp:appendix:computation}

This part first introduces the Double-CRT representation used in practical BGV computation, and then explains how homomorphic addition and homomorphic multiplication are carried out under that representation.

\textbf{Double-CRT Representation}: The BGV scheme employs the Double-CRT representation as a key optimization technique to improve computational efficiency and reduce memory usage during homomorphic evaluation~\cite{supp:halevi2019improved,supp:homencstandard2024,supp:sealcrypto}. It combines two decomposition layers.
\begin{enumerate}
    \item \textit{Modulus decomposition}: In the first CRT layer, the large ciphertext modulus is decomposed as $q=\prod_{j=1}^{n_q} q^{(j)}$, where the factors $\{q^{(j)}\}$ are pairwise co-prime. This induces the ring decomposition
\begin{equation}
\label{supp:eq:supp-rns}
\mathcal{R}_q \cong \prod_{j=1}^{n_q} \mathcal{R}_{q^{(j)}},
\qquad
\mathcal{R}_{q^{(j)}} := \mathcal{R}/q^{(j)}\mathcal{R}.
\end{equation}
Therefore, each ciphertext component over $\mathcal{R}_q$ is represented by its residues over the smaller rings $\{\mathcal{R}_{q^{(j)}}\}$, enabling arithmetic to be performed in a Residue Number System (RNS).

\item \textit{Polynomial decomposition via NTT}: In the second CRT layer, each small ring $\mathcal{R}_{q^{(j)}}$ is represented in the Number Theoretic Transform (NTT) domain, provided that $q^{(j)} \equiv 1 \pmod{2d}$. If $a(x)\in\mathcal{R}_{q^{(j)}}$ is written in coefficient form, the transform $\mathsf{NTT}_{q^{(j)}}(a)$ maps it to its point-value representation at suitable roots of unity. Under this representation, polynomial multiplication becomes elementwise multiplication:
\begin{equation}
\label{supp:eq:supp-ntt}
\mathsf{NTT}_{q^{(j)}}(a\cdot b)
=
\mathsf{NTT}_{q^{(j)}}(a)\odot \mathsf{NTT}_{q^{(j)}}(b),
\end{equation}
where $\odot$ denotes the Hadamard product. This is exactly the implementation view exploited by PVSC in the main paper.
\end{enumerate}

\textbf{Homomorphic Addition}: Consider two ciphertexts $\bm{c}_1=(c_1^{(0)},c_1^{(1)})\in\mathcal{R}_q^2$ and $\bm{c}_2=(c_2^{(0)},c_2^{(1)})\in\mathcal{R}_q^2$. Under the first CRT layer, each ciphertext component is decomposed into residues $c_{r,j}^{(t)}\in\mathcal{R}_{q^{(j)}}$, where $r\in\{1,2\}$, $t\in\{0,1\}$, and $j\in\{1,\dots,n_q\}$. Under the second CRT layer, each residue polynomial is represented by its NTT vector $\widehat{\bm{c}}_{r,j}^{(t)}\in \mathbb{Z}_{q^{(j)}}^{d}$. The homomorphic addition operation $\bm{c}_{\mathrm{add}}=\bm{c}_1\oplus\bm{c}_2$ is then performed componentwise:
\begin{equation}
\label{supp:eq:supp-hadd}
\widehat{\bm{c}}_{\mathrm{add},j}^{(t)}
=
\widehat{\bm{c}}_{1,j}^{(t)}+\widehat{\bm{c}}_{2,j}^{(t)},
\qquad
\forall\, t\in\{0,1\},\ j\in\{1,\dots,n_q\}.
\end{equation}
Applying the inverse NTT and CRT recombination reconstructs the resulting ciphertext components over $\mathcal{R}_q$.

\textbf{Homomorphic Multiplication}: The multiplication of two ciphertexts $\bm{c}_1=(c_1^{(0)},c_1^{(1)})\in\mathcal{R}_q^2$ and $\bm{c}_2=(c_2^{(0)},c_2^{(1)})\in\mathcal{R}_q^2$ first produces a three-component ciphertext $\widetilde{\bm{c}}_{\mathrm{prod}}=(\tilde c_{\mathrm{prod}}^{(0)},\tilde c_{\mathrm{prod}}^{(1)},\tilde c_{\mathrm{prod}}^{(2)})\in\mathcal{R}_q^3$, where
\begin{equation}
\label{supp:eq:supp-hmul-components}
\begin{aligned}
\tilde c_{\mathrm{prod}}^{(0)}(x) &= c_1^{(0)}(x)\cdot c_2^{(0)}(x),\\
\tilde c_{\mathrm{prod}}^{(1)}(x) &= c_1^{(0)}(x)\cdot c_2^{(1)}(x) + c_1^{(1)}(x)\cdot c_2^{(0)}(x),\\
\tilde c_{\mathrm{prod}}^{(2)}(x) &= c_1^{(1)}(x)\cdot c_2^{(1)}(x).
\end{aligned}
\end{equation}
Under the Double-CRT representation, each polynomial is represented by NTT vectors. And thus, according to Eq.~\eqref{supp:eq:supp-ntt}, the multiplications in Eq.~\eqref{supp:eq:supp-hmul-components} are performed component-wise as
\begin{equation}
\label{supp:eq:supp-hmul}
\begin{aligned}
\widehat{\bm{c}}_{\mathrm{prod},j}^{(0)} &= \widehat{\bm{c}}_{1,j}^{(0)} \odot \widehat{\bm{c}}_{2,j}^{(0)},\\
\widehat{\bm{c}}_{\mathrm{prod},j}^{(1)} &= \widehat{\bm{c}}_{1,j}^{(0)} \odot \widehat{\bm{c}}_{2,j}^{(1)} + \widehat{\bm{c}}_{1,j}^{(1)} \odot \widehat{\bm{c}}_{2,j}^{(0)},\\
\widehat{\bm{c}}_{\mathrm{prod},j}^{(2)} &= \widehat{\bm{c}}_{1,j}^{(1)} \odot \widehat{\bm{c}}_{2,j}^{(1)},
\end{aligned}
\qquad
\forall\, j\in\{1,\dots,n_q\}.
\end{equation}
After inverse NTT and CRT recombination, one obtains the three-component ciphertext $\widetilde{\bm{c}}_{\mathrm{prod}}$ over $\mathcal{R}_q$. In practical BGV implementations, this product ciphertext is then relinearized with the public relinearization key $\mathsf{rk}_{\mathrm{HE}}$ to obtain a standard two-component ciphertext denoted by ${\bm{c}}_{\mathrm{prod}}$. This relinearized ciphertext ${\bm{c}}_{\mathrm{prod}}$ is the multiplication output used in the main paper and is expressed as ${\bm{c}}_{\mathrm{prod}}=\bm{c}_1\otimes \bm{c}_2$. Thus, the computation path exercised in BioZKFHE consists of homomorphic addition $\oplus$, homomorphic multiplication $\otimes$, and relinearization, whereas modulus switching and bootstrapping are not invoked in the current similarity-evaluation path.

\section{Generic Background on Designated-Verifier Proofs}
\label{supp:supp:dvzk}

This section collects the generic designated-verifier proof background used by BioZKFHE beyond the BGV material of Section~\ref{supp:app:bgv}. We first recall the standard algebraic path from Rank-1 Constraint Systems to Quadratic Arithmetic Programs, and then sketch a generic lattice-based designated-verifier proof paradigm. Threshold decryption is summarized separately in Section~\ref{supp:app:td}, while the concrete SCMV packing comparison and the concrete PVSC formulation are deferred to later sections.

\textbf{Mathematical Notations and Structures}: The algebraic background in this section is presented over a finite field $\mathbb{F}$. An R1CS instance with $N_g$ constraints and $N_w$ variables is specified by matrices $\mathbf{A},\mathbf{B},\mathbf{C}\in\mathbb{F}^{N_g\times(N_w+1)}$ and an assignment vector $\bm{z}=[1,\bm{x},\bm{w}]^{\top}$, where $\bm{x}$ is the public input and $\bm{w}$ is the private witness. In the QAP view, $\{\omega_i\}_{i=0}^{N_g-1}$ denotes the interpolation points, $Z(X)$ is the vanishing polynomial over these points, and $A(X),B(X),C(X),H(X)$ denote the aggregate and quotient polynomials induced by $\bm{z}$. For the designated-verifier lattice proving layer, $\mathcal{C}$ denotes the target algebraic relation, $\mathrm{query\_num}$ the number of sampled query points, $\mathrm{query\_size}:=4$ the number of polynomial evaluations per point corresponding to $A,B,C,H$, $\mathrm{pt\_dim}:=\mathrm{query\_num}\cdot \mathrm{query\_size}$ the plaintext dimension of one proof vector, and $\tau$ an auxiliary dimension used for integrity checking of the prover's homomorphic linear aggregation. In the generic proof interface, the public proving material is denoted by $\mathsf{pp}_{\mathrm{ZK}}$, the opening secret by $\mathsf{sk}_{\mathrm{open}}$, the verifier-side public material by $\mathsf{vk}_{\mathrm{ZK}}$, and the encrypted proving queries by $\{\bm{C}_{Q}^{i}\}_{i=1}^{n_{\mathrm{row}}}$.

\subsection{Rank-1 Constraint System}

A \textit{Rank-1 Constraint System (R1CS)}~\cite{supp:liang2025sok} is a standard algebraic framework for encoding computation. Formally, an R1CS instance is defined by three matrices $\mathbf{A},\mathbf{B},\mathbf{C}\in\mathbb{F}^{N_g\times(N_w+1)}$, where $\mathbb{F}$ is a finite field, $N_g$ is the number of constraints, and $N_w$ is the number of witness and public-input variables excluding the constant $1$. Let $\bm{z}=[1,\bm{x},\bm{w}]^{\top}\in\mathbb{F}^{N_w+1}$, where $\bm{x}$ is the public input and $\bm{w}$ is the private witness. The R1CS relation is
\begin{equation}
\label{supp:eq:supp-r1cs}
(\mathbf{A}\bm{z})\odot(\mathbf{B}\bm{z})=\mathbf{C}\bm{z},
\end{equation}
where $\odot$ denotes the Hadamard product. Intuitively, each row of $(\mathbf{A},\mathbf{B},\mathbf{C})$ encodes one multiplicative constraint, and the assignment $(\bm{x},\bm{w})$ is valid if and only if all such constraints are satisfied simultaneously.

\subsection{Quadratic Arithmetic Program and R1CS-to-QAP Conversion}
\label{supp:supp:R1CStoQAP}

A \textit{Quadratic Arithmetic Program (QAP)}~\cite{supp:nainwal2025comparative} converts the constraint-satisfaction problem of R1CS into a polynomial identity problem. This conversion is widely used in succinct proof systems because it compresses many local constraints into one global algebraic relation.

\textbf{QAP Overview}: Let $\{\omega_i\}_{i=0}^{N_g-1}$ be interpolation points associated with the $N_g$ constraints. From the R1CS matrices, one constructs three families of polynomials $\{A_i(X)\}_{i=0}^{N_w}$, $\{B_i(X)\}_{i=0}^{N_w}$, and $\{C_i(X)\}_{i=0}^{N_w}$ by interpolation over these points, and defines the vanishing polynomial
\begin{equation}
\label{supp:eq:supp-zx}
Z(X)=\prod_{i=0}^{N_g-1}(X-\omega_i).
\end{equation}
Given an assignment $\bm{z}$, the aggregate polynomials are
\begin{equation}
\begin{aligned}
\label{supp:eq:supp-qap-abc}
A(X)&=\sum_{i=0}^{N_w} z_i A_i(X),\\
B(X)&=\sum_{i=0}^{N_w} z_i B_i(X),\\
C(X)&=\sum_{i=0}^{N_w} z_i C_i(X).
\end{aligned}
\end{equation}
Then the assignment $\bm{z}$ satisfies the original R1CS if and only if there exists a quotient polynomial $H(X)$ such that
\begin{equation}
\label{supp:eq:supp-qap}
A(X)\cdot B(X)=H(X)\cdot Z(X)+C(X).
\end{equation}
Thus, the entire constraint system is reduced to a polynomial divisibility condition.

\textbf{Purpose of the R1CS-to-QAP Conversion}: The main purpose of this conversion is to support succinct verification. A verifier could check all $N_g$ R1CS constraints directly, but this would require work linear in $N_g$. In contrast, the QAP formulation reduces verification to checking whether Eq.~\eqref{supp:eq:supp-qap} holds at random points. By the Schwartz--Zippel lemma, a false identity is unlikely to hold at a randomly chosen challenge point, so the verifier can test correctness with much less work than re-evaluating all constraints. This algebraic compression is the core reason why SNARK-style proof systems can achieve succinct verification.

\subsection{Lattice-Based Designated-Verifier ZKSNARK Construction}

We now sketch a generic designated-verifier lattice proving paradigm. At a high level, the setup algorithm derives public proving material and a secret opening key. The prover then combines encrypted query material with witness-dependent values to produce an encrypted proof object. Finally, the designated verifier opens this object and checks the corresponding polynomial identities.

\subsubsection{Setup phase}

The core function of the setup algorithm takes the target algebraic relation and system parameters as input and generates the designated-verifier proving material:
\begin{equation}
\label{supp:eq:supp-zk-setup}
\bigl(\mathsf{pp}_{\mathrm{ZK}},\mathsf{sk}_{\mathrm{open}}\bigr)
\leftarrow
\mathsf{ZK.Setup}(\mathcal{C},\mathrm{params}),
\end{equation}
where $\mathsf{pp}_{\mathrm{ZK}}$ contains the relation family, the encrypted proving queries, and the verifier-side public material, while $\mathsf{sk}_{\mathrm{open}}$ is the secret opening key used to open encrypted proof objects.

The setup procedure may be viewed as having three steps. First, it generates the internal public/secret parameters for the underlying lattice-based linear encryption layer. Concretely, one may view the secret opening key as containing a secret matrix pair $(\mathbf{S},\mathbf{T}_{\mathrm{mat}})$, where $\mathbf{S}$ controls decryption of encrypted proof components and $\mathbf{T}_{\mathrm{mat}}$ supports an integrity check on the opened vector. The corresponding public parameters include a uniformly random matrix $\mathbf{A}$ and a matrix $\mathbf{D}$ derived from $\mathbf{S}$ by an RLWE-style relation over the ring $\mathcal{R}_q$. This is the designated-verifier analogue of key generation for the proof-opening layer.

Second, the setup compiles the target constraint system into its QAP form and samples random query points $\{t_1,\dots,t_{\mathrm{query\_num}}\}$. At these points, it evaluates the relevant QAP polynomials, thereby obtaining the query values associated with the polynomials $A,B,C$, and $H$.

Third, the setup organizes these evaluations into a query matrix $\mathbf{Q}$ whose columns correspond to the evaluation positions of the proof vector. Each row of $\mathbf{Q}$ is then encrypted under the designated-verifier opening key, yielding the encrypted proving queries that appear in the public proving material. Abstractly, these objects are collected in
\begin{equation}
\label{supp:eq:supp-pp-pvsc}
\mathsf{pp}_{\mathrm{ZK}}
=
\bigl(\mathbf{R}, d_{\mathrm{ZK}}, \{\bm{C}_{Q}^{i}\}_{i=1}^{n_{\mathrm{row}}}, \mathsf{vk}_{\mathrm{ZK}}\bigr).
\end{equation}

\subsubsection{Proof Generation}

The prover executes the proof-generation phase using the public proving material, the public input, and the private witness. At the generic interface level, this is written as
\begin{equation}
\label{supp:eq:supp-zk-prove}
\Pi_{\mathrm{enc}}
\leftarrow
\mathsf{ZK.Prove}\bigl(\bm{x},\bm{w},\mathsf{pp}_{\mathrm{ZK}}\bigr),
\end{equation}
where $\Pi_{\mathrm{enc}}=\{(\delta^{k},\pi_{\mathrm{enc}}^{k})\}_{k=1}^{K}$ denotes the encrypted proof batch. Pedagogically, the proof-generation procedure has two conceptual steps.

\begin{enumerate}
    \item \textit{Witness Preparation}: The prover first computes the relevant witness-dependent algebraic values. In a generic R1CS/QAP picture, this means forming the witness vector $\bm{z}$, deriving the quotient polynomial $H(X)$ from Eq.~\eqref{supp:eq:supp-qap}, and then collecting the required evaluations of $A$, $B$, $C$, and $H$ at the sampled challenge points.

    \item \textit{Homomorphic Proof Aggregation}: The prover next combines the witness-dependent values with the encrypted query material. Abstractly, if the witness-derived proof vector is denoted by $\bm{\pi}$, then the encrypted proof object is obtained as an encrypted linear aggregation of the query ciphertext rows against the entries of $\bm{\pi}$. This is the key designated-verifier feature: the prover never reveals the raw evaluation vector directly, but only its encrypted image under the query system derived in setup.
\end{enumerate}

Because the aggregation is linear and the underlying encryption scheme is additively homomorphic, the resulting proof object is valid with respect to the witness-dependent evaluations while still hiding them from anyone who does not hold the opening key. This is the point at which succinctness and zero knowledge meet: the prover has compressed the witness into a short encrypted algebraic object that can later be opened and checked.

\subsubsection{Proof Verification}

In a designated-verifier proof system, verification naturally has two layers: opening and checking.

At the generic interface level, the designated verifier first opens the encrypted proof batch:
\begin{equation}
\label{supp:eq:supp-zk-open}
\Pi
\leftarrow
\mathsf{ZK.Open}\bigl(\Pi_{\mathrm{enc}},\mathsf{sk}_{\mathrm{open}}\bigr),
\end{equation}
where $\Pi=\{(\delta^{k},\pi^{k})\}_{k=1}^{K}$ is the corresponding plaintext proof batch. It then checks the opened batch by running
\begin{equation}
\label{supp:eq:supp-zk-verify}
b \leftarrow \mathsf{ZK.Verify}\bigl(\bm{x},\Pi,\mathsf{vk}_{\mathrm{ZK}}\bigr),
\end{equation}
where $b\in\{0,1\}$ is the verifier's accept/reject decision. Pedagogically, this verification process consists of four conceptual steps.

\begin{enumerate}
    \item \textit{Proof Opening}: The encrypted proof object is opened using the designated-verifier secret opening material. The result is a plaintext vector containing the algebraic proof values needed for checking.

    \item \textit{Integrity Check}: The verifier checks that the opened vector is structurally consistent with the auxiliary integrity relation induced by the matrix $\mathbf{T}_{\mathrm{mat}}$. This prevents a malicious prover from exploiting the homomorphic aggregation layer to produce malformed openings that do not correspond to a valid encrypted linear combination.

    \item \textit{Evaluation Reconstruction}: Once integrity has been validated, the verifier reconstructs the full polynomial evaluations used in the proof. In a QAP view, this means recovering the evaluations of $A$, $B$, $C$, and $H$ at the sampled challenge points after accounting for the public-input contribution.

    \item \textit{Algebraic Consistency Check}: Finally, the verifier checks the required polynomial identity. In the pedagogical QAP picture, this amounts to testing
\begin{equation}
\label{supp:eq:supp-zk-qap-check}
A(t_i)\cdot B(t_i)\stackrel{?}{=}H(t_i)\cdot Z(t_i)+C(t_i)
\end{equation}
for all sampled query points $t_i$. If these checks pass, the verifier accepts; otherwise, the proof is rejected.
\end{enumerate}

This algebraic viewpoint explains the logic behind designated-verifier lattice proofs. Section~\ref{supp:supp:pvsc_formulation} later instantiates this generic setup/prove/open/verify pattern for the blockwise similarity trace used in BioZKFHE.

\section{Lattice-Based Threshold Decryption}
\label{supp:app:td}

Threshold Decryption (TD) is a fundamental primitive in distributed cryptographic systems~\cite{supp:bendlin2010threshold}. In BioZKFHE, it enables the Decryption Committee of $n_{\mathrm{com}}$ members to jointly recover the plaintext of a ciphertext such that no fewer than a threshold $t$ of committee members can perform the recovery, while no single member ever holds the complete secret key. This primitive is useful whenever decryption authority must be distributed across multiple entities rather than concentrated in one trusted server. Since the underlying FHE layer is BGV, the threshold decryption interface should be consistent with the ring structure and decryption form introduced in Section~\ref{supp:app:bgv}. This subsection therefore summarizes a lattice-based threshold decryption view over the same BGV framework, focusing on key generation and secret sharing, distributed partial decryption, and share aggregation for plaintext recovery.

\textbf{Mathematical Notations and Structures}: We use the same ring notation as in Section~\ref{supp:app:bgv}. In particular, $\mathcal{R}:=\mathbb{Z}[x]/(x^d+1)$, the plaintext ring is $\mathcal{R}_p:=\mathcal{R}/p\mathcal{R}$, and the ciphertext ring is $\mathcal{R}_q:=\mathcal{R}/q\mathcal{R}$, where $p \ll q$. The discrete Gaussian distribution over $\mathbb{Z}$ with standard deviation $\sigma$ is denoted by $D_{\mathbb{Z},\sigma}$. A ciphertext is written as $\bm{c}=(c^{(0)},c^{(1)})\in\mathcal{R}_q^2$, the global public key is written as $\mathsf{pk}_{\mathrm{HE}}$, and the Decryption Committee holds threshold shares $\{\mathsf{sk}_{\mathrm{HE}}^{i}\}_{i=1}^{n_{\mathrm{com}}}$ of the FHE secret key. For a subset $S\subseteq\{1,\dots,n_{\mathrm{com}}\}$ with $|S|\ge t$, the corresponding Lagrange coefficients are written as $\{\lambda_i(S)\}_{i\in S}$.

\subsection{Key Generation and Secret Sharing}

This part defines the threshold key-generation procedure, which generates the global public key and distributes secret-key shares among the $n_{\mathrm{com}}$ committee members.

\textbf{Key Generation}: The threshold key-generation interface is written as
\begin{equation}
\label{supp:eq:supp-td-keygen}
\bigl(\mathsf{pk}_{\mathrm{HE}}, \{\mathsf{sk}_{\mathrm{HE}}^{i}\}_{i=1}^{n_{\mathrm{com}}}\bigr)
\leftarrow
\mathsf{TD.KeyGen}(1^\lambda,n_{\mathrm{com}},t,d,p,q).
\end{equation}
The global secret key is first sampled in the same way as in the BGV scheme. Concretely, a small polynomial $s(x)\in\mathcal{R}_q$ is sampled coefficientwise from a small set such as $\{-1,0,1\}$, and a public key $\mathsf{pk}_{\mathrm{HE}}=(a(x),b(x))$ is generated with $a(x)\leftarrow\mathcal{R}_q$ uniform and
\begin{equation}
\label{supp:eq:supp-td-pk}
b(x)=-a(x)\cdot s(x)+p\,e(x)\pmod q,
\end{equation}
where $e(x)\in\mathcal{R}_q$ is a small error polynomial.

The key difference from standard BGV is that the secret key $s(x)$ is not kept by any single entity. Instead, it is split into shares under a $(t,n_{\mathrm{com}})$ threshold mechanism. For exposition, one may describe this through a Shamir-style sharing polynomial
\begin{equation}
\label{supp:eq:supp-td-sharepoly}
F(z)=s(x)+\sum_{j=1}^{t-1}\alpha_j(x)z^j,
\end{equation}
where $\alpha_j(x)\in\mathcal{R}_q$ are sampled uniformly or from an appropriate masking distribution. The secret-key share of committee member $i$ is then
\begin{equation}
\label{supp:eq:supp-td-share}
\mathsf{sk}_{\mathrm{HE}}^{i}=s_i(x):=F(i)\in\mathcal{R}_q,
\qquad i\in\{1,\dots,n_{\mathrm{com}}\}.
\end{equation}
Thus, no single committee member learns the full secret key, while any set of at least $t$ shares can support decryption through interpolation in the sharing space.

\subsection{Distributed Partial Decryption}

This section describes how an individual committee member uses its own secret share to produce a partial decryption contribution.

\textbf{Share Generation}: Given a ciphertext $\bm{c}=(c^{(0)}(x),c^{(1)}(x))\in\mathcal{R}_q^2$ and a secret share $\mathsf{sk}_{\mathrm{HE}}^{i}=s_i(x)$, committee member $i$ computes a partial decryption share
\begin{equation}
\label{supp:eq:supp-td-partdec-intf}
\mu_i(x)\leftarrow \mathsf{TD.PartDec}(\bm{c},\mathsf{sk}_{\mathrm{HE}}^{i}).
\end{equation}
At a high level, this partial share should contribute to the global decryption term $c^{(1)}(x)\cdot s(x)$, but without revealing the full secret key. A natural form is
\begin{equation}
\label{supp:eq:supp-td-partdec}
\mu_i(x)=c^{(1)}(x)\cdot s_i(x)+p\,e_i^{\mathrm{sm}}(x)\pmod q,
\end{equation}
where $e_i^{\mathrm{sm}}(x)\in\mathcal{R}_q$ is an additional masking or smudging polynomial. The factor $p$ is important: it ensures that the extra masking term vanishes after the final reduction modulo $p$, and therefore does not alter the recovered plaintext. In this way, each party releases only a masked partial contribution rather than an exact decryption term.

Depending on the concrete threshold implementation, a party may also attach an auxiliary proof that its partial share was computed consistently from a valid secret-key share. The main paper does not rely on any specific share-verification mechanism, and therefore abstracts this step away.

\subsection{Share Aggregation and Reconstruction}

The final plaintext is recovered by aggregating at least $t$ valid partial decryption shares.

\textbf{Reconstruction}: Let $S\subseteq\{1,\dots,n_{\mathrm{com}}\}$ be any subset with $|S|\ge t$, and let $\{\mu_i(x)\}_{i\in S}$ be the corresponding valid partial shares. The aggregation interface is
\begin{equation}
\label{supp:eq:supp-td-combine-intf}
v(x)\leftarrow \mathsf{TD.Combine}\bigl(\bm{c},\{\mu_i\}_{i\in S}\bigr).
\end{equation}
The combiner first computes the Lagrange coefficients
\begin{equation}
\label{supp:eq:supp-td-lagrange}
\lambda_i(S)=\prod_{\substack{j\in S\\ j\neq i}}\frac{-j}{i-j},
\qquad i\in S,
\end{equation}
over the sharing domain. It then forms the aggregated partial decryption term
\begin{equation}
\label{supp:eq:supp-td-agg}
\mu_{\mathrm{agg}}(x)=\sum_{i\in S}\lambda_i(S)\cdot \mu_i(x)\pmod q.
\end{equation}
By the linearity of the threshold sharing scheme, the secret shares interpolate back to the original secret key, so
\begin{equation}
\label{supp:eq:supp-td-secret-recover}
\sum_{i\in S}\lambda_i(S)\cdot s_i(x)=s(x).
\end{equation}
Substituting Eq.~\eqref{supp:eq:supp-td-partdec} into Eq.~\eqref{supp:eq:supp-td-agg} yields
\begin{equation}
\label{supp:eq:supp-td-agg-expand}
\mu_{\mathrm{agg}}(x)=c^{(1)}(x)\cdot s(x)+p\,E_{\mathrm{sm}}(x)\pmod q,
\end{equation}
where $E_{\mathrm{sm}}(x)$ denotes the aggregated masking polynomial.

Finally, the plaintext polynomial is recovered by combining the first ciphertext component with the aggregated decryption term and reducing modulo $p$:
\begin{equation}
\label{supp:eq:supp-td-final}
v(x)=\bigl(c^{(0)}(x)+\mu_{\mathrm{agg}}(x)\bigr)\bmod p.
\end{equation}
Using the decryption form from Section~\ref{supp:app:bgv}, one sees that
\begin{equation}
\label{supp:eq:supp-td-final-expand}
c^{(0)}(x)+\mu_{\mathrm{agg}}(x)=v(x)+p\,E(x)\pmod q,
\end{equation}
for an effective error polynomial $E(x)$ absorbing both the original encryption noise and the aggregated masking noise. Therefore, reduction modulo $p$ recovers the plaintext polynomial exactly, provided that the effective error remains within the correctness regime of the BGV scheme.

This threshold view matches the system abstraction used in the main paper. Each committee member contributes a local decryption share, and an untrusted combiner may aggregate at least $t$ such shares. The resulting plaintext is accepted only through the later committee-side and on-chain consistency checks of BioZKFHE. Hence, the combiner may affect liveness by delaying or misaggregating shares, but cannot by itself cause an incorrect plaintext result to be finalized.

\section{Comparison with Conventional Packing Layouts}
\label{supp:sec:SCMV_comparison}

To make the role of SCMV more explicit, we compare it with two conventional encrypted biometric layouts, namely horizontal packing and vertical packing. The goal of this comparison is not to claim a universal dominance result for every implementation detail, but to isolate the structural difference that matters most in BioZKFHE: how many templates are covered by one processed database unit, and whether query-time similarity evaluation requires slot aggregation or ciphertext rotation.

For clarity, we use the term \emph{processed database unit} to denote the minimum encrypted unit consumed by one invocation of the matching kernel. Under vertical packing and SCMV, this unit is naturally one block of $L$ row ciphertexts. Under horizontal packing, the analogous unit is one encrypted template, since one template is typically packed across slots and processed separately.

\textcolor{blue}{The comparison is also meant to clarify the novelty boundary. Base-$T$ digit shifting is a known packing primitive: it may be applied in the coefficient domain by storing digit bundles as polynomial coefficients, or combined with batching by storing one logical value per CRT slot and using SIMD parallelism. SCMV uses the second representation layer differently: each CRT slot stores a base-$T$ bundle of $n$ same-coordinate biometric values, and the slots are arranged row-wise by feature coordinate. This row-wise slot-level binding is what lets one processed block cover $nd$ templates and keeps the online trace aligned with the blockwise PVSC proof batch.}

{\color{blue}
The distinction can be written explicitly as
\begin{equation}
\label{supp:eq:supp_packing_contrast}
\begin{aligned}
P_{\mathrm{coef}}(X)&=\sum_{j=1}^{d} c_jX^{j-1},\\
c_j&=\sum_{t=1}^{n}a_{j,t}T^{t-1},\\
P_{\mathrm{slot}}(X)&=\mathsf{Encode}(a_1,\ldots,a_d),\\
P_{\mathrm{SCMV},i}^{m}(X)&=\mathsf{Encode}\big((b_{i,j}^{m})_{j=1}^{d}\big),\\
b_{i,j}^{m}&=\sum_{t=1}^{n}v_{i,r(m,t,j)}T^{t-1},\\
r(m,t,j)&=(m-1)nd+(t-1)d+j.
\end{aligned}
\end{equation}
Here $P_{\mathrm{coef}}$ represents coefficient-domain digit packing, $P_{\mathrm{slot}}$ represents ordinary one-value-per-slot batching, and $P_{\mathrm{SCMV},i}^{m}$ is the row-wise slot-domain binding used by SCMV.
}

\begin{table*}[t]
\centering
\caption{Structural comparison of encrypted biometric packing layouts. Here $N$ is the database size, $d$ is the batching degree, $L$ is the embedding dimension, and $n$ is the SCMV binding factor.}
\label{supp:tab:packing_comparison}
\scriptsize
\setlength{\tabcolsep}{4pt}
\begin{tabular}{p{2.5cm}p{3.5cm}p{2.0cm}p{2.5cm}p{2cm}p{3cm}}
\toprule
\textbf{Layout} & \textbf{Packing principle} & \textbf{Templates covered by one processed unit} & \textbf{Query-time slot aggregation / rotation} & \textbf{Number of processed units for $N$ templates} & \textbf{Incremental update characteristic} \\
\midrule
Horizontal packing &
One template is packed contiguously across slots and compared separately &
About one template per processed unit &
Typically required to reduce slot-wise products to one similarity value &
$\Theta(N)$ &
Simple append, but poor repository-level amortization \\

\textcolor{blue}{Row-wise vertical baseline} &
The same feature coordinate of many templates is packed into one row-wise SIMD structure &
$d$ templates per block of $L$ row ciphertexts &
Typically required to aggregate slot-level results or align data across the matching trace &
$\Theta(\lceil N/d\rceil)$ &
Amortizes storage better than horizontal packing, but the similarity path remains rotation-sensitive \\

SCMV &
$n$ quantized values are bound into each plaintext entry through base-$T$ expansion and then arranged in the same row-wise block structure &
$nd$ templates per block of $L$ row ciphertexts &
No query-time rotation or slot aggregation in the similarity path &
$\Theta(\lceil N/(nd)\rceil)$ &
Supports incremental enrollment by homomorphic addition into the last partially filled block \\
\bottomrule
\end{tabular}
\end{table*}

Table~\ref{supp:tab:packing_comparison} highlights the main design trade-off. Horizontal packing is conceptually simple, but it provides little amortization across the repository and usually requires an internal reduction step to turn slot-wise products into one similarity score. \textcolor{blue}{Row-wise vertical baselines} improve ciphertext utilization by letting one processed unit cover $d$ templates, but the matching path is still commonly dominated by slot aggregation and rotation-related overhead. SCMV changes the granularity of packing itself: instead of storing only one quantized value in each plaintext entry, it binds $n$ values into that entry through base-$T$ expansion. As a result, one SCMV block covers $nd$ templates while preserving the same row-wise block structure.

This structural difference directly affects both storage and online matching cost. Since $\mathsf{SCMV.Identify}$ evaluates one block at a time and the number of blocks scales as $\lceil N/(nd)\rceil$, increasing the binding factor $n$ reduces the number of encrypted blocks that must be stored, scanned, and proved. At the same time, SCMV removes query-time rotation from the similarity path, so the online cost is driven mainly by the block count and the per-block multiply-and-accumulate trace rather than by slot-shuffling operations.

\textcolor{blue}{Table~\ref{supp:tab:supp_scmv_measured_baselines} adds a matched-codebase baseline comparison at $N=100$k. Horizontal packing is included to show the cost of template-major slot aggregation, vertical packing corresponds to the row-wise $n=1$ baseline, and SCMV increases the binding factor while preserving the same rotation-free online trace. These rows are intended to complement the main paper's Table~I: base-$T$ digit packing is a known representation technique, while BioZKFHE uses it in a row-wise biometric SIMD layout that reduces the block count and keeps each block trace aligned with PVSC verification.}

\begin{table*}[!t]
\centering
{\color{blue}
\caption{Measured SCMV baseline comparison at $N=100$k. Latency is the encrypted matching latency in milliseconds.}
\label{supp:tab:supp_scmv_measured_baselines}
\scriptsize
\setlength{\tabcolsep}{4.8pt}
\renewcommand{\arraystretch}{1.05}
\begin{tabular}{llrrrr}
\toprule
\textbf{Model} & \textbf{Method} & \textbf{$n$} & \textbf{Blocks} & \textbf{Online rotations} & \textbf{Latency / storage} \\
\midrule
FaceNet & Horizontal packing & -- & 1{,}563 & 10{,}941 & 75{,}183.3~ms / 306.06~MiB \\
FaceNet & \textcolor{blue}{Row-wise vertical baseline} & 1 & 13 & 0 & 1{,}427.0~ms / 326.13~MiB \\
FaceNet & SCMV & 2 & 7 & 0 & 720.9~ms / 221.52~MiB \\
FaceNet & SCMV & 3 & 5 & 0 & 543.4~ms / 188.51~MiB \\
MobileFaceNet & Horizontal packing & -- & 6{,}250 & 56{,}250 & 381{,}467.2~ms / 1.27~GiB \\
MobileFaceNet & \textcolor{blue}{Row-wise vertical baseline} & 1 & 13 & 0 & 6{,}311.7~ms / 1.36~GiB \\
MobileFaceNet & SCMV & 2 & 7 & 0 & 3{,}412.9~ms / 984.30~MiB \\
\bottomrule
\end{tabular}
}
\end{table*}

\textcolor{blue}{Table~\ref{supp:tab:supp_e3_security_tradeoff} reports the per-$n$ security/performance trade-off induced by increasing the SCMV binding factor. The security estimates were obtained with SageMath 10.9.rc0 and \texttt{lattice-estimator}, using rough Core-SVP/GSA estimates under an RLWE-to-LWE proxy with secret distribution $\mathrm{CenteredBinomial}(2)$ and error standard deviation $\sigma=3.2$. Core-SVP is a lower-bound cost model and the flat RLWE-to-LWE proxy ignores ring structure, so it is conservative in the sense of under-estimating attack cost; the two-decimal values are listed only for reproducibility and should be read as approximate security margins rather than exact bit-security levels. The estimates characterize the BGV/RLWE confidentiality layer; the module-lattice parameters of the designated-verifier proof backend that carry the local soundness $\epsilon_{\mathrm{loc}}$ are governed by~\cite{supp:ishai2021dvzksnark} and are not re-estimated here.}

\begin{table*}[!t]
\centering
{\color{blue}
\caption{Per-$n$ parameter security and matching trade-off at $N=10^6$. The latency column reports encrypted SCMV matching latency.}
\label{supp:tab:supp_e3_security_tradeoff}
\scriptsize
\setlength{\tabcolsep}{4.2pt}
\renewcommand{\arraystretch}{1.05}
\begin{tabular}{lrrrrlrrr}
\toprule
\textbf{Model} & \textbf{$n$} & \textbf{$p$ bits} & \textbf{$q$ bits} & \textbf{Security bits} & \textbf{RNS limbs} & \textbf{$M$} & \textbf{Storage} & \textbf{Latency} \\
\midrule
FaceNet & 1 & 20 & 110 & 243.66 & 30+25+25+30 & 123 & 5.77~GiB & 6275.585~ms \\
FaceNet & 2 & 31 & 150 & 161.18 & 45+30+30+45 & 62 & 2.91~GiB & 3149.330~ms \\
FaceNet & 3 & 46 & 170 & 135.49 & 45+40+40+45 & 41 & 1.92~GiB & 2150.655~ms \\
MobileFaceNet & 1 & 21 & 120 & 217.54 & 30+30+30+30 & 123 & 23.07~GiB & 25701.169~ms \\
MobileFaceNet & 2 & 40 & 160 & 147.46 & 45+35+35+45 & 62 & 11.63~GiB & 12804.881~ms \\
\bottomrule
\end{tabular}
}
\end{table*}

\textcolor{blue}{The table makes the cost of increasing $n$ explicit rather than treating SCMV packing as free. For FaceNet, moving from $n=1$ to $n=3$ increases the ciphertext modulus from 110 to 170 bits and reduces the estimated security margin from 243.66 to 135.49 bits, but the number of processed blocks drops from 123 to 41 and the measured/extrapolated matching latency drops by 65.7\%. For MobileFaceNet, moving from $n=1$ to $n=2$ keeps a 147.46-bit estimate while halving the block count from 123 to 62 and reducing the matching latency by 50.2\%.}

\section{Concrete PVSC Formulation for BioZKFHE}
\label{supp:supp:pvsc_formulation}

We now make explicit the descriptor-level relation compiled by PVSC for one SCMV database block. For each block $\bm{C}_m\in\mathcal{D}_{\mathrm{SCMV}}$, PVSC certifies a deterministic BGV trace for the fixed input/output tuple $\bigl(\bm{C}_{\mathrm{identify}},\bm{C}_m,\bm{c}^{\mathrm{sim}}_m\bigr)$. Letting $\bm{p}^{m}_{i}$ denote the relinearized product ciphertext obtained from the $i$-th query row and the $i$-th database row, and $\bm{a}^{m}_{i}$ the running accumulator, the admitted trace is
\begin{equation}
\label{supp:eq:supp-pvsc-trace}
\bm{p}^{m}_{i}:=\mathsf{Relin}\!\left(\bm{c}_{i}^{\mathrm{id}}\otimes \bm{c}_{i}^{m}\right),\quad
\bm{a}^{m}_{0}:=\bm{0},\quad
\bm{a}^{m}_{i}:=\bm{a}^{m}_{i-1}\oplus \bm{p}^{m}_{i},
\end{equation}
for $i\in\{1,\dots,L\}$, with terminal condition $\bm{a}^{m}_{L}=\bm{c}^{\mathrm{sim}}_{m}$. Thus PVSC certifies one fixed multiply--relinearize--accumulate schedule rather than an arbitrary claim of output consistency.

\paragraph{Public statement and witness.}
For block index $m$, the public statement contains the published output ciphertext $\bm{c}^{\mathrm{sim}}_m$, the block index $m$, the session context $\mathsf{ctx}$, the query-binding commitment $h_{\mathrm{id}}:=\mathcal{H}(\mathsf{ctx}\Vert \bm{C}_{\mathrm{identify}})$, the session-bound block commitment $h_m:=\mathcal{H}(\mathsf{ctx}\Vert m\Vert u_m)$ with $u_m:=\mathcal{H}(\bm{C}_m)$, a batch-root digest $\rho_m$ of the canonical proof-instance descriptor sequence, and the public system parameters $\mathsf{pp}_{\mathrm{sys}}$. We write
\begin{equation}
\label{supp:eq:supp-pvsc-public}
\bm{x}^{(m)} :=
\big(
\bm{c}^{\mathrm{sim}}_m,\,
m,\,
\mathsf{ctx},\,
h_{\mathrm{id}},\,
h_m,\,
\rho_m,\,
\mathsf{pp}_{\mathrm{sys}}
\big).
\end{equation}
The private witness $\bm{w}^{(m)}$ contains the ciphertext inputs $\bm{C}_{\mathrm{identify}}$ and $\bm{C}_m$, the product ciphertexts $\{\bm{p}^{m}_{i}\}_{i=1}^{L}$, the accumulator chain $\{\bm{a}^{m}_{i}\}_{i=0}^{L}$, the Double-CRT decomposition witnesses used by the local proof templates, and the preimages of the descriptor digests needed to show that all local instances refer to one consistent hidden trace.

\paragraph{Relation decomposition.}
Let $\mathcal{J}:=\{1,\dots,n_q\}$ and $\mathcal{X}:=\{1,\dots,n_{\mathrm{chunk}}\}$. For each $(j,\chi)\in\mathcal{J}\times\mathcal{X}$, let $d_{\chi}\le d_{\mathrm{ZK}}$ be the size of the corresponding NTT chunk, and let $\mathbb{F}_{\nu_j}$ be the proof field used to compile modulus-$q^{(j)}$ arithmetic into R1CS/QAP form, with $\nu_j$ chosen large enough to represent the local modular equations without ambiguity. PVSC proves the NP relation
\begin{equation}
\label{supp:eq:supp-pvsc-np}
\exists\,\bm{w}^{(m)} \ \text{s.t.}\ \mathsf{Rel}_{\mathrm{PVSC}}\big(\bm{x}^{(m)},\bm{w}^{(m)}\big)=1,
\end{equation}
where the defining conjunction is
\begin{equation}
	\label{supp:eq:supp-pvsc-rel}
	\begin{aligned}
		&\bigwedge_{j\in\mathcal{J}}\bigwedge_{\chi\in\mathcal{X}}\bigwedge_{i=1}^{L} {} 
		\qquad \mathcal{R}^{\mathrm{mul}}_{j}\big(
		\delta_{\mathrm{mul}}^{(m,j,\chi,i)},
		\omega_{\mathrm{mul}}^{(m,j,\chi,i)}
		\big)=1 \\
		&\land \bigwedge_{j\in\mathcal{J}}\bigwedge_{\chi\in\mathcal{X}}\textcolor{blue}{\bigwedge_{i=2}^{L-1}} {}
		\qquad \mathcal{R}^{\mathrm{acc}}_{j}\big(
		\delta_{\mathrm{acc}}^{(m,j,\chi,i)},
		\omega_{\mathrm{acc}}^{(m,j,\chi,i)}
		\big)=1 \\
		&\land \bigwedge_{j\in\mathcal{J}}\bigwedge_{\chi\in\mathcal{X}} {} 
		\qquad \mathcal{R}^{\mathrm{out}}_{j}\big(
		\delta_{\mathrm{out}}^{(m,j,\chi)},
		\omega_{\mathrm{out}}^{(m,j,\chi)}
		\big)=1 \\
		&\land \mathcal{R}^{\mathrm{bind}}\big(
		\delta_{\mathrm{bind}}^{(m)},
		\omega_{\mathrm{bind}}^{(m)}
		\big)=1.
	\end{aligned}
\end{equation}
The conjunction in Eq.~\eqref{supp:eq:supp-pvsc-rel} enforces four requirements simultaneously: local multiplication correctness, local accumulation correctness, \textcolor{blue}{the terminal accumulation step together with} output consistency with the public ciphertext $\bm{c}^{\mathrm{sim}}_m$, and one block-level binding check tying the entire batch to $(\mathsf{ctx}, h_{\mathrm{id}}, h_m, \rho_m)$.

\textcolor{blue}{\paragraph{Counted instances per block.} The per-block instance count $K_m=n_qn_{\mathrm{chunk}}(2L-1)$ follows directly from Eq.~\eqref{supp:eq:supp-pvsc-rel}. For each of the $n_qn_{\mathrm{chunk}}$ modulus/chunk pairs $(j,\chi)$, the multiplication relation $\mathcal{R}^{\mathrm{mul}}_j$ contributes $L$ instances (one relinearized product $\bm{p}_i$ per embedding dimension $i\in\{1,\dots,L\}$). The accumulation relation $\mathcal{R}^{\mathrm{acc}}_j$ contributes $L-2$ instances: the chain $\bm{a}_0=\bm{0},\,\bm{a}_i=\bm{a}_{i-1}\oplus\bm{p}_i$ has its first step $\bm{a}_1=\bm{p}_1$ fused into the multiplication output (so it is not separately proved) and its terminal step $\bm{a}_L=\bm{c}^{\mathrm{sim}}_m$ certified by the output relation, leaving the intermediate transitions $i=2,\dots,L-1$. The output relation $\mathcal{R}^{\mathrm{out}}_j$ contributes one instance per pair, certifying the terminal accumulation step and the equality of $\bm{a}_L$ with the published ciphertext. Summing over the $n_qn_{\mathrm{chunk}}$ pairs gives $n_qn_{\mathrm{chunk}}\bigl(L+(L-2)+1\bigr)=n_qn_{\mathrm{chunk}}(2L-1)$, where the fused initial step is realized by setting the first accumulator digest equal to the first product digest (formalized below). Thus $K_m$ counts the arithmetic and output instances; the single per-block binding relation $\mathcal{R}^{\mathrm{bind}}$ adds one further instance, so the full opened batch comprises $K_m+1$ instances, with the additive $+1$ an $O(1)$ term affecting neither the linear footprint scaling nor, by more than one bit, the logarithmic union-bound slack of the session soundness accounting below.}

\paragraph{Descriptor families and canonical batch layout.}
PVSC realizes Eq.~\eqref{supp:eq:supp-pvsc-rel} through the fixed family
\begin{equation}
\label{supp:eq:supp-pvsc-family}
\begin{aligned}
\mathbf{R}_{\mathrm{PVSC}}
:=\;&
\{\mathcal{R}^{\mathrm{mul}}_{j},\mathcal{R}^{\mathrm{acc}}_{j},\mathcal{R}^{\mathrm{out}}_{j}\}_{j=1}^{n_q}\cup\{\mathcal{R}^{\mathrm{bind}}\}.
\end{aligned}
\end{equation}
Each instantiated proof instance carries a public descriptor recording its lexicographic index tuple and the digests of the boundary values it is meant to connect. Concretely,
\begin{itemize}
    \item $\delta_{\mathrm{mul}}^{(m,j,\chi,i)}=\bigl(m,j,\chi,i,\eta_{\mathrm{id}}^{(m,j,\chi,i)},\eta_{\mathrm{db}}^{(m,j,\chi,i)},\eta_{\mathrm{prod}}^{(m,j,\chi,i)}\bigr)$ binds the relevant input chunks to the claimed relinearized product chunk.
    \item $\delta_{\mathrm{acc}}^{(m,j,\chi,i)}=\bigl(m,j,\chi,i,\eta_{\mathrm{acc}}^{(m,j,\chi,i-1)},\eta_{\mathrm{prod}}^{(m,j,\chi,i)},\eta_{\mathrm{acc}}^{(m,j,\chi,i)}\bigr)$ links the previous accumulator chunk, the product chunk, and the next accumulator chunk.
    \item {\color{blue}The output descriptor
    \[
    \begin{aligned}
    \delta_{\mathrm{out}}^{(m,j,\chi)}
    =\bigl(&m,j,\chi,\eta_{\mathrm{acc}}^{(m,j,\chi,L-1)},
    \eta_{\mathrm{prod}}^{(m,j,\chi,L)},\\
    &\eta_{\mathrm{acc}}^{(m,j,\chi,L)},
    \eta_{\mathrm{pub}}^{(m,j,\chi)}\bigr)
    \end{aligned}
    \]
    certifies the terminal accumulation step $\bm{a}_L=\bm{a}_{L-1}\oplus\bm{p}_L$ by binding the previous accumulator chunk and the final product chunk to the final accumulator chunk, and links that final accumulator chunk to the corresponding public-output chunk extracted from $\bm{c}^{\mathrm{sim}}_m$.}
    \item $\delta_{\mathrm{bind}}^{(m)}=\bigl(m,\mathsf{ctx},h_{\mathrm{id}},h_m,\rho_m,\eta_{\mathrm{pub}}^{(m)}\bigr)$ binds the descriptor-sequence digest, session information, and public output ciphertext digest $\eta_{\mathrm{pub}}^{(m)}:=\mathcal{H}(\bm{c}^{\mathrm{sim}}_m)$.
\end{itemize}
\textcolor{blue}{To realize the fused initial step, the parser sets $\eta_{\mathrm{acc}}^{(m,j,\chi,1)}:=\eta_{\mathrm{prod}}^{(m,j,\chi,1)}$, identifying the first accumulator chunk $\bm{a}_1$ with the first product chunk $\bm{p}_1$; consequently $\delta_{\mathrm{acc}}$ is instantiated only for $i=2,\dots,L-1$, while the terminal step $i=L$ is folded into $\delta_{\mathrm{out}}$.}

Accordingly, the opened proof batch for block $m$ is parsed as
\begin{equation}
\label{supp:eq:supp-pvsc-batch}
\Pi^{(m)}
=
\begin{aligned}[t]
&\bigl\langle\bigl(\delta_{\mathrm{mul}}^{(m,j,\chi,i)},\pi_{\mathrm{mul}}^{(m,j,\chi,i)}\bigr)\bigr\rangle_{j,\chi,i}^{\mathrm{lex}} \\
&\mathbin{\|}\;\bigl\langle\bigl(\delta_{\mathrm{acc}}^{(m,j,\chi,i)},\pi_{\mathrm{acc}}^{(m,j,\chi,i)}\bigr)\bigr\rangle_{j,\chi,i}^{\mathrm{lex}} \\
&\mathbin{\|}\;\bigl\langle\bigl(\delta_{\mathrm{out}}^{(m,j,\chi)},\pi_{\mathrm{out}}^{(m,j,\chi)}\bigr)\bigr\rangle_{j,\chi}^{\mathrm{lex}} \\
&\mathbin{\|}\;\bigl\langle\bigl(\delta_{\mathrm{bind}}^{(m)},\pi_{\mathrm{bind}}^{(m)}\bigr)\bigr\rangle,
\end{aligned}
\end{equation}
where $\mathbin{\|}$ denotes ordered concatenation, and each indexed angle-bracket term denotes the corresponding sequence listed in lexicographic order over its index tuple. Thus the full batch is ordered first by family type and then by the relevant indices. Each modulus/chunk pair contributes $L$ first-layer instances, $L-2$ second-layer instances, and one output-binding instance. Hence
\begin{equation}
\label{supp:eq:supp-pvsc-km}
K_m = n_q\,n_{\mathrm{chunk}}\,(2L-1).
\end{equation}
\textcolor{blue}{Here $K_m$ counts only the arithmetic and output instances, i.e., the $\mathcal{R}^{\mathrm{mul}}$, $\mathcal{R}^{\mathrm{acc}}$, and $\mathcal{R}^{\mathrm{out}}$ families. The full opened batch $\Pi^{(m)}$ of Eq.~\eqref{supp:eq:supp-pvsc-batch} additionally contains the single block-binding instance $(\delta_{\mathrm{bind}}^{(m)},\pi_{\mathrm{bind}}^{(m)})$, so it comprises $K_m+1$ proof instances in total. The additive $+1$ is an $O(1)$ term that affects neither the linear footprint scaling nor, by more than one bit, the logarithmic union-bound slack of the session soundness accounting.}
The sequence digest is computed as
\begin{equation}
\label{supp:eq:supp-pvsc-rho}
\rho_m:=\mathcal{H}\bigl(\delta^{(m,1)}\Vert\cdots\Vert\delta^{(m,K_m)}\bigr),
\end{equation}
\textcolor{blue}{where $\delta^{(m,1)},\dots,\delta^{(m,K_m)}$ are the $K_m$ counted arithmetic and output descriptors (the $\mathcal{R}^{\mathrm{mul}}$, $\mathcal{R}^{\mathrm{acc}}$, and $\mathcal{R}^{\mathrm{out}}$ families) listed in the canonical order above, \emph{excluding} the binding descriptor $\delta_{\mathrm{bind}}^{(m)}$. The binding instance then commits to this digest, since $\rho_m$ is an input to $\delta_{\mathrm{bind}}^{(m)}$; hence the definition of $\rho_m$ is not circular.}

\begin{table*}[!t]
	\centering
	{\color{blue}
		\caption{Leakage profile of threshold-opened PVSC proof batches.}
		\label{supp:tab:supp_opened_proof_leakage}
		\scriptsize
		\setlength{\tabcolsep}{4.2pt}
		\renewcommand{\arraystretch}{1.08}
		\begin{tabular}{p{3.0cm}p{5.6cm}p{7.0cm}}
			\toprule
			\textbf{Component} & \textbf{Public after opening} & \textbf{Hidden under witness zero knowledge} \\
			\midrule
			Session and statement binding &
			Block index $m$, session context $\mathsf{ctx}$, commitments $h_{\mathrm{id}}$, $h_m$, descriptor digest $\rho_m$, public output-ciphertext digest &
			Biometric plaintexts, decryption shares, and secret keys; the plaintext similarity vector is released only through the threshold-governed application flow \\
			Descriptor sequence &
			Family type, lexicographic order, local instance count $K_m$, RNS limb identifiers, NTT chunk identifiers, and expected boundary-digest wiring &
			Private trace values that realize those descriptors, including hidden ciphertext-state decompositions and witness assignments \\
			Double-CRT witness material &
			Parameter-derived shape information such as $n_q$, $n_{\mathrm{chunk}}$, and chunk indices &
			RNS/NTT coefficient values, product ciphertext components, accumulator states, and intermediate homomorphic states \\
			Opened local proofs &
			Proof records needed by $\mathsf{PVSC.Verify}$ and deterministic parser checks &
			Witness-dependent values beyond what is implied by the public statement, descriptors, and the validity of the proved relation \\
			\bottomrule
		\end{tabular}
	}
\end{table*}

\paragraph{Proof generation and deterministic parsing.}
At proof-generation time, the prover first materializes the trace of Eq.~\eqref{supp:eq:supp-pvsc-trace}, then decomposes the relevant ciphertext states into Double-CRT witnesses, deterministically forms the descriptor sequence, computes $\rho_m$ by Eq.~\eqref{supp:eq:supp-pvsc-rho}, and finally produces the encrypted local proofs $\{\pi_{\mathrm{enc}}^{(m,k)}\}_{k=1}^{\textcolor{blue}{K_m+1}}$ against the corresponding descriptor/witness pairs \textcolor{blue}{(the $K_m$ counted arithmetic and output proofs together with the single block-binding proof)}.

After threshold opening, the smart contract resolves $u_m$ from the session-bound gallery snapshot, recomputes the expected block commitment $h_m^{\star}:=\mathcal{H}(\mathsf{ctx}\Vert m\Vert u_m)$, and rejects repeated or mismatched indices. It then runs a deterministic parser over $\Pi^{(m)}$ and accepts the structure only if all of the following hold:
\begin{enumerate}
    \item the descriptor multiset has exactly the required cardinalities for $\mathcal{R}^{\mathrm{mul}}$, $\mathcal{R}^{\mathrm{acc}}$, $\mathcal{R}^{\mathrm{out}}$, and $\mathcal{R}^{\mathrm{bind}}$;
    \item the descriptors appear in the canonical lexicographic order over family type and index tuple;
    \item every accumulation descriptor consumes the output digest of the preceding accumulator state and the product digest of the matching multiplication instance;
    \item \textcolor{blue}{every output descriptor checks the terminal accumulation step for the same modulus/chunk pair, consuming the previous accumulator digest $\eta_{\mathrm{acc}}^{(\cdot,L-1)}$ and the final product digest $\eta_{\mathrm{prod}}^{(\cdot,L)}$ and binding them to the final accumulator digest $\eta_{\mathrm{acc}}^{(\cdot,L)}$ and the corresponding public-output chunk;}
    \item the recomputed sequence digest equals the submitted $\rho_m$.
\end{enumerate}
Only after these structural checks pass does the contract run $\mathsf{PVSC.Verify}\bigl(\bm{x}^{(m)},\Pi^{(m)}\bigr)$ on all opened local proofs. This parser is what prevents omission, duplication, reordering, and proof-splicing attacks at the opened-batch level.

\paragraph{Leakage profile after threshold opening.}
\textcolor{blue}{Threshold opening converts the encrypted designated-verifier proof object into a committee-opened, publicly verifiable proof record, but it does not make the private PVSC witness public. The zero-knowledge guarantee is interpreted relative to the public statement and the descriptor sequence that the verifier must see in order to parse the batch. Table~\ref{supp:tab:supp_opened_proof_leakage} summarizes this boundary. In particular, the Double-CRT decomposition affects the shape and count of local proof instances, but the RNS/NTT witness coefficients, product ciphertext components, accumulator states, and biometric plaintext values remain witness-hidden under the adopted proof backend.}

\paragraph{Session-level soundness accounting.}
\textcolor{blue}{Let $\epsilon_{\mathrm{loc}}(\lambda_{\mathrm{ZK}})$ denote the maximum soundness error of one opened local proof instance in the designated-verifier lattice proof backend, instantiated from linear-only vector encryption over rank-2 module lattices under the Module-LWE/Module-SIS-type assumptions stated in~\cite{supp:ishai2021dvzksnark}, and let $\epsilon_{\mathcal{H}}(\lambda_{\mathcal{H}})$ denote the collision advantage against the hash function used for commitments and descriptor roots. For a query session with $M=\lceil N/(nd)\rceil$ gallery blocks and $K_m+1$ verified local instances per block (with $K_m=n_qn_{\mathrm{chunk}}(2L-1)$ counted arithmetic/output instances and one block-binding instance), any accepted false session implies either at least one false local instance verifies or a hash binding collision occurs. Therefore}
\begin{equation}
\label{supp:eq:supp-session-soundness}
\textcolor{blue}{
\epsilon_{\mathrm{sess}}
\le
\sum_{m=1}^{M}(K_m+1)\,\epsilon_{\mathrm{loc}}
+
\epsilon_{\mathcal{H}}
\le
M (K_{\max}+1)\epsilon_{\mathrm{loc}}+\epsilon_{\mathcal{H}}.
}
\end{equation}
\textcolor{blue}{The deterministic parser contributes no probabilistic error under correct implementation. Thus a target $\lambda_{\mathrm{sess}}$-bit session soundness level requires the local proof target to include the union-bound loss, e.g., $\lambda_{\mathrm{loc}}\ge \lambda_{\mathrm{sess}}+\lceil\log_2(M(K_{\max}+1))\rceil+s$ for slack $s$. This bound is the session-level statement promoted to Corollary~1 in the main text. Since all blocks of a session share the same instance count here, $K_{\max}=K_m$. Table~\ref{supp:tab:supp_soundness_accounting} tabulates the dominant counted term $MK_m$; adding the $M$ block-binding instances does not change the rounded slack $\lceil\log_2(\cdot)\rceil$ in any reported row, so the tabulated local-soundness targets are unchanged. To keep the accounting conservative, the table uses the full four-limb count $n_q=4$, which upper-bounds the instance total of the implemented $n_q=3$ prototype used for the footprint tables; over-counting instances can only loosen the union bound, so the resulting local-soundness target is an upper bound for the prototype as well. The four-limb parameter sets discussed in the main text give the following representative accounting.}

\begin{table*}[!t]
\centering
{\color{blue}
\caption{Representative session-level soundness accounting for four active RNS proof channels ($n_q=4$), $d=8192$, and slack $s=16$ bits.}
\label{supp:tab:supp_soundness_accounting}
\scriptsize
\setlength{\tabcolsep}{4.0pt}
\renewcommand{\arraystretch}{1.04}
\begin{tabular}{lrrrrrrr}
\toprule
\textbf{Profile} & \textbf{$N$} & \textbf{$n$} & \textbf{$d_{\mathrm{ZK}}$} & \textbf{$M$} & \textbf{$K_m$} & \textbf{Counted $MK_m$} & \textbf{$128+\lceil\log_2(M(K_m+1))\rceil+s$} \\
\midrule
FaceNet $n=1$ & 100{,}000 & 1 & 512 & 13 & 16{,}320 & 212{,}160 & 162 \\
FaceNet $n=3$ & 100{,}000 & 3 & 512 & 5 & 16{,}320 & 81{,}600 & 161 \\
MobileFaceNet $n=1$ & 100{,}000 & 1 & 512 & 13 & 65{,}472 & 851{,}136 & 164 \\
MobileFaceNet $n=2$ & 100{,}000 & 2 & 512 & 7 & 65{,}472 & 458{,}304 & 163 \\
FaceNet $n=3$ & 1{,}000{,}000 & 3 & 128 & 41 & 65{,}280 & 2{,}676{,}480 & 166 \\
MobileFaceNet $n=2$ & 1{,}000{,}000 & 2 & 128 & 62 & 261{,}888 & 16{,}237{,}056 & 168 \\
MobileFaceNet $n=1$ & 1{,}000{,}000 & 1 & 128 & 123 & 261{,}888 & 32{,}212{,}224 & 169 \\
\bottomrule
\end{tabular}
}
\end{table*}

\paragraph{Extension path for multi-layer traces.}
\label{supp:supp:pvsc_multilayer}
\textcolor{blue}{The concrete descriptor language above is intentionally specialized to the depth-1 SCMV similarity trace. A multi-layer BGV computation would require descriptors of the form $\delta=(\mathsf{op},\ell,\iota,\eta_{\mathrm{in}},\eta_{\mathrm{out}},\mathsf{lev},\mathsf{ek})$, where $\mathsf{op}$ identifies the operation family, $\ell$ is the circuit layer, $\iota$ is the local RNS/NTT/chunk index, $\eta_{\mathrm{in}}$ and $\eta_{\mathrm{out}}$ bind the consumed and produced ciphertext states, $\mathsf{lev}$ records the ciphertext level and modulus-chain position, and $\mathsf{ek}$ identifies the evaluation key when a key-dependent operation is used. Additions and multiplications can reuse the current local templates with layer/state tags. Rotations would add automorphism and key-switch consistency relations; modulus switching would add exact level-transition, scaling, and rounding relations; and bootstrapping would either need to be expanded into its own verified subtrace or represented by a subtrace digest whose internal proof is checked separately. The deterministic parser would have to verify topological order, single-use or allowed reuse of intermediate states, level compatibility, and complete output coverage before invoking the local verifiers.}

\paragraph{Proof Footprint and Contract-Side Enforcement Details.}
\label{supp:supp:footprint_contract}

\textcolor{blue}{This subsection provides the detailed implementation and footprint rows used in the experimental discussion. For one block, PVSC instantiates $K_m=n_qn_{\mathrm{chunk}}(2L-1)$ counted arithmetic/output proof instances plus one block-binding instance, where $n_{\mathrm{chunk}}=\lceil d/d_{\mathrm{ZK}}\rceil$. Table~\ref{supp:tab:supp_pvsc_block_footprint} reports the per-block proof footprint of the implemented prototype, while Table~\ref{supp:tab:supp_pvsc_session_footprint} aggregates the corresponding opened-proof and communication footprints at the session level. Both tables use the implemented prototype with $n_q=3$ active proof channels; this differs from the conservative $n_q=4$ count used for the session soundness accounting above, which upper-bounds the instance total. Because the counted proof-instance term and the per-block proof-object footprint are linear in $n_q$, a full four-limb ($n_q=4$) deployment footprint scales approximately by the factor $4/3$ from the measured three-channel rows; we report the measured three-channel prototype to avoid presenting extrapolated bytes as measured. The proof-footprint rows are struct-size footprints from generated proof objects whose proof generation and verification passed; session rows multiply the counted arithmetic/output term and the per-block byte quantities by $M=\lceil N/(nd)\rceil$. Because the current backend does not expose a canonical binary serialization API for all proof objects, the byte counts are computed from the generated proof object's field and ring dimensions rather than from a deployment-specific wire format, and should be read as struct-size estimates rather than on-wire measurements.}

\begin{table*}[!t]
\centering
{\color{blue}
\caption{Per-block PVSC proof footprint, measured on the implemented three-channel prototype ($n_q=3$); these are measured prototype struct sizes, not four-limb ($n_q=4$) measured bytes. All rows use $d=8192$.}
\label{supp:tab:supp_pvsc_block_footprint}
\scriptsize
\setlength{\tabcolsep}{4.0pt}
\renewcommand{\arraystretch}{1.03}
\begin{tabular}{llrrrrrr}
\toprule
\textbf{Profile} & \textbf{$L$} & \textbf{$d_{\mathrm{ZK}}$} & \textbf{$n_{\mathrm{chunk}}$} & \textbf{$K_m$} & \textbf{Enc./block} & \textbf{Opened/block} & \textbf{Public input/block} \\
\midrule
FaceNet $n=1$ & 128 & 128 & 64 & 48{,}960 & 2.61~GiB & 20.17~MiB & 167.34~MiB \\
FaceNet $n=1$ & 128 & 256 & 32 & 24{,}480 & 1.30~GiB & 10.09~MiB & 167.34~MiB \\
FaceNet $n=1$ & 128 & 512 & 16 & 12{,}240 & 668.07~MiB & 5.04~MiB & 167.34~MiB \\
FaceNet $n=2$ & 128 & 128 & 64 & 48{,}960 & 3.50~GiB & 32.50~MiB & 251.02~MiB \\
FaceNet $n=2$ & 128 & 256 & 32 & 24{,}480 & 1.75~GiB & 16.25~MiB & 251.02~MiB \\
FaceNet $n=2$ & 128 & 512 & 16 & 12{,}240 & 896.67~MiB & 8.12~MiB & 251.02~MiB \\
FaceNet $n=3$ & 128 & 128 & 64 & 48{,}960 & 3.50~GiB & 32.50~MiB & 251.02~MiB \\
FaceNet $n=3$ & 128 & 256 & 32 & 24{,}480 & 1.75~GiB & 16.25~MiB & 251.02~MiB \\
FaceNet $n=3$ & 128 & 512 & 16 & 12{,}240 & 896.67~MiB & 8.12~MiB & 251.02~MiB \\
MobileFaceNet $n=1$ & 512 & 128 & 64 & 196{,}416 & 12.88~GiB & 77.92~MiB & 671.34~MiB \\
MobileFaceNet $n=1$ & 512 & 256 & 32 & 98{,}208 & 6.44~GiB & 38.96~MiB & 671.34~MiB \\
MobileFaceNet $n=1$ & 512 & 512 & 16 & 49{,}104 & 3.22~GiB & 19.48~MiB & 671.34~MiB \\
MobileFaceNet $n=2$ & 512 & 128 & 64 & 196{,}416 & 14.05~GiB & 130.37~MiB & 1007.02~MiB \\
MobileFaceNet $n=2$ & 512 & 256 & 32 & 98{,}208 & 7.03~GiB & 65.19~MiB & 1007.02~MiB \\
MobileFaceNet $n=2$ & 512 & 512 & 16 & 49{,}104 & 3.51~GiB & 32.59~MiB & 1007.02~MiB \\
\bottomrule
\end{tabular}
}
\end{table*}

\begin{table*}[!t]
\centering
{\color{blue}
\caption{Session-level PVSC footprint at $d_{\mathrm{ZK}}=512$ and threshold $t=3$, on the implemented three-channel prototype ($n_q=3$); measured prototype struct sizes, not four-limb measured bytes.}
\label{supp:tab:supp_pvsc_session_footprint}
\scriptsize
\setlength{\tabcolsep}{5.0pt}
\renewcommand{\arraystretch}{1.04}
\begin{tabular}{lrrrrrr}
\toprule
\textbf{Profile} & \textbf{$N$} & \textbf{$n$} & \textbf{$M$} & \textbf{Counted $MK_m$} & \textbf{Opened/session} & \textbf{Opening comm.} \\
\midrule
FaceNet $n=1$ & 100{,}000 & 1 & 13 & 159{,}120 & 65.56~MiB & 18.21~MiB \\
FaceNet $n=1$ & 1{,}000{,}000 & 1 & 123 & 1{,}505{,}520 & 620.26~MiB & 172.29~MiB \\
FaceNet $n=2$ & 100{,}000 & 2 & 7 & 85{,}680 & 56.87~MiB & 9.81~MiB \\
FaceNet $n=2$ & 1{,}000{,}000 & 2 & 62 & 758{,}880 & 503.71~MiB & 86.85~MiB \\
FaceNet $n=3$ & 100{,}000 & 3 & 5 & 61{,}200 & 40.62~MiB & 7.00~MiB \\
FaceNet $n=3$ & 1{,}000{,}000 & 3 & 41 & 501{,}840 & 333.10~MiB & 57.43~MiB \\
MobileFaceNet $n=1$ & 100{,}000 & 1 & 13 & 638{,}352 & 253.25~MiB & 73.05~MiB \\
MobileFaceNet $n=1$ & 1{,}000{,}000 & 1 & 123 & 6{,}039{,}792 & 2.34~GiB & 691.20~MiB \\
MobileFaceNet $n=2$ & 100{,}000 & 2 & 7 & 343{,}728 & 228.15~MiB & 39.34~MiB \\
MobileFaceNet $n=2$ & 1{,}000{,}000 & 2 & 62 & 3{,}044{,}448 & 1.97~GiB & 348.41~MiB \\
\bottomrule
\end{tabular}
}
\end{table*}

\textcolor{blue}{We separately measure the smart-contract surface that enforces public session state. Table~\ref{supp:tab:supp_contract_cost} reports the resulting gas consumption, calldata size, storage writes, and execution latency for the measured contract functions. The prototype uses Solidity~0.8.24, Hardhat~3.7.0, optimizer 200 runs, and the Hardhat EDR simulated L1 backend. The contract checks session binding, query/gallery commitments, opened-proof headers, duplicate block indices, coverage, threshold confirmations, finalization, and the opened-QAP verifier kernels used by $\mathsf{PVSC.Verify}$. These measurements cover contract-side enforcement and verifier-kernel execution for opened local proof instances. They do not treat all uncompressed PVSC proof payloads as a gas-optimized full-batch L1 submission; instead, full-batch size and scaling are accounted for in the proof-footprint and session-scaling results. At the session level, total enforcement gas scales mainly with $M$: for example, FaceNet with $n=3$ costs 799{,}768 gas at $N=100$k and 4{,}185{,}244 gas at $N=1$M, while MobileFaceNet with $n=2$ costs 987{,}850 gas at $N=100$k and 6{,}160{,}105 gas at $N=1$M.}

\begin{table*}[!t]
\centering
{\color{blue}
\caption{Contract-side enforcement cost measured from transaction receipts.}
\label{supp:tab:supp_contract_cost}
\scriptsize
\setlength{\tabcolsep}{6.0pt}
\renewcommand{\arraystretch}{1.05}
\begin{tabular}{lrrrr}
\toprule
\textbf{Function} & \textbf{Gas} & \textbf{Calldata bytes} & \textbf{Storage writes} & \textbf{Latency (ms)} \\
\midrule
\texttt{createSession} & 114{,}799 & 164 & 6 & 1.039 \\
\texttt{submitOpenedProofHeader} & 54{,}037 & 164 & 2 & 0.852 \\
\texttt{submitCommitteeConfirmation:first} & 74{,}225 & 68 & 2 & 1.387 \\
\texttt{submitCommitteeConfirmation:additional} & 54{,}445 & 68 & 2 & 0.679 \\
\texttt{finalizeSession} & 31{,}649 & 68 & 1 & 0.450 \\
\texttt{verifyQAP:q14} & 50{,}980 & 2{,}596 & 0 & 0.696 \\
\texttt{verifyQAP:q15} & 52{,}820 & 2{,}756 & 0 & 0.662 \\
\texttt{verifyQAP:q21} & 63{,}830 & 3{,}716 & 0 & 0.838 \\
\texttt{submitOpenedProofWithQAP:q21} & 94{,}041 & 3{,}908 & 2 & 1.631 \\
\bottomrule
\end{tabular}
}
\end{table*}

\section{Indicative Cross-Paper Comparison of the Encrypted-Matching Layer}
\label{supp:supp:comparison}

Table~\ref{supp:tab:comparison} presents an indicative cross-paper comparison of the encrypted-matching layer against representative privacy-preserving biometric identification schemes.

\begin{table*}[!t]
	\centering
	\caption{Indicative cross-paper comparison of the encrypted-matching layer with representative privacy-preserving biometric identification schemes. Reported values are taken from the corresponding papers and are not strictly apples-to-apples because datasets, hardware platforms, and cryptographic settings differ.}
	\label{supp:tab:comparison}
	\begin{tabular}{@{}l cc cc c c@{}}
		\toprule
		\multirow{2}{*}{\textbf{Scheme}} & \multicolumn{2}{c}{\textbf{Search Latency (s)}} & \multicolumn{2}{c}{\textbf{Enrollment Latency (s)}} & \multirow{2}{*}{\textbf{Storage}} & \multirow{2}{*}{\textbf{Primitive}} \\
		\cmidrule(lr){2-3} \cmidrule(lr){4-5}
		& Single-core & Multi-core & Single-core & Multi-core & & \\ \midrule
		Bauspiess \textit{et al.} & 1000.0 & 130.0 & 1001.0 & 130.1 & 10.0 GB & BGV \\
		CryptoMask       & 25.0   & 3.3   & 27.0   & 3.4   & 5.1 GB  & BGV \\
		MFBR-ID          & 5.2    & 0.6   & 13132.0& 1644.0& 132.1 GB& BFV \\
		IDFace           & \textbf{2.9}    & \textbf{0.3}   & 880.0  & 109.0 & 4.125 GB& CKKS \\
		\textbf{SCMV}         & 8.2    & 1.3   & \textbf{9.2}    & \textbf{1.4}   & \textbf{1.7 GB}  & BGV \\ \bottomrule
	\end{tabular}
\end{table*}

The reported values are not strictly apples-to-apples because the compared systems use different datasets, hardware platforms, cryptographic parameterizations, and implementation optimizations. For this reason, the main paper does not treat cross-paper latency numbers as a primary quantitative claim. Representative points of reference include BGV-based packed-identification systems such as Bauspie{\ss} \textit{et al.} and CryptoMask~\cite{supp:packing2,supp:CryptoMask}, as well as more recent HE-based identification systems such as MFBR-ID and IDFace~\cite{supp:bassit2025practical,supp:ICCV}.

Within this limitation, SCMV provides strong storage efficiency and a balanced trade-off among search cost, enrollment cost, and encrypted database footprint. In particular, compared with prior BGV-based systems such as Bauspie{\ss} \textit{et al.} and CryptoMask, SCMV reduces both latency and storage overhead; compared with large-footprint BFV-based baselines, it avoids a very large encrypted database footprint. We therefore view SCMV as especially attractive when both encrypted storage efficiency and dynamic database maintenance matter.

\input{supplementary.bbl}

\endgroup

%% file: supplementary.bbl